\newtheorem{thm}{Theorem}
\newtheorem{lem}{Lemma}
\newtheorem{coro}{Corollary}
\newenvironment{proof}{\par\indent{\emph{Proof.}}}{\hfill$\square$\vspace{1.4mm}\par\noindent}
\newcommand{\X}{\mathbf{x}}
\newcommand{\Y}{\mathbf{y}}
\newcommand{\Dn}[2]{\frac{\partial #1}{\partial n_{#2}}}
\newcommand{\Dxi}[2]{\frac{\partial #1}{\partial x_{#2}}}
\newcommand{\Oj}{\mathbf{O}}
\newcommand{\bZ}{\mathbf{z}}
\newcommand{\bW}{\mathbf{w}}
\newcommand{\cD}{\mathcal{D}}
\newcommand{\N}{\boldsymbol{n}}
\numberwithin{equation}{section}
\DeclareMathOperator*{\diag}{diag}
\title{Mesoscale asymptotic approximations to solutions of mixed boundary value problems in perforated domains}
\author{V. Maz'ya\footnote{Department of Mathematical Sciences, University of Liverpool, Liverpool L69 3BX, U.K., and Department of Mathematics, Link\"oping University, SE-581 83 Link\"oping, Sweden.}, A. Movchan\footnote{Department of Mathematical Sciences, University of Liverpool, Liverpool L69 3BX, U.K.}, M. Nieves$^\dagger$}
\date{}
\newcommand\eq[1] {(\ref{#1})}
\newcommand{\bfm}[1]{\mbox{\boldmath ${#1}$}}
\newcommand{\beqa}{\begin{eqnarray}}
\newcommand{\eeqa}{\end{eqnarray}}
\newcommand{\bequ}{\begin{equation}}
\newcommand{\eequ}[1]{\label{#1}\end{equation}}
\newcommand{\Grad}{\nabla}
\newcommand{\ov}[1]{\overline{#1}}
\newcommand{\Ga}{\alpha}
\newcommand{\Gd}{\delta}
\newcommand{\Gve}{\varepsilon}
\newcommand{\Gvf}{\varphi}
\newcommand{\Gl}{\lambda}
\newcommand{\Go}{\omega}
\newcommand{\GD}{\Delta}
\newcommand{\GS}{\Sigma}
\newcommand{\GO}{\Omega}
\newcommand{\GX}{\Xi}
\newcommand{\BGT}{\bfm\Theta}
\newcommand{\BGX}{\bfm\Xi}
\newcommand{\CK}{{\cal K}}
\newcommand{\CL}{{\cal L}}
\newcommand{\CQ}{{\cal Q}}
\newcommand{\CR}{{\cal R}}
\newcommand{\CS}{{\cal S}}
\newcommand{\CV}{{\cal V}}
\newcommand{\BCC}{{\bfm{\cal C}}}
\newcommand{\BCD}{{\bfm{\cal D}}}
\newcommand{\BCN}{{\bfm{\cal N}}}
\newcommand{\BCQ}{{\bfm{\cal Q}}}
\newcommand{\BCS}{{\bfm{\cal S}}}
\def\Bw{{\bf w}}
\def\Bx{{\bf x}}
\def\By{{\bf y}}
\def\Bz{{\bf z}}
\def\BC{{\bf C}}
\def\BO{{\bf O}}
\def\BQ{{\bf Q}}
\def\BX{{\bf X}}
\def\BY{{\bf Y}}
\newcommand{\beq}{\begin{equation}}
\newcommand{\eeq}{\end{equation}}
\newcommand{\overliner}{\begin{eqnarray}}
\newcommand{\earr}{\end{eqnarray}}
\newcommand{\beqn}{\begin{equation*}}
\newcommand{\eeqn}{\end{equation*}}
\newcommand{\overlinern}{\begin{eqnarray*}}
\newcommand{\earrn}{\end{eqnarray*}}
\newcommand{\prt}{\partial}
\newcommand{\fr}{\frac}
\def\l{\label}
\begin{document}
\maketitle
\begin{abstract}
We describe a  method of  
asymptotic approximations to solutions of  mixed boundary value problems for the Laplacian in a 
three-dimensional domain with many perforations of arbitrary shape, with the Neumann boundary conditions being prescribed on the surfaces of small voids.
%Periodicity of the array of voids is not required, but
The only assumption made on the geometry is that  
the diameter of a void is assumed to be smaller compared to the distance to the  
nearest neighbour. 
%We use the method of mesoscale asymptotic approximations established in the  earlier paper \cite{MMMeso}. 
The asymptotic approximation,
obtained here, involves a linear combination of dipole fields constructed for individual voids, with the coefficients, which are determined by solving a linear algebraic system.
We prove the solvability of this system and derive an estimate for its solution. The energy estimate is obtained for the remainder term of the asymptotic approximation.
%The asymptotic approximation of the solution to a mixed boundary value problem in a body with many small cavities is presented. We do not require any periodicity of the array of holes. The approximation is shown to be constructed from a linear combination of model solutions associated the domain with no perforations and the  exterior of the individual voids, where the coefficients involved are the solution of a linear algebraic system. Invertibility of this system is proved under weak geometrical constraints.  The energy estimate for the remainder given by the approximation is also justified
\end{abstract}
\section{Introduction}\label{Introductionmeso}

In the present paper we discuss a method for  asymptotic approximations to solutions of  the mixed problems for the Poisson equation for  domains  containing  a number, possibly large,  of  %many 
small perforations of arbitrary shape. The Dirichlet condition is set on the exterior boundary of the perforated body, and the Neumann conditions are specified  on the boundaries of small holes. 
%No periodicity constraints are imposed on the position of holes, and 
Neither periodicity nor even local ``almost'' periodicity constraints  are imposed on the position of holes,  which makes the homogenization methodologies not applicable (cf.  Chapter 4 in \cite{MKh}, and Chapter 5 in \cite{SP}).  
Two geometrical parameters, $\varepsilon$ and $d$, are 
introduced to characterize the maximum diameter of perforations within the array and the minimum distance between the voids, respectively. Subject to %a %weak 
the 
%only 
{\em mesoscale} constraint, $\varepsilon < \mbox{const} ~d$, the %mesoscale 
asymptotic approximation to the solution of the mixed boundary value problem is constructed. 
The approximate  %approximation 
solution involves a linear combination of dipole fields constructed for individual voids, with the coefficients
 %, which are 
 determined %by solving 
 from a %model 
 linear algebraic system. 
The formal asymptotic 
representation  is accompanied by the energy estimate of the remainder term.
% for the case when the number of inclusions is considered to be a large parameter.   
%No assumptions of periodicity  have been made, and the asymptotic approximations presented here are uniform with respect to the independent variables. 
The general idea of mesoscale approximations originated a couple of years ago in \cite{MMMeso}, where the Dirichlet 
problem was considered for a domain with multiple inclusions.

The asymptotic methods, presented here and in  \cite{MMMeso},  can be applied to modelling of dilute composites in problems of mechanics, electromagnetism, heat conduction and phase transition.  %it is 
%They are efficient, in particular, for the situations when %within a perforated solid 
%the boundary conditions have to be satisfied to the required accuracy across a large array of small voids, and when  the approximations commonly used for a finite number of well-separated small voids are not applicable. Our method  will be referred to as the method of 
%{\em mesoscale asymptotic approximations},
%
%We deal with the case of mesoscale approximations for dilute composites, when on one hand the boundary conditions on the boundaries of perforations are satisfied with the required accuracy across a large array of small voids,  and on the other hand the approximations commonly used for a finite number of well-separated small voids are not applicable for the present configuration.  
 %
%As a motivation for the present study, it
% is appropriate to refer
%and for some simple geometries it can be  linked  to physical models of many point interactions.
% which have a wide range of physical applications. %in problems of mechanics, electrostatics, heat conduction and phase transition. 
In such models, the boundary conditions have to be satisfied across a large array of small voids, which is the situation fully served by our approach. Being %applied to 
used in the case of a dilute array of small spherical particles, the method also includes the physical models of many point interactions treated previously in \cite{Figari, FigPapRub, Honig} and elsewhere.
%
%Needless to say, for dilute configurations the point interactions are readily extendable to interactions between small spherical particles, including those cases 
%In particular, a special attention was given in the physics literature to configurations 
%where the number of particles is large and the dilute material  approximations developed for several well-separated small inclusions become inapplicable. Such type of 
Asymptotic approximations applied to solutions of boundary value problems of mixed type in domains containing many small spherical inclusions were considered in \cite{Figari}. The point interaction approximations to solutions of diffusion problems in domains with many small spherical holes were analysed in \cite{FigPapRub}.   Modelling of multi-particle interaction  in problems of phase transition was considered in \cite{Honig} where the evolution of a large number of small 
spherical particles embedded into an ambient medium takes place during the last stage of phase transformation; such a phenomenon where particles in a melt are subjected to growth is referred to as Ostwald ripening. For the numerical treatment of models involving large number $N$ of spherical particles, the fast multipole method, of order $O(N)$,  was proposed    in \cite{Greengard_Rokhlin}, and it appears to be efficient 
 for the rapid evaluation of potential and force fields for systems of a large number of particles  interacting with each other via the Coulomb law.

 We give an outline of the paper. %main result. 
 The notation $\Omega_N$ will be used for a %bounded 
 domain containing %many 
 small voids $F^{(j)}$, $j=1, \dots, N$, while the %limit
 unperturbed   
 domain, without any holes, is denoted by $\GO$.  The number $N$ is assumed to be large.
 % where $\varepsilon$ is a small positive parameter characterising the normalized diameter of the void. 
 
 If $\GO$ is a bounded domain in ${\Bbb R}^3$, we introduce $L^{1,2}(\GO)$ as the space of functions on $\GO$ with
 distributional first derivatives in $L^2(\GO)$ provided with the norm
 \beq
 \|u\|_{L^{1,2}(\GO)} = \Big(   \|\Grad u \|^2_{L^2(\GO)}  + \| u \|^2_{L^2(\GO)}   \Big)^{1/2}.
 \eequ{norm}
 Here $B$ is a ball at a positive distance from $\prt \GO.$ If $\GO$ is unbounded, by $L^{1,2}(\GO)$ we mean the completion of the space of functions with $\| \Grad u \|_{L^2(\GO)} < \infty,$ which have bounded supports, in the norm \eq{norm}. The space of traces of functions in $L^{1,2}(\GO)$ on $\prt \GO$ will be denoted by $L^{{\tiny 1/2},2}(\prt \GO).$
 
 The maximum of diameters of $F^{(j)}, ~ j=1,\ldots, N$ is denoted by $\Gve$. 
 An array of points $\BO^{(j)}, ~j = 1, \ldots, N, $ is chosen in such a way that
 $\BO^{(j)}$ is an interior point of $F^{(j)}$ for every $j=1,\ldots, N$.
 By $2d$ we denote  the smallest distance between  the points within the array $\{\Oj^{(j)}\}_{j=1}^N$. It is assumed that there exists an open set $\Go \subset \GO$ situated at a positive distance from $\prt \GO$ and such that
 \beq
 \cup_{j=1}^N F^{(j)} \subset \Go, % ~ \mbox{dist} (\prt \Go, \prt \GO) \geq 2h, 
~ \mbox{dist} \big(   \cup_{j=1}^N F^{(j)}, \prt \Go \big) \geq 2d,   ~\mbox{and} ~ \mbox{diam  } \Go = 1,
 \eequ{intro_1}
% where $h>0$ is a constant independent of $\Gve$ and $d$.
%With such a 
With the last normalization of the size of $\Go$, the parameters $\Gve$ and $d$ 
%are 
can be considered as %small
 non-dimensional.
%parameters.  
The scaled %voids 
open sets $\Gve^{-1} F^{(j)}$ are assumed to have %smooth 
%uniformly 
Lipschitz boundaries, with Lipschitz characters independent of $N$. 

Our goal is to obtain an asymptotic approximation to %the 
a unique solution $u_N \in L^{1,2}(\GO_N)$ of the problem
\begin{eqnarray}
-\Delta u_N(\X)=f(\X)\;, \quad \X \in \Omega_N\;,\l{intro_2} \\
u_N(\X)= %0
\phi(\Bx)\;, \quad \X \in \partial \Omega\;,\l{intro_3} \\
 \Dn{u_N}{}(\X)=0\;, \quad \X \in \partial F^{(j)} \;, j=1, \dots, N\;,\l{intro_4}
\end{eqnarray}
where $\phi \in L^{1/2, 2}(\prt \GO)$ and $f (\Bx)$ is a %bounded 
function in $L^\infty(\GO)$ with %a 
compact support %outside 
at a positive distance from the cloud 
$\omega$ of small perforations. %\in C^{0,1}(\ov{\GO}).$

We need solutions to certain model problems in order to construct  the approximation to $u_N$; these include  
\begin{enumerate}
\item $v$ %_f$ 
as the solution of the unperturbed problem in $\Omega$ (without voids),

\item $\BCD^{(k)}$ as the vector function whose components are the dipole fields for the void $F^{(k)}$,

\item $H$ as the regular part of Green's function $G$ in $\Omega$.
\end{enumerate}

%As in \cite{MMMeso}, this 
The approximation relies upon a certain algebraic system, incorporating the field $v_f$ and  integral characteristics associated with the small voids. We define
\[\BGT=\left(\Dxi{v}{1}           %_f}
(\Oj^{(1)}), \Dxi{v}{2}  %_f}
(\Oj^{(1)}),\Dxi{v}{3}  %_f}
(\Oj^{(1)}), \dots,\Dxi{v}{1}  %_f}
(\Oj^{(N)}), \Dxi{v}{2}  %_f}
(\Oj^{(N)}),\Dxi{v}{3}%_f}
(\Oj^{(N)})\right)^T\;,\]
and $%\mathbf{S}
{\frak S}=[\mathfrak{S}_{ij}]_{i, j=1}^N$ %and $\mathbf{P}=[\mathfrak{P}_{ij}]_{i, j=1}^N$, which are 
which is a $3N \times 3N$ matrix  with $3\times 3$ block entries %$\mathfrak{S}_{ij}$ and $\mathfrak{P}_{ij}$:
\[\mathfrak{S}_{ij}=\left\{\begin{array}{ll}\displaystyle{(\nabla_{\bZ} \otimes \nabla_{\bW})\left(G(\Bz, \Bw) %\frac{1}{4\pi|\bZ-\bW|}
\right)\Big|_{\substack{\bZ=\Oj^{(i)}\\ \bW=\Oj^{(j)}}}} & \quad \text{ if } i \ne j\\
0I_3&\quad \text{otherwise}\end{array}  \right., \]
%\[ \mathbf{P}=\diag\{\cP^{(1)}, \dots, \cP^{(N)}\} \]
where $G$ is Green's function in $\GO$, and $I_3$ is the $3 \times 3$ identity matrix.
We also use the block-diagonal matrix
 \beq \mathbf{Q}=\diag\{\BCQ^{(1)}, \dots, \BCQ^{(N)}\}, \eequ{matrix_Q}
where $\BCQ^{(k)}$ is the so-called  $3\times 3$ polarization matrix for the small void $F^{(k)}$ (see 
\cite{MM_Sob} and Appendix G of \cite{MMP}). 
The shapes of the voids $F^{(j)}, j = 1,\ldots, N,$ are constrained in such a way that the maximal and minimal eigenvalues $\Gl_{max}^{(j)}, ~\Gl_{min}^{(j)}$ of the matrices $-\BCQ^{(j)}$ satisfy the inequalities
\beq
A_1 \Gve^3 > \max_{1 \leq j \leq N} \Gl_{max}^{(j)}, ~~ % \geq 
\min_{1 \leq j \leq N} \Gl_{min}^{(j)} > A_2 \Gve^3, % > 0,
\eequ{Gl_min} 
where %the constants 
$A_1$ and  $A_2$ are positive and independent of $\Gve$.

%As the main result of the paper, we prove
One of the results, for the case when $\GO={\Bbb R}^3$, $H\equiv 0$, and when \eq{intro_3} is replaced by the condition of  decay 
of $u_N$ at infinity, 
can be formulated as follows

\vspace{0.1in}
%\begin{thrm}
\begin{thm}
 \label{thm1_alg_f_inf} 
Let  %the parameters $\varepsilon$ and $d$ satisfy the inequality
\[\varepsilon < c\, d\;,\]
where $c$ is a sufficiently small absolute constant. Then the solution $u_N(\X)$ 
admits the asymptotic representation
%is defined by the asymptotic formula
\begin{equation}\label{introeq1}
u_N(\X)=v%_f
(\X)+%\varepsilon 
\sum^N_{k=1} \boldsymbol{C}^{(k)} \cdot %\left\{ 
\BCD^{(k)}(\Bx) %(\varepsilon^{-1}(\X-\Oj^{(k)}))+\varepsilon^2 \cP
%-%+
%\BCQ^{(k)} \nabla_\Y H(\X, \Oj^{(k)})  \right\}
+\CR_N(\X)\;,
\end{equation}
where $\boldsymbol{C}^{(k)}=(C^{(k)}_1, C^{(k)}_2, C^{(k)}_3)^T$ and  the column vector $\mathbf{C}=(C^{(1)}_1, C^{(1)}_2, C^{(1)}_3, \dots, C^{(N)}_1, C^{(N)}_2, C^{(N)}_3)^T$ %solves 
satisfies the invertible linear
algebraic system 
\beq
(\mathbf{I}+ %\varepsilon^3 
%\mathbf{S}
{\frak S} \mathbf{Q})\mathbf{C}  = -\BGT\;.
\eequ{alg_s_intro}
The 
%matrix $(\mathbf{I}-%\varepsilon^3 
%%\mathbf{S}
%{\frak S} \mathbf{Q})$ is invertible and the 
remainder $\CR_N$ satisfies the energy
estimate
\begin{equation}\label{introeq2}
 \| \nabla \CR_N\|^2_{L_2(\Omega_N)} \le \text{\emph{const} } \Big\{ \varepsilon^{11%8
 }d^{-11%8
 } + \varepsilon^{5}d^{-3} \Big\} \| \nabla v%_f
 \|^2_{L^2( \GO % {\Bbb R}^3  %\omega
 )}  %+ d^2  \|  v_f\|^2_{L_2(\omega_h)}  
 %\Big\}
  .%+   %\varepsilon
 %\,d^{2}\max_{1 \le i, j \le 3}\Big\| \frac{\partial^2 v_f}{\partial x_i \partial x_j} \Big\|^2_{L_\infty(\omega)}\Big\}\;.
 \end{equation}
 %where $\Go_h$ stands for the extended cloud region occupied by small voids.
%\end{thrm}
\end{thm}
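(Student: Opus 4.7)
The strategy is to substitute the ansatz \eq{introeq1} into the system \eq{intro_2}--\eq{intro_4} (with the decay condition at infinity in place of \eq{intro_3}) and split the task into three parts: derivation of the algebraic system \eq{alg_s_intro} from matching at each void, invertibility of $\mathbf{I}+{\frak S}\mathbf{Q}$ together with a bound $\|\mathbf{C}\|\leq C\|\BGT\|$, and the energy estimate \eq{introeq2}. Since $v$ solves the unperturbed problem with decay at infinity and each $\BCD^{(k)}$ is harmonic outside $F^{(k)}$ with decay at infinity, the remainder $\CR_N$ is automatically harmonic in $\GO_N$ with decay at infinity, and the only object to control is its Neumann data on each $\prt F^{(j)}$.

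To derive \eq{alg_s_intro}, I would expand $v$ by Taylor about $\Oj^{(j)}$ to first order on a neighbourhood of $\prt F^{(j)}$, and expand each far-field dipole $\BCD^{(k)}$ with $k\neq j$ about $\Oj^{(j)}$ through linear order. The defining property of $\BCD^{(j)}$, paired with the polarization tensor $\BCQ^{(j)}$, is that the normal derivative of any affine field on $\prt F^{(j)}$ is cancelled exactly. Matching the linear-in-$\X$ parts at each $\Oj^{(j)}$ for $j=1,\dots,N$ produces precisely \eq{alg_s_intro}, with off-diagonal blocks $\mathfrak{S}_{ij}$ arising from the Hessian of the Green's kernel at $(\Oj^{(i)},\Oj^{(j)})$, of magnitude $O(|\Oj^{(i)}-\Oj^{(j)}|^{-3})$. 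All higher-order contributions on each $\prt F^{(j)}$ are absorbed into the Neumann data for $\CR_N$.

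Invertibility of $\mathbf{I}+{\frak S}\mathbf{Q}$ I would prove via a Neumann series, estimating the row-sum norm of ${\frak S}\mathbf{Q}$: from \eq{Gl_min}, $\|\BCQ^{(j)}\|\leq A_1\Gve^3$, and hence the norm is bounded by $C\Gve^3\max_i\sum_{j\neq i}|\Oj^{(i)}-\Oj^{(j)}|^{-3}$. The geometric sum is controlled via a dyadic annular argument exploiting the $2d$-separation of the $\Oj^{(j)}$ inside the bounded cloud $\Go$, giving a bound of order $d^{-3}$. The mesoscale constraint $\Gve<cd$ with $c$ sufficiently small then yields $\|{\frak S}\mathbf{Q}\|\leq 1/2$, hence invertibility and $\|\mathbf{C}\|\leq 2\|\BGT\|\leq C\|\nabla v\|_{L^\infty(\Go)}$.

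The main obstacle is the energy estimate \eq{introeq2}. I would multiply $-\Delta\CR_N=0$ by $\CR_N$, integrate by parts in $\GO_N$, use the decay at infinity, and obtain
\[
\|\nabla\CR_N\|_{L^2(\GO_N)}^2=\sum_{j=1}^N\int_{\prt F^{(j)}}\CR_N\,\Dn{\CR_N}{}\,dS.
\]
The Neumann datum on each $\prt F^{(j)}$ splits into two pieces: (i) the quadratic Taylor remainder of $v$, of size $O(\Gve)$ on $\prt F^{(j)}$, whose squared boundary integral is $O(\Gve^5)$ per void and $O(\Gve^5 d^{-3})$ after summing over $N\leq Cd^{-3}$ voids; and (ii) the second-order Taylor remainders of the far-field dipoles $\BCD^{(k)}$, $k\neq j$, of magnitude $O(|\Oj^{(k)}-\Oj^{(j)}|^{-4}\Gve)$, which after multiplication by the coefficients $\mathbf{C}^{(k)}$, Cauchy--Schwarz, and two applications of the packing inequality produce the $\Gve^{11}d^{-11}$ contribution once $\|\mathbf{C}\|\leq C\|\BGT\|$ is invoked. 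A scaled trace inequality on each small void converts $\|\CR_N\|_{L^2(\prt F^{(j)})}$ back into $\|\nabla\CR_N\|_{L^2(\GO_N)}$ with a prefactor that, under the mesoscale constraint, can be absorbed into the left-hand side. The delicate step is the book-keeping: ensuring that each residual carries the sharp $\Gve$ and $d$ powers and that the coupling through $\mathbf{C}$ does not inflate the estimate beyond \eq{introeq2}.
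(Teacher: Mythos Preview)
Your derivation of the algebraic system is correct and matches the paper. The substantive gap is in the invertibility step. The dyadic annular argument you invoke for the row--sum $\max_i\sum_{j\neq i}|\Oj^{(i)}-\Oj^{(j)}|^{-3}$ does \emph{not} give $O(d^{-3})$: in three dimensions the kernel $|\cdot|^{-3}$ is exactly borderline, and for $2d$--separated points filling a region of diameter~$1$ each dyadic shell contributes $\sim d^{-3}$, so the sum is $\sim d^{-3}\log(1/d)$. Consequently $\|{\frak S}\mathbf{Q}\|\le C(\varepsilon/d)^3\log(1/d)$, and under the hypothesis $\varepsilon<c\,d$ with $c$ an \emph{absolute} constant this quantity is unbounded as $d\to 0$; the Neumann series does not converge. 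The paper avoids this logarithm by a genuinely different argument (its Lemma~\ref{lem1_alg} and Theorem~\ref{thm1_alg}): it estimates the quadratic form $\langle{\frak S}\mathbf{Q}\mathbf{C},\mathbf{Q}\mathbf{C}\rangle$ via an integral representation as a Newton potential acting on $\nabla\cdot\BGX$, whose positivity yields $|\langle{\frak S}\BQ\BC,\BQ\BC\rangle|\le\text{const}\,d^{-3}\sum_j|\BCQ^{(j)}\BC^{(j)}|^2$ without a logarithm. This symmetric--form bound, not a row--sum bound, is what makes the mesoscale condition $\varepsilon<c\,d$ suffice.

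Your energy argument is closer in spirit to the paper but is also incomplete as stated. A scaled trace inequality alone cannot convert $\|\CR_N\|_{L^2(\partial F^{(j)})}$ into $\|\nabla\CR_N\|_{L^2(\Omega_N)}$, because it produces an uncontrolled $\varepsilon^{-1/2}\|\CR_N\|_{L^2}$ term. What saves this is the zero--flux identity $\int_{\partial F^{(j)}}\partial_n\CR_N\,dS=0$ (following from harmonicity of $v$ and the $\BCD^{(k)}$ near $F^{(j)}$), which lets you replace $\CR_N$ by $\CR_N-\overline{\CR}^{(j)}$ before applying trace plus Poincar\'e on $B_{3\varepsilon}^{(j)}$; you do not mention this step. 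The paper organises the same idea through auxiliary functions $\Psi_k$ and cutoffs $\chi_\varepsilon^{(k)}$, arriving at $\|\nabla\CR_N\|_{L^2(\Omega_N)}^2\le\text{const}\sum_k\|\nabla\Psi_k\|_{L^2(B_{3\varepsilon}^{(k)})}^2$ and then splitting $\nabla\Psi_k$ into the Taylor remainder of $v$ (giving $\varepsilon^5 d^{-3}$) and the dipole far--field remainders (giving $\varepsilon^{11}d^{-11}$ after Cauchy--Schwarz and the bound $\sum_j|\BC^{(j)}|^2\le\text{const}\,d^{-3}\|\nabla v\|_{L^2(\omega)}^2$ from Corollary~\ref{coro1_alg}). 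Your boundary--integral route can be made to work once the zero--flux subtraction is inserted, but the invertibility argument needs the quadratic--form estimate, not a Neumann series.
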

We remark that since $\Gve$ and $d$ are non-dimensional parameters,  there is no dimensional mismatch in the right-hand side of \eq{introeq2}.

We now describe the plan of the article. In Section \ref{MainNotmeso}, we introduce the multiply-perforated geometry and
% the corresponding notations.
% whose solution we will derive the asymptotic approximation. 
%The 
consider the above model problems.
%, whose solutions are used in the asymptotic approximation. %, are defined in Section  \ref{Modelproblems}.
%After specifying model solutions used in the construction of this approximation in Section \ref{Modelproblems}, our objective is formally carried out initially for the simpler case of an unbounded domain with many small voids in Section \ref{R3alg}.
% with the aim of illustrating features of 
The formal asymptotic algorithm for a cloud of small perforations in the infinite space and the analysis of the algebraic system
\eq{alg_s_intro} are given in Sections \ref{R3alg} and \ref{alg_syst}. Section \ref{R3energyesy} presents the 
proof of Theorem \ref{thm1_alg_f_inf}. %  remainder estimate.
The problem for a cloud of small perforations in a general 
% bounded 
domain is considered in Section \ref{boundformalg}. 
%The asymptotic approximation presented there remains valid for an arbitrary three-dimensional perforated domain.  
 Finally, in Section \ref{example} we give an illustrative example accompanied by the numerical simulation.

 %This asymptotic algorithm 
% will be extended in subsequent sections for situation of the bounded domain. Justification of the asymptotics presented in Section \ref{R3alg} is then given in Section \ref{R3energyesy}. We consider the formal asymptotic approximation of a solution to the mixed problem in a bounded domain with many voids in Section \ref{boundformalg}, and using the techniques implemented in Section \ref{R3energyesy}, we provide the energy estimate for the remainder produced in this approximation.
 % in Section \ref{boundenergyestRN}.

\section{Main notations and model boundary value problems}\label{MainNotmeso}
%Here 
Let  $\Omega$ be a  bounded 
domain in $\mathbb{R}^3$ with a smooth boundary $\prt \GO$. We shall also consider the case when $\GO = {\Bbb R}^3 .$  

%If it is bounded, we assume that  the boundary $\partial \Omega$ is smooth.  %and compact closure $\bar{\Omega}$. We denote by $\omega^{(j)}$ a subset of $\mathbb{R}^3$, with smooth boundary $\partial \omega^{(j)}$ and compact closure $\overline{\omega^{(j)}}$, whose complement in the infinite space is $C\omega^{(j)}=\mathbb{R}^3\backslash \overline{\omega^{(j)}}$, $j=1, \dots, N$. The sets $\omega^{(j)}$, $j=1, \dots, N$, are assumed to contain the origin $\Oj$, and the diameter of each of these sets is 1.
%Let $\omega_\varepsilon^{(j)}$ be a subset of $\Omega$ with centre $\Oj^{(j)}$, $1 \le j \le N$. We relate the domain $\omega^{(j)}_\varepsilon$ to  $\omega^{(j)}$ via $\omega_\varepsilon^{(j)}=\{\X:\varepsilon^{-1} (\X-\Oj^{(j)}) \in \omega^{(j)}\}$, $j=1, \dots, N$. Here, $\varepsilon$ is one of the small positive parameters used in the present article and this characterizes the normalized diameter of the void $\omega_\varepsilon^{(j)}$, $j=1, \dots, N$, the other parameter being 
%\[ d=2^{-1} \min_{i \ne j, 1 \le i, j\le N}|\Oj^{(j)}-\Oj^{(i)}|\;,\]
% and we assume  that $\varepsilon < c\, d$, with $c$ being a sufficiently small constant. The number $N$ of voids is assumed to be large.
%
% We introduce the domain $\omega\subset \Omega$, whose diameter is equal to 1, and such that 
 %\[ \bigcup_{j=1}^N \omega_\varepsilon^{(j)} \subset \omega\;.\] 
% Additional assumptions also made on $\omega$ are 
 %\[ \text{dist}(\partial \omega, \partial \Omega) \ge 2 d\;, \quad \text{ and }\quad \text{dist} \big\{ \bigcup_{j=1}^N \omega_\varepsilon^{(j)}, \partial \omega\big\}\ge 2d\;.\]
 The perforated domain $\Omega_N$, % in which we aim to give an asymptotic approximation of $u_N$ 
 is given by
 \[\Omega_N=\Omega \backslash \overline{\cup_{j=1}^N F^{(j)}}\;,\]
 where  $F^{(j)}$ are small voids introduced in the previous section. Also in the previous section we introduced the notations $\Gve$ and $d$ for two small parameters, characterizing the maximum of the diameters of
 $F^{(j)}, j=1,\ldots,N,$ and the minimal distance between the small voids, respectively.  
% In subsequent sections, along with $\X$ we associate the scaled variable $\XI_j=\varepsilon^{-1}(\X-\Oj^{(j)})$. 
%The notation $g=O(f)$ represents the inequality $g\le \text{const }|f|$. 
 
 In sections where we are concerned with the energy estimates of the remainders produced by asymptotic approximations we %will 
 frequently use the obvious estimate
 \begin{equation}\label{sumestvd}
 N \le \text{const }d^{-3}\;.
 %\sum_{k=1}^N 1 \le \text{const }d^{-3} \text{meas}_3(\omega) \le \text{const }d^{-3}\;,  
 \end{equation}
% where $\text{meas}_3(\omega)$ is the three-dimensional measure of $\omega$ and this estimate is deduced by considering the sum on the left hand side as a Riemann integral over the cloud of inclusions $\omega$. 
 
 We %dedicate our attention
 % in the first part of this article,
  %to 
  consider the approximation of  the function 
  $u_N$ which is a variational solution of the mixed problem \eq{intro_2}-\eq{intro_4}.
 %\begin{equation}\label{Mainnot1}
 %-\Delta u_N(\X)=f(\X)\;, \quad \X \in \Omega_N\;,
 %\end{equation}
 %\begin{equation}\label{Mainnot2}
 % u_N(\X)=0\;, \quad \X \in \partial \Omega\;,
  %\end{equation}
  %\begin{equation}\label{Mainnot3}
   %\Dn{u_N}{}(\X)=0\;, \quad \X \in \partial \omega^{(j)}_\varepsilon\;, j=1, \dots, N\;.
   %\end{equation}
  % where $f$ is assumed to be a %smooth 
  % bounded 
   %function with a compact support in $\Omega \setminus \ov{\Go}$.
   %, such that $\text{diam }(\text{supp }f) \le C$, with $C$ being an absolute constant and $\text{supp }f \cap \omega=\varnothing$. 
   %It is also assumed that $\mbox{dist } (\Go, \mbox{supp } f) > h > 0,$ where the constant $h$ is independent of $\Gve$ and $d$.  
%   The notation $\Go_h$ will be used for a bounded domain, which includes $\Go$, and such that
 %  %Here $\partial /\partial n=\N^{(j)} \cdot \nabla $ is the normal derivative, with $\N^{(j)}$ being the unit outward normal to %%$\omega_\varepsilon^{(j)}$, $j=1, \dots, N$.
 %  \beq
 %  \Go_h = \{ \Bx: \mbox{dist }(\omega, \Bx) < h \}.
  % \eequ{om_h}

%\section{Model boundary value problems}
\label{Modelproblems}
Before constructing the approximation to $u_N$, we introduce model auxiliary functions which the asymptotic scheme relies upon.

\begin{enumerate}
\item \emph{%The s
Solution $v$ %_f$ 
in the unperturbed domain $\Omega$.} Let $v%_f
\in L^{1,2}(\GO)$ denote a %the 
unique variational solution of the problem
\begin{eqnarray}
-\Delta v%_f
(\X)&&=f(\X)\;, \quad \X \in \Omega\;,\l{vfom_1} \\
 v%_f
 (\X)&&= \phi(\Bx)
 %0
 \;, \quad \X \in \prt \Omega\;.  \l{vfom_2}   \end{eqnarray}

\item \emph{%The r
Regular part of Green's function in $\Omega$.} By $H$ we mean the regular part of Green's function $G$ in $\Omega$ defined by the formula
\beq H(\X, \Y)=(4\pi|\X-\Y|)^{-1}-G(\X, \Y)\;.\eequ{H_def}
Then $H$ is a variational solution of 
\[ \Delta_\X H(\X, \Y)=0\;, \quad \X, \Y \in \Omega\;,\]
\[ H(\X, \Y)=(4\pi|\X-\Y|)^{-1}\;, \quad\X \in \partial \Omega, \Y \in \Omega\;.\]

\item \emph{The dipole fields $\cD^{(j)}_i, ~ i=1, 2, 3,$ associated with the  void $F^{(j)}$.} The vector functions $\BCD^{(j)}=\{ \cD^{(j)}_i\}^3_{i=1}$, %whose components are 
which are called the dipole fields,  %solve the problems
are variational solutions of the exterior Neumann problems
%is responsible for solving 
\begin{equation}\label{modfieldeq1}
\left. \begin{array}{c}
\displaystyle{\Delta \BCD^{(j)}(\Bx)=\Oj\;, \quad \Bx \in %C
{\Bbb R}^3 \setminus \bar{F}^{(j)}\;,}\\
\displaystyle{ \Dn{\BCD^{(j)}}{}(\Bx)=\N^{(j)}\;,\quad \Bx \in \partial F^{(j)}\;,}\\
\displaystyle{ \BCD^{(j)}(\Bx)=O(\varepsilon^3 |\Bx-\BO^{(j)}|^{-2}) \quad \text{ as }\quad | \Bx | \to \infty \;,}
\end{array}\right\}
 \end{equation}
where $\N^{(j)}$ is the unit outward normal with respect to $F^{(j)}$.
In the text below we also use %the notion of 
the %dipole 
negative definite polarization matrix $\BCQ^{(j)}=\{\CQ^{(j)}_{ik}\}_{i, k=1}^3$,
as well as the following asymptotic result (see \cite{MM_Sob} and Appendix G in \cite{MMP}), for every  void $F^{(j)}$:

\begin{lem}\label{Djasymp}
For $|\Bx - \BO^{(j)}|>2 \Gve$, the dipole fields admit the asymptotic representation
\begin{equation}\label{modfieldeq2}
\cD^{(j)}_i(\Bx)=\frac{1}{4\pi} \sum^3_{m=1} \CQ^{(j)}_{im} \frac{x_m- O^{(j)}_m}{|\Bx - \BO^{(j)}|^3}+O\left(\Gve^{4} {|\Bx-\BO^{(j)}|^{-3}}\right)\;, \quad i=1,2,3\;.
\end{equation}
%where $\max_{i,j} |\CQ^{(j)}_{im}| \leq \mbox{\rm const}~ \Gve^3.$
\end{lem}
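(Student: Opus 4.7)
The plan is to reduce the lemma to a standard multipole expansion on the unit scale. First, I would rescale: set $\tilde v_i(\tilde \Bx) = \Gve^{-1} \cD^{(j)}_i(\BO^{(j)} + \Gve \tilde \Bx)$, so that $\tilde v_i$ is harmonic in ${\Bbb R}^3 \setminus \overline{\tilde F^{(j)}}$, where $\tilde F^{(j)} := \Gve^{-1}(F^{(j)} - \BO^{(j)})$ has diameter $O(1)$ and a Lipschitz character independent of $j$ and $N$ by the standing hypotheses. The Neumann condition $\partial \cD^{(j)}_i/\partial n = n_i$ scales to $\partial \tilde v_i/\partial \tilde n = \tilde n_i$, and the prescribed decay $\cD^{(j)}_i = O(\Gve^3 |\cdot|^{-2})$ becomes $\tilde v_i = O(|\tilde \Bx|^{-2})$. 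It therefore suffices to prove the analogous dipole expansion for $\tilde v_i$ with remainder $O(|\tilde \Bx|^{-3})$ uniform in $j$; the claimed rate $\Gve^{4} |\Bx - \BO^{(j)}|^{-3}$ then follows by unscaling.

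Next, I would apply Green's third identity to $\tilde v_i$ in the exterior of $\tilde F^{(j)}$ with the Newtonian kernel $(4\pi|\tilde \Bx - \tilde \XI|)^{-1}$, letting the sphere at infinity disappear via the decay of both $\tilde v_i$ and the kernel. Inserting the Neumann data converts one term into a single layer with density $\tilde n_i$ and leaves a double layer with density $\tilde v_i$. I would then Taylor-expand the kernel and its normal derivative in $\tilde \XI$ around $0$ to first order. In the single layer the zeroth-order piece $\int_{\partial \tilde F^{(j)}} \tilde n_i \, dS$ vanishes by the divergence theorem, while the first-order piece produces, via a further application of the divergence theorem, $\delta_{im} |\tilde F^{(j)}|$. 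The first-order part of the double layer contributes $-\int_{\partial \tilde F^{(j)}} \tilde v_i \tilde n_m \, dS$. Combining, the coefficient of $\tilde x_m / (4\pi|\tilde \Bx|^3)$ is precisely the standard entry $\tilde \CQ_{im}$ of the polarization matrix of $\tilde F^{(j)}$, so unscaling gives $\CQ^{(j)}_{im} = \Gve^3 \tilde \CQ_{im}$, consistent with \eqref{Gl_min}.

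For the remainder, the fact that $|\tilde \XI| \leq \mathrm{const}$ on $\partial \tilde F^{(j)}$ combined with $|\tilde \Bx| > 2$ gives $|\tilde \Bx - \tilde \XI| \geq \tfrac{1}{2}|\tilde \Bx|$, so the second-order Taylor error is pointwise $O(|\tilde \Bx|^{-3})$. Integrating, the single-layer remainder is controlled by the surface measure of $\partial \tilde F^{(j)}$ and the double-layer remainder by $\|\tilde v_i\|_{L^1(\partial \tilde F^{(j)})}$, both of which are $O(1)$ under the uniform Lipschitz hypothesis. Rescaling yields the required $\Gve^{4} |\Bx - \BO^{(j)}|^{-3}$ bound. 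The main technical point is the uniformity of this $L^1$ trace bound in $j$: it comes from combining the standard energy estimate for the exterior Neumann problem $\Delta \tilde v_i = 0$, $\partial_{\tilde n} \tilde v_i = \tilde n_i$ with a trace inequality, where the constant is controlled solely by the common Lipschitz character of $\tilde F^{(j)}$. Once this uniformity is settled, the remaining argument is direct bookkeeping of the Taylor expansion.
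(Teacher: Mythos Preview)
Your argument is correct. The paper does not actually prove Lemma~\ref{Djasymp}; it merely cites \cite{MM_Sob} and Appendix~G of \cite{MMP} for this standard result on polarization tensors. Your rescaling to the unit void followed by Green's representation and a one-term multipole (Taylor) expansion of the Newtonian kernel is precisely the classical route to such statements, and it is essentially what one finds in those references. The identification of the dipole coefficient with the polarization matrix entry, the vanishing of the monopole term via $\int_{\partial\tilde F^{(j)}}\tilde n_i\,dS=0$, and the $O(|\tilde\Bx|^{-3})$ control of the Taylor remainder are all handled correctly. Your emphasis on the uniformity in $j$ of the trace bound $\|\tilde v_i\|_{L^1(\partial\tilde F^{(j)})}\leq C$ is exactly the right technical point, and your justification through the uniform Lipschitz character of $\Gve^{-1}F^{(j)}$ (which the paper explicitly assumes) together with the scale-invariant energy estimate for the exterior Neumann problem is the standard and correct mechanism. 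One cosmetic remark: what you call the ``first-order part of the double layer'' is, strictly speaking, the zeroth-order Taylor term of $\partial_{\tilde n_\xi}\Phi$ (which already decays like $|\tilde\Bx|^{-2}$); this is only a naming issue and does not affect the argument.
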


The shapes of the voids $F^{(j)}, j = 1,\ldots, N,$ are constrained in such a way that the maximal and minimal eigenvalues $\Gl_{max}^{(j)}, ~\Gl_{min}^{(j)}$ of the matrices $-\BCQ^{(j)}$ satisfy the inequalities \eq{Gl_min}.
%\beq
%A_1 \Gve^3 > \max_{1 \leq j \leq N} \Gl_{max}^{(j)} \geq \min_{1 \leq j \leq N} \Gl_{min}^{(j)} > A_2 \Gve^3 > 0,
%\eequ{Gl_min} 
%where the constants $A_1, A_2$ are independent of $\Gve$.

%We shall also use the notation \beq \BCP_\Gve^{(j)} = \Gve^3 \BCP^{(j)} \eequ{P_eps} for the polarization matrix of the small (unscaled) inclusion $\Go_\Gve^{(j)}$.
\end{enumerate}

\section{The formal approximation of $u_N$ for the infinite space containing many voids}\label{R3alg}
%Before proving Theorem stated in the Introduction for a bounded domain, we consider a simpler 
%Here, we consider the situation when $\Omega=\mathbb{R}^3$.
%, for simplicity, we first consider the situation when $\Omega=\mathbb{R}^3$, where the results obtained here will be used in the case when $\Omega_N$ is bounded.

In this section we deduce formally the uniform asymptotic approximation of $u_N$: 
\[ u_N(\X)\sim v%_f
(\X)+ %\varepsilon 
\sum^N_{k=1} \boldsymbol{C}^{(k)} \cdot \BCD^{(k)}(\Bx %\XI_k
)\;,\] 
for the case $\Omega=\mathbb{R}^3$ and derive an algebraic system for %that is satisfied by 
the coefficients $\boldsymbol{C}^{(k)}=\{C^{(k)}_i\}^3_{i=1}$, $k=1, \dots, N$.

%Then, in this case, 
The function $u_N$ satisfies
\begin{equation}\label{R3formalgeq1}
-\Delta u_N(\X)=f(\X)\;, \quad \X \in \Omega_N\;,
\end{equation}
\begin{equation}\label{R3formalgeq2}
\Dn{u_N}{}(\X)=0\;,\quad \X \in  \partial F%\omega_\varepsilon
^{(j)}, j=1, \dots, N\;,
\end{equation}
\begin{equation}\label{R3formalgeq3}
u_N(\X)\to 0\;, \quad \text{ as }|\X| \to \infty\;.
\end{equation}
%Since $\text{supp }f \cap \omega=\varnothing$, 
%The right-hand side (\ref{R3formalgeq1}) is subject to the condition
%\begin{equation}\label{R3formalgeq4}
%\int_{\mathbb{R}^3} f(\X)\, d\X=0\;.
%\end{equation}
We begin by constructing the %first order approximation 
asymptotic representation for $u_N$ in this way
\begin{equation}\label{R3formalgeq5}
u_N(\X)=v%_f
(\X)+     %\mathcal{R}^{(1)}_N(\X)\;,
\sum^N_{k=1} \boldsymbol{C}^{(k)} \cdot \cD^{(k)}(\Bx)+\mathcal{R}_N(\X)\, 
\end{equation}
where $\CR_N$ is the remainder, and $v%_f
(\X)$ satisfies
\[ -\Delta v%_f
(\X)=f(\X)\;, \quad \X \in \mathbb{R}^3\;,\]
\[ v%_f
(\X) \to 0 \quad \text{ as } \quad |\X| \to \infty\;,\]
%\[\int_{\mathbb{R}^3} f(\X)\, d\X=0\;.\]
and $\cD^{(k)}$ are the dipole fields defined as solutions of problems \eq{modfieldeq1}.
%Then $\mathcal{R}^{(1)}_N$, appearing in (\ref{R3formalgeq5}), is a harmonic function in $\Omega_N$ which decays at infinity. Considering the normal derivative on the small void $%\omega_\varepsilon
%F^{(j)}$, $j=1, \dots, N$ we have
%\begin{eqnarray}\nonumber\label{R3formalgeq6}
%\Dn{\mathcal{R}^{(1)}_N}{}(\X)&=&-\N^{(j)}\cdot \nabla v_f(\X)\\
%&=&-\N^{(j)}\cdot \nabla v_f(\Oj^{(j)})+O(\varepsilon)\;, \quad \X \in \partial F%\omega
%^{(j)}%_\varepsilon
%, j=1, \dots, N\;.
%\end{eqnarray}
%A linear combination of the dipole fields introduced in Problem 3 of Section 3 is now used to correct for the discrepancy
%in the preceding boundary conditions and construct $\mathcal{R}^{(1)}_N$:
%\begin{equation}\label{R3formalgeq7}
%\mathcal{R}^{(1)}_N(\X)=%\varepsilon 
%\sum^N_{k=1} \boldsymbol{C}^{(k)} \cdot \cD^{(k)}(\Bx)+\mathcal{R}_N(\X)\;,  
%\end{equation}
%where 
The function $\mathcal{R}_N$ is harmonic in $\Omega_N$ and 
\begin{equation}\label{R3formalgeq8}
\mathcal{R}_N(\X) =O(|\X|^{-1}) \quad \text{ as } |\X| \to \infty\;.
\end{equation}
Placement of (\ref{R3formalgeq5%7
}) into (\ref{R3formalgeq2%6
}) together with (\ref{modfieldeq1}) gives the boundary condition on $\prt F^{(j)}$:
\[\Dn{\mathcal{R}_N}{}(\X)=-\N^{(j)} \cdot \Big\{ \nabla v%_f
(\Oj^{(j)})+\boldsymbol{C}^{(j)}+O(\varepsilon)+ %\varepsilon 
\sum_{\substack{k \ne j\\ 1 \le k \le N}} \nabla (\boldsymbol{C}^{(k)} \cdot \cD^{(k)}(\Bx))\Big\} .
% \;, \quad \X \in  \partial %\omega_\varepsilon
%F^{(j)}, j=1, \dots, N\;.
\]
Now we use (\ref{modfieldeq2}), for $\cD^{(k)}$, $k \ne j$, so that this boundary condition becomes
\begin{eqnarray*}
\Dn{\mathcal{R}_N}{}(\X)&\sim&
-\N^{(j)}\cdot\Big\{ \nabla v%_f
(\Oj^{(j)})+\boldsymbol{C}^{(j)}%+O(\varepsilon) -\varepsilon^3 
+\sum_{\substack{k \ne j \\ 1 \le k \le N}} T(\X, \Oj^{(k)})\BCQ^{(k)} \boldsymbol{C}^{(k)} \Big\}%\\
%&&+O\Big(\sum_{\substack{k \ne j \\ 1 \le k \le N}} \frac{\varepsilon^4 |\boldsymbol{C}^{(k)}|}{|\X -\Oj^{(k)}|^4}\Big) \Big\}
\;, \quad \X \in \partial F^{(j)}, j=1, \dots, N\;,
\end{eqnarray*}
where
\begin{equation}\label{defT}
T(\X, \Y)=(\nabla_{\bZ} \otimes \nabla_{\bW})\left( \frac{1}{4\pi|\bZ-\bW|}\right)\Big|_{\substack{\bZ=\X\\ \bW=\Y}}\;.
\end{equation}
Finally, Taylor's expansion of $T(\X, \Oj^{(k)})$ about $\X=\Oj^{(j)}, ~ j \neq k, $ leads to 
\begin{eqnarray*}
\Dn{\mathcal{R}_N}{}(\X)&\sim&
-\N^{(j)}\cdot\Big\{ \nabla v%_f
(\Oj^{(j)})+\boldsymbol{C}^{(j)}%+O(\varepsilon) -\varepsilon^3 
+\sum_{\substack{k \ne j \\ 1 \le k \le N}} T(\Oj^{(j)}, \Oj^{(k)})\BCQ^{(k)} \boldsymbol{C}^{(k)} %\\
%&&+O\Big(\sum_{\substack{k \ne j \\ 1 \le k \le N}} \frac{\varepsilon^4 |\boldsymbol{C}^{(k)}|}{|\Oj^{(j)} -\Oj^{(k)}|^4}\Big) 
\Big\}\;, \quad \X \in \partial %\omega_\varepsilon
F^{(j)}, j=1, \dots, N\;.
\end{eqnarray*}
%Upon considering removing 
To remove the leading order discrepancy in the above boundary condition, %gives rise to 
we require that the vector coefficients $\BC^{(j)}$ satisfy the algebraic system
\begin{equation}\label{R3formalgeq9}
\nabla v%_f
(\Oj^{(j)})+\boldsymbol{C}^{(j)}+%\varepsilon^3 
\sum_{\substack{ k \ne j \\ 1 \le k \le N}} T(\Oj^{(j)}, \Oj^{(k)})\BCQ^{(k)}\boldsymbol{C}^{(k)}=\Oj\;, \quad \text{ for } j=1, \dots, N\;,
\end{equation} 
where the polarization matrices $\BCQ^{(j)}$ %_\Gve$, defined in \eq{P_eps} 
characterize the geometry of %the small voids $%\Go_\Gve
$F^{(j)}, ~ j= 1, \ldots, N.$ 
Upon solving the above algebraic system, the formal asymptotic approximation of $u_N$ is complete. The next section addresses the solvability of the system   \eq{R3formalgeq9}, together with %the 
estimates  for
the vector coefficients $\BC^{(j)}.$

%\documentclass[11pt]{report}
%\usepackage{graphics}
%\usepackage{amsmath}
%\usepackage{amssymb}
%\usepackage{enumerate}
%\usepackage{mathrsfs}
%\usepackage[dvips]{graphicx}
%\usepackage{psfrag}
%\input macroarx
%\begin{document}

\section{\bf %The a
Algebraic system in the case $\GO = {\Bbb R}^3$} \label{alg_syst}

The algebraic system for the coefficients $\BC^{(j)}$ %has the form
can be written in the form 
\beq
\BC + %\Gve^3 
\BCS \BQ \BC = - \BGT, %_f,
\eequ{alg1}
where 
$$
\BC = ( (\BC^{(1)})^T, \ldots, (\BC^{(N)})^T )^T, ~~ \BGT%_f
 = ((\Grad v%_f
 (\BO^{(1)}))^T, \ldots , (\Grad v%_f
 (\BO^{(N)}))^T )^T,
$$
are vectors of the dimension $3N$, and 
\begin{eqnarray}
\BCS&&=[\CS_{ij}]_{i, j=1}^N, ~ 
\CS_{ij}=\left\{\begin{array}{ll}\displaystyle{(\nabla_{\Bz} \otimes \nabla_{\Bw})\left(\frac{1}{4\pi|\Bz-\Bw|}\right)\Big|_{\substack{\Bz=\BO^{(i)}\\ \Bw=\BO^{(j)}}}} & \quad \text{ if } i \ne j\\
& \\
0I_3 %\BI_3
&\quad \text{otherwise} , \end{array}  \right.  \label{alg2} \\
 \quad \mathbf{Q}&&=\mbox{diag}\{\BCQ^{(1)}, \dots, \BCQ^{(N)}\}\; ~~\mbox{is negative definite}.  \label{alg3}
\end{eqnarray}
These are $3N \times 3N$ matrices whose entries are $3\times 3$ blocks. The notation in \eq{alg2} %means that 
is interpreted as 
$$\CS_{ij} = \left\{\fr{1}{4 \pi}\fr{\prt}{\prt z_q}\Big(  \fr{z_r - O_r^{(j)}}{| \Bz - \BO^{(j)} |^3}\Big)\Big|_{\Bz=\BO^{(i)}}\right\}^3_{q, r=1}
~\mbox{ when  } i\neq j.$$

We %shall also 
use the piecewise constant vector function
\beq
\BGX(\Bx) = \left\{  \begin{array}{cc}
\BCQ^{(j)} \BC^{(j)}, ~~\mbox{when} ~~ \Bx \in \ov{B}_{d/4}^{(j)}, ~ j = 1, \ldots, N, \\
\\
0, ~~~\mbox{otherwise}, \end{array} \right.
\eequ{alg3a}
where $B_r^{(j)} = \{\Bx: |\Bx - \BO^{(j)}| < r \}.$

%{\bf Theorem 1.} 
\begin{thm}   \label{thm1_alg} 
%{\em 
Assume that $%\Gve
 \Gl_{max} < \emph{\text{const }} d^3$, where $\Gl_{max}$ is the largest eigenvalue of the positive definite matrix $-\BQ$ and the constant is independent of $d$. Then the algebraic system 
{\rm \eq{alg1}} is solvable and the vector coefficients $\BC^{(j)}$ satisfy the estimate %}
\beq
\sum_{j=1}^N | (\BC^{(j)})^T \BCQ^{(j)} \BC^{(j)} | \leq (1- \emph{\text{const }}\fr{%\Gve^3
\Gl_{max} }{ d^3}  )^{-2} \sum_{j=1}^N |(\Grad v%_f
(\BO^{(j)}))^T \BCQ^{(j)} \Grad v%_f
(\BO^{(j)})|.
\eequ{alg4}
\end{thm}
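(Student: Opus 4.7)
The plan is to prove the a priori estimate \eq{alg4} by a direct energy argument; invertibility of $\BI+\BCS\BQ$ then follows by applying the same estimate with $\BGT=0$. The first step is to pair the system \eq{alg1} with $\BQ\BC$ in $\mathbb{R}^{3N}$. Using the symmetry of $\BQ$ and of $\BCS$ (each block $\BCS_{ij}$ is the Hessian of $(4\pi|\cdot|)^{-1}$), this produces the identity $\BC^T\BQ\BC+(\BQ\BC)^T\BCS(\BQ\BC)=-\BGT^T\BQ\BC$. Set $\mathbf{Y}:=\BQ\BC$. The positive quantity $E_C:=-\BC^T\BQ\BC=-\mathbf{Y}^T\BQ^{-1}\mathbf{Y}$ is the left-hand side of \eq{alg4} and equals the squared norm of $\mathbf{Y}$ in the positive-definite inner product $(\U,\V)_M:=-\U^T\BQ^{-1}\V$; likewise $E_\tau$ is the right-hand side of \eq{alg4} and equals $(\BQ\BGT,\BQ\BGT)_M$. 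A short calculation rewrites $\BGT^T\BQ\BC$ as $-(\BQ\BGT,\mathbf{Y})_M$, so Cauchy--Schwarz in $(\cdot,\cdot)_M$ yields $|\BGT^T\BQ\BC|\leq\sqrt{E_\tau E_C}$. Hence it suffices to prove the key bound
\[
|\mathbf{Y}^T\BCS\mathbf{Y}|\leq C\,d^{-3}\,|\mathbf{Y}|^2 . \qquad (\star)
\]
Combined with $|\mathbf{Y}|^2\leq\Lambda_{\max}E_C$ (since the smallest eigenvalue of $-\BQ^{-1}$ is $\Lambda_{\max}^{-1}$) and the hypothesis $\Lambda_{\max}<\text{const}\,d^3$, $(\star)$ gives $|\mathbf{Y}^T\BCS\mathbf{Y}|\leq\alpha E_C$ with $\alpha<1$, and then $(1-\alpha)E_C\leq\sqrt{E_\tau E_C}$ delivers \eq{alg4}.

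The core of the proof is $(\star)$, and here the piecewise constant field $\BGX$ from \eq{alg3a} enters. Since $\partial_{z_p}\partial_{w_q}(4\pi|\bZ-\bW|)^{-1}$ is harmonic in $\bZ$ and in $\bW$ for $\bZ\ne\bW$, and the balls $B_{d/4}^{(j)}$ are pairwise disjoint (their centres are separated by $\geq 2d$), two applications of the mean value theorem rewrite each off-diagonal entry $(Y^{(i)})^T\BCS_{ij}Y^{(j)}$ as $|B_{d/4}|^{-2}$ times the volume integral over $B_{d/4}^{(i)}\times B_{d/4}^{(j)}$ of the same quadratic form in $\BGX$. Summing and adding/subtracting the $i=j$ contributions gives
\[
|B_{d/4}|^2\,\mathbf{Y}^T\BCS\mathbf{Y} = \int_\omega\!\int_\omega \BGX(\bZ)\cdot\bigl[\nabla_\bZ\otimes\nabla_\bW (4\pi|\bZ-\bW|)^{-1}\bigr]\BGX(\bW)\,d\bZ d\bW - \sum_{j=1}^{N} E_j ,
\]
with $E_j$ denoting the diagonal contribution from $B_{d/4}^{(j)}\times B_{d/4}^{(j)}$. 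Because $\BGX$ is constant on each ball, double integration by parts converts the first term into $\|\Grad w\|_{L^2(\mathbb{R}^3)}^2$, where $w$ is the single-layer potential on $\cup_j\partial B_{d/4}^{(j)}$ with surface density $\BGX\cdot\N$, and each $E_j$ into $\|\Grad w_j\|_{L^2(\mathbb{R}^3)}^2$ for the layer supported on $\partial B_{d/4}^{(j)}$ alone.

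Two further estimates then close $(\star)$: a further integration by parts produces the identity $\|\Grad w\|_{L^2}^2=\int_{\mathbb{R}^3}\BGX\cdot\Grad w\,d\X$, which together with Cauchy--Schwarz gives $\|\Grad w\|_{L^2}^2\leq\|\BGX\|_{L^2}^2=|B_{d/4}||\mathbf{Y}|^2\sim d^3|\mathbf{Y}|^2$; and an explicit resolution of the Laplace equation with surface density $Y^{(j)}\cdot\N$ on a sphere of radius $d/4$ (giving $w_j=Y^{(j)}\cdot(\X-\BO^{(j)})/3$ inside the sphere and a point-dipole potential outside) yields $E_j\leq Cd^3|Y^{(j)}|^2$. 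Dividing by $|B_{d/4}|^2\sim d^6$ produces $(\star)$. The most delicate technical point is the double integration by parts and the verification that the self-energies $E_j$ carry the same $d^3$ scaling as $\|\Grad w\|_{L^2}^2$; it is precisely this matching of scales that lets the mesoscale hypothesis $\Lambda_{\max}/d^3<\text{const}$ close the argument.
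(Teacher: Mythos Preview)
Your argument is correct and follows the same architecture as the paper: pair \eq{alg1} with $\BQ\BC$, establish the operator bound $|\langle\BCS\BQ\BC,\BQ\BC\rangle|\le\mbox{const}\,d^{-3}|\BQ\BC|^2$ (your $(\star)$, the paper's Lemma~\ref{lem1_alg}\,b)), and close via Cauchy--Schwarz in the $-\BQ$ inner product. Your mean-value/add-subtract-diagonal identity is exactly the paper's Lemma~\ref{lem1_alg}\,a), just rewritten with $\|\nabla w\|_{L^2}^2$ in place of the Newton-potential pairing and with the explicit evaluation $\sum_j\|\nabla w_j\|^2=\tfrac{1}{3}|B_{d/4}|\,|\mathbf Y|^2$ replacing the constant $16/(\pi d^3)$.

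The one genuine difference is in the proof of $(\star)$ itself. The paper introduces an auxiliary scalar $W$ with $\BCC=\nabla W$ (appealing to the structure of the system) and integrates against $\Delta_\BY W$; the argument is written only for spherical voids. Your route---the identity $\|\nabla w\|^2=\int\BGX\cdot\nabla w$ followed by Cauchy--Schwarz, giving $\|\nabla w\|_{L^2}\le\|\BGX\|_{L^2}$---is more elementary and self-contained: it uses nothing about the shape of the voids, nothing about the system beyond the definition of $\BGX$, and avoids the $\nabla W$ representation entirely. Both yield the same bound with the same $d^{-3}$ scaling, so the final estimate \eq{alg4} is identical.
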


%\newpage

We %will 
%need 
consider the scalar product of \eq{alg1} and the vector $\BQ \BC$:
\beq
\langle \BC, \BQ \BC \rangle + % \Gve^3 
\langle \BCS \BQ \BC, \BQ \BC \rangle = -\langle \BGT,%_f, 
\BQ \BC \rangle.
\eequ{alg6}
Prior to the proof of %the 
Theorem we formulate and prove the following identity.

%{\bf Lemma 1.} 
\begin{lem} \label{lem1_alg}
 %{\em 
 a)
 The scalar product $\langle   \BCS \BQ \BC, \BQ \BC  \rangle $ admits the representation  %}
\begin{eqnarray}
\langle   \BCS \BQ \BC, \BQ \BC  \rangle &&= \fr{  576%768
}{\pi^3 %2 
d^6} \int_{{\Bbb R}^3}  \int_{{\Bbb R}^3} \fr{1}{| \BX - \BY  |} (\Grad \cdot \BGX(\BX))  (\Grad \cdot \BGX(\BY)) d \BY  d \BX \nonumber \\
&& - \fr{16}{ \pi d^3}
\sum_{j=1}^N   | \BCQ^{(j)}  \BC^{(j)} |^2.
\label{alg5m}
\end{eqnarray}

b) The following estimate holds
\[|\langle \BCS \BQ \BC, \BQ \BC\rangle| \le \text{\emph{const }}d^{-3}\, \sum_{1 \le j  \le N}|\BQ^{(j)} \BC^{(j)}|^2\;,\]
where the constant in the right-hand side does not depend on $d$.
\end{lem}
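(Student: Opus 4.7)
The plan is to reduce $\langle\BCS\BQ\BC,\BQ\BC\rangle$ to a Newtonian-type integral over $\mathbb{R}^3\times\mathbb{R}^3$ of the ``charge'' $\nabla\cdot\BGX$, up to an explicit diagonal correction, and then to control that integral by a Helmholtz-projection estimate.

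For part (a), the starting observation is that the balls $B^{(j)}_{d/4}$ are pairwise disjoint, since their centres are separated by at least $2d$ while each has radius $d/4$. Hence for $i\neq j$ the function $\partial_{z_q}\partial_{w_r}|\BZ-\BW|^{-1}$ is harmonic in $\BZ\in B^{(i)}_{d/4}$ with $\BW\in B^{(j)}_{d/4}$ held fixed, and vice versa, so applying the mean value theorem in each variable yields
\[
(\CS_{ij})_{qr}=\frac{1}{4\pi\Lambda^2}\int_{B^{(i)}_{d/4}}\!\int_{B^{(j)}_{d/4}}\partial_{x_q}\partial_{y_r}\frac{1}{|\BX-\BY|}\,d\BY\,d\BX,\qquad i\neq j,
\]
with $\Lambda:=|B_{d/4}|=\pi d^3/48$, so $(4\pi\Lambda^2)^{-1}=576/(\pi^3 d^6)$. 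Contracting with $\BCQ^{(i)}\BC^{(i)}$ and $\BCQ^{(j)}\BC^{(j)}$ and writing $\sum_{i\neq j}=\sum_{i,j}-\sum_{i=j}$, I represent the full sum over $(i,j)$ as a single integral over $\mathbb{R}^3\times\mathbb{R}^3$ (the balls being disjoint and $\BGX$ supported on their union); two distributional integrations by parts then transfer the gradients onto $\BGX$ and turn this into $\iint|\BX-\BY|^{-1}(\nabla\cdot\BGX(\BX))(\nabla\cdot\BGX(\BY))\,d\BY\,d\BX$, which together with the prefactor is the first term of the claimed identity.

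For the subtracted $i=j$ diagonal I use that $\mathbf{a}:=\BCQ^{(j)}\BC^{(j)}$ is constant, so its bulk divergence vanishes, and Gauss' theorem applied on each ball reduces the $i=j$ contribution to the sphere--sphere integral
\[
J_{d/4}(\mathbf{a})\;=\;\int_{\partial B_{d/4}}\!\int_{\partial B_{d/4}}\frac{(\mathbf{a}\cdot\N(\BX))(\mathbf{a}\cdot\N(\BY))}{|\BX-\BY|}\,dS_Y\,dS_X.
\]
By scaling and rotational invariance $J_r(\mathbf{a})=Cr^3|\mathbf{a}|^2$; a short computation using the $\ell=1$ spherical-harmonic mode of the kernel (equivalently, the identity $\int_{S^2}z_Y|\BX-\BY|^{-1}dS_Y=(4\pi/3)z_X$ on $|\BX|=1$) gives $C=16\pi^2/9$, whence $J_{d/4}(\mathbf{a})=\pi^2 d^3|\mathbf{a}|^2/36$. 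Multiplied by the $576/(\pi^3 d^6)$ prefactor this produces exactly the announced $-(16/(\pi d^3))\sum_j|\BCQ^{(j)}\BC^{(j)}|^2$ correction, completing (a).

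For part (b) I invoke the identity of (a). The subtracted term is already of the desired form; for the first term, let $V$ be the decaying solution of $-\Delta V=\nabla\cdot\BGX$ on $\mathbb{R}^3$, so that
\[
\iint\frac{(\nabla\cdot\BGX(\BX))(\nabla\cdot\BGX(\BY))}{|\BX-\BY|}\,d\BY\,d\BX=4\pi\int_{\mathbb{R}^3}|\nabla V|^2\,d\BX.
\]
A single integration by parts gives $\int|\nabla V|^2=-\int\nabla V\cdot\BGX$, whence Cauchy--Schwarz yields $\|\nabla V\|_{L^2}\le\|\BGX\|_{L^2}$. Since $\BGX$ is piecewise constant on disjoint balls of volume $\Lambda$, $\|\BGX\|_{L^2}^2=\Lambda\sum_j|\BCQ^{(j)}\BC^{(j)}|^2=(\pi d^3/48)\sum_j|\BCQ^{(j)}\BC^{(j)}|^2$, and the $576/(\pi^3 d^6)$ prefactor converts this into the required $O(d^{-3})\sum_j|\BCQ^{(j)}\BC^{(j)}|^2$ bound, establishing (b). I expect the one nontrivial step to be the explicit evaluation of $C=16\pi^2/9$ so that the diagonal correction in (a) matches the stated $16/(\pi d^3)$ exactly; the rest is standard use of the mean value theorem, Gauss' theorem, and the $L^2$-boundedness of the gradient part of the Helmholtz decomposition.
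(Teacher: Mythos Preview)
Your proof is correct. For part~(a) you use the same ingredients as the paper---mean value on the balls $B^{(j)}_{d/4}$, integration by parts, and explicit evaluation of the self-interaction---only organized more symmetrically: you average in both variables at once and then add/subtract the diagonal sphere--sphere integral $J_{d/4}$, whereas the paper averages one variable at a time through a $\tau\to 0+$ limit and picks up the correction from the identity $\int_{|\BY-\BO^{(j)}|=(d/4)-\tau}|\BY-\BO^{(j)}|^{-4}(Y_r-O_r^{(j)})(Y_q-O_q^{(j)})\,dS_\BY=\frac{4\pi}{3}\delta_{qr}$. Part~(b), however, is genuinely different from and stronger than the paper's argument. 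The paper introduces a scalar potential $W$ with $\BCC=\nabla W$, invoking the algebraic system for the $\BC^{(j)}$ to claim $\nabla\times\BCC=0$, and then carries out the estimate only for identical spherical voids ($\BCQ^{(j)}=-\frac{\pi}{4}\varepsilon^3 I_3$). Your Helmholtz-projection bound $\|\nabla V\|_{L^2}\le\|\BGX\|_{L^2}$ with $-\Delta V=\nabla\cdot\BGX$ bypasses that detour, gives an explicit constant, works for arbitrary $\BCQ^{(j)}$, and makes no use of the algebraic system---so it proves the lemma in the stated generality where the paper only sketches a special case.
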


{\bf Remark.} Using the notation $\BCN (\Grad \cdot \BGX)$ for the Newton's potential acting on $\Grad \cdot \BGX$ we can interpret the integral in \eq{alg5m} as
$$
\Big(  \BCN (\Grad \cdot \BGX), \Grad \cdot \BGX  \Big)_{L_2({\Bbb R}^3)},
$$
since obviously $\Grad \cdot \BGX \in W^{-1,2}({\Bbb R}^3)$ and $\BCN (\Grad \cdot \BGX) \in W^{1,2}({\Bbb R}^3)$.  Here and in the sequel %of the paper 
we use the notation
$(  \Gvf, \psi )$ for the extension of the integral $\int_{{\Bbb R}^3} \Gvf (\BX) \psi(\BX) d \BX$ onto the Cartesian product $W^{1,2}({\Bbb R}^3) \times W^{-1,2}({\Bbb R}^3).$

{\bf Proof of Lemma \ref{lem1_alg}.}  %Consider the scalar product of \eq{alg1} and the vector $\BCP \BC$:
%\beq
%-\langle \BC, \BP \BC \rangle + \Gve^3 \langle \BCS \BP \BC, \BP \BC \rangle = \langle \BGT_f, \BP \BC \rangle.
%\eequ{alg6}
%The modulus of the right-hand side of \eq{alg6} does not exceed
%$$
%\langle  \BC, -\BP \BC \rangle^{1/2}  \langle \BGT_f, -\BP \BGT_f \rangle^{1/2}.
%$$
%The second term in the left-hand side of \eq{alg6}  can be expanded as
%The 
a) By \eq{alg2}, \eq{alg3}, the following representation holds
\beq
 \langle \BCS \BQ \BC, \BQ \BC \rangle 
 %\eequ{alg6a}
 %$$
 = \fr{1}{4 \pi} \sum_{j=1}^N   \Big(   \BCQ^{(j)} \BC^{(j)} \Big)^T  \sum_{1 \leq k \leq N, k \neq j} 
  (\nabla_{\Bz} \otimes \nabla_{\Bw})\left(\frac{1}{ |\Bz-\Bw|}\right)\Big|_{\substack{\Bz=\BO^{(j)}\\ \Bw=\BO^{(k)}}}  \Big(   \BCQ^{(k)} \BC^{(k)} \Big) .
 \eequ{alg6a}
%$$
Using the mean value theorem for harmonic functions we note that when $j \neq k$
$$
(\nabla_{\Bz} \otimes \nabla_{\Bw})\left(\frac{1}{ |\Bz-\Bw|}\right)\Big|_{\substack{\Bz=\BO^{(j)}\\ \Bw=\BO^{(k)}}} = \fr{3}{4 \pi (d/4)^3} \int_{B_{d/4}^{(k)}} 
(\nabla_{\Bz} \otimes \nabla_{\Bw})\left(\frac{1}{ |\Bz-\Bw|}\right)\Big|_{\substack{\Bz=\BO^{(j)}}} d \Bw.
$$
Substituting this identity into \eq{alg6a} and using %the 
 definition \eq{alg3a} we see that the inner sum on the right-hand side of \eq{alg6a} can be presented in the form
\begin{eqnarray}
%\langle  \BCS \BP \BC, \BP \BC  \rangle && = \fr{3}{4 \pi (d/4)^3} \sum_{j=1}^N \Big(   \BP^{(j)} \BC^{(j)} \Big)^T
%\nonumber \\
&& %\cdot 
\fr{48}{\pi d^3}  \lim_{\tau \to 0+} \int_{{\Bbb R}^3 \setminus B^{(j)}_{(d/4) - \tau} } \Big\{    \fr{\prt}{\prt Y_q}  \Big(   \fr{Y_r - O_r^{(j)}}{|\BY - \BO^{(j)}|^3}\Big)\Big\}_{q,r=1}^3 % \Big)  
\BGX(\BY) d\BY,
\nonumber 
\end{eqnarray}
and further integration by parts gives
\begin{eqnarray}
\langle  \BCS \BQ \BC, \BQ \BC  \rangle &&= - \fr{12}{\pi^2 d^3} \sum_{j=1}^N \Big(   \BCQ^{(j)} \BC^{(j)} \Big)^T
\label{alg7} \\
&&
\cdot \lim_{\tau \to 0+} \Bigg\{  
\int_{{\Bbb R}^3 \setminus B^{(j)}_{(d/4) - \tau} } \Big\{    \fr{Y_r - O_r^{(j)}}{|\BY - \BO^{(j)}|^3}  \Grad \cdot  \BGX(\BY)\Big\}_{r=1}^3 d\BY
\nonumber \\
&&
+ \int_{|\BY - \BO^{(j)}| = (d/4) -\tau} \Bigg\{  \fr{(Y_r - O_r^{(j)})  (Y_q- O_q^{(j)})}{|\BY - \BO^{(j)}|^4}   \Bigg\}_{r, q = 1}^3 d S_\BY ~\BCQ^{(j)} \BC^{(j)} \Bigg\}, \nonumber
\end{eqnarray}
where the integral over ${\Bbb R}^3 \setminus B_{(d/4) - \tau}$ in \eq{alg7} is understood in the sense of distributions. The surface integral  in \eq{alg7} can be evaluated explicitly, i.e.
\beq
\int_{|\BY - \BO^{(j)}| = (d/4) -\tau} \Bigg\{  \fr{(Y_r - O_r^{(j)})  (Y_q- O_q^{(j)})}{|\BY - \BO^{(j)}|^4}   \Bigg\}_{r, q = 1}^3 d S_\BY ~\BCQ^{(j)} \BC^{(j)} = \fr{4 \pi}{3} \BCQ^{(j)} \BC^{(j)}.
\eequ{alg8}
Once again, applying the mean value theorem for harmonic functions in the outer sum of  \eq{alg7} and using \eq{alg8} together with the definition \eq{alg3a} we arrive at
\begin{eqnarray}
\langle  \BCS \BQ \BC, \BQ \BC  \rangle &&= - \fr{16}{\pi d^3} \sum_{j=1}^N |   \BCQ^{(j)} \BC^{(j)} |^2
\label{alg9} \\
- \fr{576 %768
}{\pi^3 %2 
d^6 } && \lim_{\tau \to 0+}  
\sum_{j=1}^N \int_{ %{\Bbb R}^3 
B^{(j)}_{(d/4) + \tau}} \int_{{\Bbb R}^3 \setminus B^{(j)}_{(d/4) - \tau} }\sum_{r=1}^3 \GX_r(\BX) \fr{\prt}{\prt X_r}  \Big(\fr{1}{|\BY - \BX|}\Big)  \Grad \cdot  \BGX(\BY) d\BY d \BX,
\nonumber 
\end{eqnarray}
where $\Xi_r$ are the components of the vector function $\BGX$ defined in (\ref{alg3a}).
 
The last integral is understood in the sense of distributions. 
%Taking into account that the components of the vector function $\BGX$ are constant in $\ov{B}^{(j)}_{(d/4)-\tau}$ 
Referring to the definition  \eq{alg3a}, integrating by parts, and taking the limit as
$\tau \to 0+$ we deduce that
the integral term in \eq{alg9} can be written as 
\begin{eqnarray}
%- \fr{768}{\pi^2 d^6 } && 
%\int_{{\Bbb R}^3 } \int_{{\Bbb R}^3 }\sum_{r=1}^3 \GX_r(\BX) \fr{\prt}{\prt X_r}  \Big(\fr{1}{|\BY - \BX|}\Big)  \Grad \cdot  \BGX(\BY) d\BY d \BX \\
%= 
\fr{576 %768
}{\pi^3%2 
d^6 } && 
\int_{{\Bbb R}^3 } \int_{{\Bbb R}^3 }  \fr{1}{|\BY - \BX|} \Big( \Grad \cdot \BGX(\BX) \Big)  \Big( \Grad \cdot  \BGX(\BY) \Big) d\BY d \BX \label{alg10}
\end{eqnarray}
Using \eq{alg9} and \eq{alg10} we arrive at \eq{alg5m}. 

\vspace{.2in}

%It now follows from Lemma \ref{lem1_alg}, that 
%\begin{equation}\label{eqidentQC}
%\langle \BCS \BQ \BC, \BQ\BC\rangle=\frac{32}{\pi d^3}\langle \BQ\BC, \BQ\BC\rangle\;.
%\end{equation}
%To see this, one needs to only apply integration by parts twice to the integral term in (\ref{alg5m}). Upon the first application to the inner integral in (\ref{alg5m}) we obtain
%\begin{eqnarray*}
%\langle   \BCS \BQ \BC, \BQ \BC  \rangle &&= \fr{ 2304 %768
%}{\pi^2 %2 
%d^6}   \int_{\mathbb{R}^3} \nabla_{\BY}\cdot \BGX(\BY) \int_{\mathbb{R}^3} \nabla_{\BY} \Big(\frac{1}{4\pi|\BX-\BY|}\Big)\cdot \BGX(\BX)\, d\BX\, d\BY \nonumber \\
%&& - \fr{16}{ \pi d^3}
%\sum_{j=1}^N   | \BCQ^{(j)}  \BC^{(j)} |^2.
%\label{alg5ma}
%\end{eqnarray*}
%Then, repeating the procedure now to the outer  and using the definition (\ref{alg3a})  gives
%\begin{eqnarray*}
%\langle   \BCS \BQ \BC, \BQ \BC  \rangle &&= \fr{ 2304 %768
%}{\pi^2 %2 
%d^6}   \int_{\mathbb{R}^3}  \BGX(\BY)\cdot \int_{\mathbb{R}^3} \nabla_{\BY} \Big(\nabla_{\BY} \Big(\frac{1}{4\pi|\BX-\BY|}\Big)\cdot \BGX(\BX)\Big)\, d\BX\, d\BY \nonumber \\
%&& - \fr{16}{ \pi d^3}
%\sum_{j=1}^N   | \BCQ^{(j)}  \BC^{(j)} |^2\\
%&&=\fr{ 2304 %768
%}{\pi^2 %2 
%d^6} \sum_{j=1}^N \int_{B^{(j)}_{d/4}}|\BGX(\BY)|^2\,d\BY- \fr{16}{ \pi d^3}
%\sum_{j=1}^N   | \BCQ^{(j)}  \BC^{(j)} |^2\\
%&&=\frac{32}{\pi d^3}\langle \BQ\BC, \BQ\BC\rangle\;.
%\label{alg5mb}
%\end{eqnarray*}
b) Let us introduce a piece-wise constant function 
\[\BCC(\Bx)=\left\{\begin{array}{ll}
\BC^{(j)}\;, & \quad \text{when } \Bx \in \overline{B^{(j)}_{d/4}}\;, \quad j=1, \dots, N\;,\\
0\;,&\quad \text{otherwise}\;.
\end{array}\right.
\]
According to the system (\ref{R3formalgeq9}), $\nabla \times \BCC(\Bx)=\BO$, and one can use the representation
\begin{equation}\label{CeqgrW}
\BCC(\Bx)=\nabla W(\Bx)
\end{equation}
where $W$ is a scalar function with compact support, and (\ref{CeqgrW}) is understood in the sense of distributions. We give a proof for the case when all voids are spherical, of diameter $\varepsilon$, and hence $\BCQ^{(j)}=-\frac{\pi}{4}\varepsilon^3 I_3$, where $I_3$ is the identity matrix. Then according to (\ref{alg9}) we have 
\begin{eqnarray*}
|\langle  \BCS \BQ \BC, \BQ \BC\rangle| &\le& \frac{16}{\pi d^3} \sum_{1\le j\le N} |\BQ^{(j)} \BC^{(j)}|^2+\frac{36\varepsilon^6}{\pi d^3}\Big| \int_{\mathbb{R}^3}\int_{\mathbb{R}^3} \Big(\nabla_{\BX} W(\BX) \cdot \nabla_{\BX}\Big(\frac{1}{|\BY-\BX|}\Big)\Big)\Delta_{\BY} W(\BY)\, d\BY d\BX \Big| \\
&\le&\frac{16}{\pi d^3} \sum_{1\le j\le N} |\BQ^{(j)} \BC^{(j)}|^2+\frac{144\varepsilon^6}{d^3} \sum_{1 \le j\le N} \int_{B^{(j)}_{d/4}} |\nabla W(\BY)|^2\, d\BY\\
&\le &\frac{\text{const}}{ d^3} \sum_{1 \le j \le N}|\BCQ^{(j)} \BC^{(j)}|^2\;.
\end{eqnarray*}\hfill$\Box$

{\bf Proof of Theorem \ref{thm1_alg}.}  Consider the equation \eq{alg6}. The %modulus 
absolute value of its right-hand side  does not exceed
$$
\langle  \BC, -\BQ \BC \rangle^{1/2}  \langle \BGT,%_f, 
-\BQ \BGT%_f
 \rangle^{1/2}.
$$
Using Lemma 1 and part b) of Lemma 2 we derive
\begin{eqnarray}
\langle \BC, -\BQ \BC \rangle - \text{const} d^{-3}  \langle -\BQ \BC, -\BQ \BC \rangle \leq \langle  \BC, -\BQ \BC \rangle^{1/2}  \langle \BGT,%_f, 
-\BQ \BGT%_f 
\rangle^{1/2}, \nonumber
\end{eqnarray}
%which leads 
leading to
\begin{eqnarray}
\Big(1 -  \fr{\text{const}  %\Gve^3
}{ d^3}  \fr{\langle -\BQ \BC, -\BQ \BC \rangle}{\langle \BC, -\BQ \BC \rangle}  \Big) \langle \BC, -\BQ \BC \rangle^{1/2}  \leq  \langle \BGT,%_f, 
-\BQ \BGT%_f 
\rangle^{1/2},
\nonumber
\end{eqnarray}
%The latter 
which implies
\begin{eqnarray}
\Big(1 -  \text{const} %\Gve^3
  \frac{\Gl_{max}}{d^{3}} \Big)^2 \langle \BC, -\BQ \BC \rangle  \leq  \langle \BGT,%_f, 
-\BQ \BGT%_f 
\rangle .
\label{alg11}
\end{eqnarray}
%where $\Gl_{max}$ is the largest eigenvalue of the positive definite matrix $-\BQ$, which characterizes the shapes of the small %voids. 
The proof is complete. $\Box$ 

Assuming that the eigenvalues of the matrices $-\BCQ^{(j)}$
%characterizing the shapes of individual scaled voids $\Go^{(j)}$,  
are strictly positive %definite 
and satisfy the inequality \eq{Gl_min}, we also find that Theorem \ref{thm1_alg} %also 
%leads to the estimate:
yields 

%{\bf Corollary 1.} 
\begin{coro} \label{coro1_alg} 
%{\em 
Assume that %$\Gve
% \Gl_{max} < ~d^3 $, and 
the inequalities {\rm \eq{Gl_min}} hold for $\Gl_{max}$ and $\Gl_{min}$.
% the maximal and minimal eigenvalues of the matrix $-\BQ$.  
 Then the vector coefficients $\BC^{(j)}$ in the system {\rm \eq{alg1}} satisfy the estimate %inequality
\beq
\sum_{1 \leq j \leq N} |\BC^{(j)}|^2 \leq \mbox{\rm const  } d^{-3} \| \Grad v%_f
\|^2_{L^2(\Go)},  %(1- \fr{16 \Gve^3}{\pi d^3} \Gl_{max} )^{-2} \fr{\Gl_{max}}{\Gl_{min}} \sum_{1 \leq j \leq N} |\Grad v_f(\BO^{(j)})|^2
\eequ{alg12}
where the constant %depends on the shape of the %scaled 
%does not depend on the size of the voids $F^{(j)}, j=1,\ldots, N.$ %}
depends only on the coefficients $A_1$ and $A_2$ in {\rm \eq{Gl_min}}.
%$\Gl_{min}$ and $\Gl_{max}$ are the minimum and maximum eigenvalues of the positive definite matrix $-\BP$.}
\end{coro}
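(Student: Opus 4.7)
The plan is to combine the quadratic form estimate of Theorem \ref{thm1_alg} with the spectral constraints \eq{Gl_min} on the polarization matrices, and then to replace the pointwise sum of $|\Grad v(\BO^{(j)})|^2$ by the $L^2$-norm of $\Grad v$ by means of the mean value property for harmonic functions.

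First I would check that Theorem \ref{thm1_alg} is directly applicable. The upper bound $\Gl_{max} \leq A_1 \Gve^3$ combined with the standing mesoscale assumption $\Gve < c\, d$ gives $\Gl_{max}/d^3 \leq A_1 c^3$, so by choosing $c$ small enough the factor $(1 - \mbox{const}\,\Gl_{max}/d^3)^{-2}$ on the right-hand side of \eq{alg4} is bounded by a constant depending only on $A_1$. Therefore Theorem \ref{thm1_alg} reduces to the inequality
\[
\sum_{j=1}^N (\BC^{(j)})^T(-\BCQ^{(j)})\BC^{(j)} \;\leq\; \mbox{const}\,\sum_{j=1}^N (\Grad v(\BO^{(j)}))^T(-\BCQ^{(j)})\Grad v(\BO^{(j)}).
\]

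Next I would exploit the eigenvalue bounds \eq{Gl_min} to strip the matrices $-\BCQ^{(j)}$ from both sides. On the left, the smallest eigenvalue of $-\BCQ^{(j)}$ is at least $A_2 \Gve^3$, so each summand is bounded below by $A_2 \Gve^3 |\BC^{(j)}|^2$. On the right, the largest eigenvalue is at most $A_1 \Gve^3$, so each summand is bounded above by $A_1 \Gve^3 |\Grad v(\BO^{(j)})|^2$. Cancelling the common factor $\Gve^3$ yields
\[
\sum_{j=1}^N |\BC^{(j)}|^2 \;\leq\; \mbox{const}\,(A_1/A_2)\sum_{j=1}^N |\Grad v(\BO^{(j)})|^2,
\]
where the constant on the right depends only on the coefficients $A_1$ and $A_2$ from \eq{Gl_min}.

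The final step is a standard mean value argument. Because $f$ is supported at a positive distance from the cloud $\Go$, the function $v$ is harmonic in $\Go$, and hence so is every Cartesian component $\prt_i v$. The geometric assumption $\mbox{dist}(\cup_j F^{(j)}, \prt \Go) \geq 2d$ guarantees that the ball $B_d^{(j)}$ of radius $d$ centred at $\BO^{(j)} \in F^{(j)}$ is contained in $\Go$, while the condition that the mutual distances among the $\BO^{(j)}$ are at least $2d$ makes these balls pairwise disjoint. Applying the mean value theorem to each $\prt_i v$ on $B_d^{(j)}$ followed by the Cauchy--Schwarz inequality yields
\[
|\Grad v(\BO^{(j)})|^2 \;\leq\; \frac{\mbox{const}}{d^3}\int_{B_d^{(j)}} |\Grad v(\BY)|^2 \, d\BY,
\]
and summation over the disjoint balls $B_d^{(j)} \subset \Go$ produces the bound $\sum_j |\Grad v(\BO^{(j)})|^2 \leq \mbox{const}\, d^{-3}\|\Grad v\|^2_{L^2(\Go)}$. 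Combining this with the previous inequality proves \eq{alg12}.

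No genuine obstacle appears in this argument: the difficult part (controlling the off-diagonal interaction matrix $\BCS \BQ$ through the Newton potential identity of Lemma \ref{lem1_alg}) is already absorbed into Theorem \ref{thm1_alg}, so what remains is a matching of eigenvalue bounds and a direct appeal to the mean value property, both guaranteed by the geometric separation hypothesis.
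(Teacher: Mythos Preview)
Your argument is correct and follows essentially the same route as the paper's own proof: apply the quadratic-form estimate \eq{alg4}, use the eigenvalue bounds \eq{Gl_min} to pass from $\langle \BC, -\BQ \BC\rangle$ and $\langle \BGT, -\BQ \BGT\rangle$ to Euclidean norms, and then invoke the mean value theorem on disjoint balls to replace $\sum_j |\Grad v(\BO^{(j)})|^2$ by $d^{-3}\|\Grad v\|_{L^2(\Go)}^2$. The only cosmetic difference is that the paper uses balls of radius $d/4$ where you use radius $d$, and you spell out explicitly why the hypothesis $\Gl_{max}<\mbox{const}\,d^3$ of Theorem~\ref{thm1_alg} is met.
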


\begin{proof}
According to the inequality \eq{alg4} of Theorem \ref{thm1_alg} we deduce
\beq
\Gl_{min} \sum_{1 \leq j \leq N} |\BC^{(j)}|^2 \leq (1- \fr{\text{const} %\Gve^3
}{ d^3} \Gl_{max} )^{-2} \Gl_{max} \sum_{1 \leq j \leq N} |\Grad v%_f
(\BO^{(j)})|^2.
\eequ{alg12aa}
%where  $\Gl_{min}$ and $\Gl_{max}$ are the minimum and maximum eigenvalues of the positive definite matrix $-\BP$, and and they satisfy \eq{Gl_min} . %$\Gl_{min} >0$. 
We note  that $v$ %_f$ 
is harmonic in a neighbourhood of $\ov{\Go}$. Applying the mean value theorem
for harmonic functions together with the Cauchy inequality we write
$$
|\Grad v%_f
(\BO^{(j)})|^2 \leq \fr{48}{ \pi d^{3}} \|\Grad v%_f
\|^2_{L_2(B^{(j)}_{d/4})}.
$$ 
Hence, it 
%then 
follows from \eq{alg12aa} that
\begin{eqnarray}
 \sum_{1 \leq j \leq N} |\BC^{(j)}|^2 && \leq d^{-3} (1- \fr{\text{const} %\Gve^3
 }{ d^3} \Gl_{max} )^{-2}  \fr{48}{ \pi}  \fr{\Gl_{max}}{\Gl_{min}}    \sum_{1 \leq j \leq N} \|\Grad v%_f
 \|^2_{L_2(B^{(j)}_{d/4})} \nonumber \\
 && \leq d^{-3 }\left((1- \fr{\text{const} %\Gve^3
 }{ d^3} \Gl_{max} )^{-2} \fr{48}{\pi} \fr{\Gl_{max}}{ \Gl_{min}}  \right) \|\Grad v%_f
 \|^2_{L_2(\Go)},
\label{alg12b}
\end{eqnarray}
which %leads to 
is the required estimate \eq{alg12}.
\end{proof}

%\end{document}
\section{
%Approximation of $u_N$ in $\mathbb{R}^3 \backslash \overline{\cup_{j=1}^N F %\omega_\varepsilon
%^{(j)}}$
Energy error estimate in the case $\GO= {\Bbb R}^3$}\label{R3energyesy}
In this section we %give the main 
prove the result concerning the asymptotic approximation of $u_N$ for the 
%unbounded 
perforated domain $\Omega_N=\mathbb{R}^3 \backslash \overline{\cup_{j=1}^N F%\omega_\varepsilon
^{(j)}}$. The changes in the argument, necessary for the treatment of a general domain, will be described in
Section \ref{boundformalg}.  
%\begin{thm}\label{R3thmasy}
%Let the small parameters $\varepsilon$ and $d$ defined in Section $\ref{MainNotmeso}$, satisfy the inequality
%\[\varepsilon< c\, d   , \] % |\log d|^{-1/3}\;,\]
%where $c$ is a sufficiently small absolute constant. Then the matrix $\mathbf{I}- %\varepsilon^3 
%\BCS \mathbf{Q}$ defined in Section $\ref{R3alg}$ is invertible and the solution $u_N(\X)$ to the problem $(\ref{R3formalgeq1})$--$(\ref{R3formalgeq3})$ is defined by the asymptotic formula
%\begin{equation}\label{R3energyest1}
%u_N(\X)=v_f(\X)+%\varepsilon 
%\sum_{k=1}^N \boldsymbol{C}^{(k)}\cdot \BCD^{(k)}(\Bx)+\mathcal{R}_N(\X)\;,
%\end{equation}
%where $\boldsymbol{C}^{(k)}=(C^{(k)}_1, C^{(k)}_2, C^{(k)}_3)^T$ solve the algebraic system {\rm \eq{alg1}}.%and the entries $C^{(k)}_i$ are given by $(\ref{R3formalgeq10})$. 
%The remainder $\mathcal{R}_N$ satisfies the energy estimate
%\begin{equation}\label{R3energyest2}
% \| \nabla \mathcal{R}_N\|^2_{L_2(\Omega_N)} \le \text{\emph{const} } \Big\{ \varepsilon^{11%8
% }d^{-11%8
% } + \varepsilon^{5}d^{-3} \Big\}  \| \nabla v_f\|^2_{L_2(   %\omega
%{\Bbb R}^3 )} %+ d^2  \big\| v_f \big\|^2_{L_2(\Go_h)} \Big\}  ,
% %+d^2 \max_{1 \le i, j \le 3}\Big\| \frac{\partial^2 v_f}{\partial x_i \partial x_j} \Big\|^2_{L_\infty(\omega)}\Big\}
 %\;.
% \end{equation}
%% where $\Go_h$ is the extended cloud occupied by small voids, as defined in {\rm \eq{om_h}}.
%\end{thm}

%\begin{proof}
{\em Proof of Theorem {\rm \ref{thm1_alg_f_inf}}.}
\emph{a) %The problem 
Neumann problem for the remainder.} The remainder term  $\CR_N$ in %\eq{R3energyest1} 
\eq{introeq1} is a harmonic function in $\GO_N$, which vanishes at infinity and satisfies the % following  
boundary conditions % on the surfaces of small voids: 
\beq
\fr{\prt \CR_N}{\prt n}(\Bx) = - \Big(\Grad v(\Bx) + \BC^{(j)} \Big) \cdot \boldsymbol{n}^{(j)}(\Bx) - \sum_{\substack{ k \neq j \\ 1 \leq k \leq N}} \BC^{(k)} \cdot \fr{\prt}{\prt n} \BCD^{(k)}(\Bx), %\BGx_k), 
~~\mbox{when}~~ \Bx \in \prt %\Go_\Gve
F^{(j)}, j = 1, \ldots, N.
\eequ{rem1} 
Since $\mbox{supp } f $ %has an empty intersection with 
is separated from $%\ov{\Go}_\Gve
F^{(j)}, j = 1, \ldots, N,$ and since $\BCD^{(j)}, ~j =1,\ldots,N,$ satisfy  \eq{modfieldeq1} we have 
\beq
\int_{\prt %\Go_\Gve
F^{(j)}}  \fr{\prt \CR_N}{\prt n}(\Bx) dS_\Bx = 0, ~ j=1,\ldots, N.
\eequ{rem2}
\emph{b) Auxiliary functions.} Throughout the proof we use the notation $B^{(k)}_\rho=\{\X:|\X-\Oj^{(k)}|<\rho\}$.
% and $B_R=\{\X:|\X%-\Oj
%|<R\}$,  where $R$ is sufficiently large and  $\text{dist}(\omega, \partial B_R)>2d$. 
We %begin by introducing 
introduce %some 
auxiliary functions which will help us to obtain %(\ref{R3energyest2}). 
\eq{introeq2}. Let 
\begin{eqnarray} \nonumber
\Psi_k(\X)&=&v%_f
(\X)
-v%_f
(\BO^{(k)})-(\X-\Oj^{(k)})\cdot \nabla v(\Oj^{(k)})+ %\varepsilon 
\sum_{\substack{ 1 \le j \le N \\ j \ne k  }} \boldsymbol{C}^{(j)} \cdot \BCD^{(j)}(\Bx)  %\XI_j)
\\
&&- %\varepsilon^3 
\sum_{\substack{    1 \le j \le N \\   j \ne k }} (\X-\Oj^{(j)})\cdot T(\Oj^{(k)}, \Oj^{(j)}) \BCQ^{(j)}\boldsymbol{C}^{(j)}  %- \psi_k 
 \;, 
\label{R3energyest3}
\end{eqnarray}
for all $\Bx \in \GO_N$ and $k=1,\dots,N$. 
%The constant $\psi_k$ in \eq{R3energyest3} is chosen in such a way that 
%\beq
%\int_{B_{3 \Gve}^{(k)}} \Psi_k(\Bx) d \Bx = 0.
%\eequ{psi_k}
%For a given $k$, the 
Every function $\Psi_k$ satisfies
\begin{equation}\label{R3energyest4}
-\Delta \Psi_k(\X)=f(\X)\;, \quad \X \in \Omega_N\;,
\end{equation}
and since $\omega    \cap    \text{supp } f=\varnothing$, we %have 
see that $\Psi_k, ~k =1, \ldots, N,$ are harmonic in $\Go$. % \cap \GO_N$.
%Also, since 
Since the coefficients $\BC^{(j)}$ satisfy %the 
system \eq{alg1}, %on the boundary $\prt F^{(k)}$ 
we obtain %have
\begin{equation}\label{R3energyest5}
\Dn{\Psi_k}{}(\X)+ \Dn{\mathcal{R}_N}{}(\X)=0\;, \quad \X \in \partial %\omega_\varepsilon
F^{(k)}\;.
\end{equation}
%and $\Psi_k(\Bx) \to 0$ as $|\Bx| \to \infty.$ 
%Also it follows from the definition \eq{R3energyest3} that 
and according to \eq{rem2} the functions $\Psi_k$ have %a 
zero flux through the boundaries of small voids $%\Go_\Gve
F^{(k)}$, i.e. 
%%and the zero flux condition
\begin{equation}\label{R3energyest6}
\int_{\partial 
%\omega_\varepsilon
F^{(k)}} \Dn{\Psi_k}{}(\X)\, d\X=0\;, k=1, \dots, N\;.
\end{equation}

Next, we introduce smooth cutoff functions $$\chi_\Gve^{(k)}: \Bx \to \chi%_k(%\fr{
((\Bx - \BO^{(k)})/ %}{ 
\Gve %d
%}
), ~ k= 1, \ldots, N,$$ 
%in such a way that % $\chi_k(t)$ are smooth, 
%$$\chi_k\Big(\fr{\Bx - \BO^{(k)}}{\Gve %d
%}\Big) = 0 ~~\mbox{when}~ \Bx \in {\Bbb R}^3 \setminus \ov{B}_{%10 
%3 \Gve %d/4
%}^{(k)},$$ and
%$$\chi_k\Big(\fr{\Bx - \BO^{(k)}}{\Gve %d
%}\Big) = 1 ~~\mbox{when}~ \Bx \in  B_{%3 
%2 \Gve}^{(k)}.$$ 
%It is also assumed that all derivatives of $\chi_k(\fr{\Bx - \BO^{(k)}}{d})$ are equal to zero when
%$|\Bx- \BO^{(k)}| = d/4.$
equal to $1$ on $B_{2 \Gve}^{(k)}$ and vanishing outside $B_{3 \Gve}^{(k)}$. 
%It is assumed that $\chi_k(t) \in C_0^\infty(B_{3 %10 %1/4
%})$.
%We also use the notation $\chi^{(k)}_\Gve(\Bx) =  \chi%_k(%\fr{
%((\Bx - \BO^{(k)})/ %}{ 
%\Gve %d
%%}
%)$. 
Then %we note that  
by \eq{R3energyest5} we have  %\eq{R3energyest3} and \eq{alg1}%the function
\beq
\fr{\prt}{\prt n}\Big(\CR_N(\Bx) + \sum_{1 \leq k \leq N} \chi^{(k)}_\Gve(\Bx) %k\Big(%d^{-1} 
%(\Bx - \BO^{(k)})/\Gve \Big) 
\Psi_k(\Bx)\Big) = 0 ~~\mbox{on}~~ \prt %\Go_\Gve
F^{(j)}, ~ j = 1, \ldots, N.
\eequ{rem_1a}
%satisfies the homogeneous Neumann boundary conditions on $\prt \Go_\Gve^{(j)}, ~ j = 1, \ldots, N,$ and vanishes at infinity.

%Next, we introduce the \emph{relative capacitary potentials} $P_k$, $k=1, \dots, N$, which solve 
%\begin{equation}\label{R3energyest7}
%\left. \begin{array}{c}
%\Delta P_k(\X)=0\;, \quad \X \in B^{(k)}_{d/4} \backslash \bar{\omega}_\varepsilon^{(k)}\;,\\ \\
% P_k(\X)=1\;, \quad \X \in \partial \omega_\varepsilon^{(k)}\;,\\ \\
% P_k(\X)=0\;, \quad \text{ when }| \X-\Oj^{(k)}|=d/4\;, 
% \end{array}\right\}
% \end{equation}
% for $k=1, \dots, N$ and we extend the functions $P_k$, $k=1, \dots, N$ by zero outside $\overline{B^{(k)}_{d/4}}$. %In \cite{MMMeso}, it was shown that $P_k$ satisfies the estimate
% By the maximum principle 
% \begin{equation}\label{R3energyest8}
% |P_k(\X)| \le \text{const } \frac{\varepsilon}{|\X-\Oj^{(k)}|}\;, \text{for }\X \in B^{(k)}_{d/4} \backslash \overline{B^{(k)}_{3\varepsilon}}\;.
% \end{equation}
% %\begin{equation}\label{R3energyest9}
% %|\nabla P_k(\X)| \le \text{const } \frac{\varepsilon}{|\X-\Oj^{(k)}|^2}\;, \text{for }\X \in B^{(k)}_{d/4} \backslash \overline{B^{(k)}_{3\varepsilon}}\;.
%% \end{equation}
% %In parts $b)$ and $c)$ we use the notation 
% %\[ \BRN=B_R \backslash \overline{\cup_{j=1}^N \omega_\varepsilon^{(j)}}\;. \]
 \emph{c) %The energy of $\mathcal{R}_N$ versus %$P_k$ and 
 %$\Psi_k$.
 Estimate of the energy integral of  $\mathcal{R}_N$  in terms of  $\Psi_k$.}
 %We use the notation 
 %\[ \BRN=B_R \backslash \overline{\cup_{j=1}^N %\omega_\varepsilon
 %F^{(j)}}\; , \]
% where $B_R = \{\Bx: |\Bx| <R\}$, and  $R$ is sufficiently large. 
%% and %we 
% %derive an expression which allows us to estimate the energy of $\mathcal{R}_N$. % using the functions appearing in part $a)$ of the proof. 
 Integrating by parts in $\GO_N$ and using the definition of %the cut-off functions 
 $\chi^{(k)}_\Gve %k
 $, 
 %together with the harmonicity of $\mathcal{R}_N$,
  we write   the identity
  \begin{eqnarray} %uation}
   \int_{\GO_N %\BRN
   }\nabla \mathcal{R}_N
   %(\X) 
   \cdot \nabla \Big( \mathcal{R}_N
   %(\X)  
    + \sum_{1 \leq k \leq N} \chi^{(k)}_\Gve %k
    %\Big((\Bx - \BO^{(k)})/  \Gve %d
   %\Big) 
   \Psi_k
   %(\Bx) 
   \Big)\, d\X %= I_1 - I_2,
%   \nonumber \\
 % & = - \int_{\BRN}\mathcal{R}_N(\X)  \GD \Bigg( \mathcal{R}_N(\X)   + \sum_{1 \leq k \leq N} \chi_k\Big( (\Bx - \BO^{(k)})/d\Big) \Psi_k(\Bx) \Bigg)\, d\X %\nonumber  \\
 %  &&~~~ 
 %+ \int_{\prt B_R} \CR_N(\Bx) \fr{\prt}{\prt n} \CR_N(\Bx) d s_\X. & 
 %\label{rem_1}
  %\end{equation}
% where 
 %\beq
 %I_1 =  \int_{\prt B_R} \CR_N(\Bx) \fr{\prt}{\prt n} \CR_N(\Bx) d s_\X,
 %\eequ{rem_1A}
 %and 
 %\beq
 %I_2 
 = -  \int_{\GO_N %\BRN
 }\mathcal{R}_N
 %(\X) 
  \GD \Big( \mathcal{R}_N
  %(\X)   
  + \sum_{1 \leq k \leq N} \chi^{(k)}_\Gve %k\Big( (\Bx - \BO^{(k)})/ \Gve %  d
 %\Big) 
 \Psi_k
 %(\Bx) 
 \Big)\, d\X,
  %\eequ
  \label{rem_1B}
 \end{eqnarray}
 %Due to (\ref{R3formalgeq8}), for the first integral on the right-hand side of \eq{rem_1} %, due to (\ref{R3formalgeq8}) 
% we have
%\beq 
%%\Big|\int_{\partial B_R} \mathcal{R}_N(\X) \Dn{\mathcal{R}_N}{}(\X)\, ds_\X\Big|
%|I_1|=O(R^{-1}),\eequ{rem_1b}
%whereas 
which is equivalent to 
 \begin{eqnarray}
   \int_{\GO_N 
   }\big| \nabla \mathcal{R}_N\big|^2 d\X + \sum_{1 \leq k \leq N}  \int_{B_{3 \Gve}^{(k)} \setminus \ov{F}^{(k)}
   }
  \nabla \mathcal{R}_N \cdot \nabla \big(  \chi^{(k)}_\Gve  
   \Psi_k
   \big)\, d\X = -  \sum_{1 \leq k \leq N} \int_{ B_{3 \Gve}^{(k)} \setminus \ov{F}^{(k)}  %\GO_N 
 }\mathcal{R}_N
  \GD \big(  \chi^{(k)}_\Gve  \Psi_k
 \big)\, d\X,
  \label{rem_1BB}
 \end{eqnarray}
since $\mathcal{R}_N$ is harmonic in $\Omega_N$.

%Let $\tilde{\CR}_N$ denote 
We preserve the notation $\CR_N$ for an extension of $\CR_N$ onto the union of voids $F^{(k)}$ with preservation of the class $W^{1,2}$. Such an extension can be constructed
by using only values of $\CR_N$ on the sets $B_{%4 
2 \Gve}^{(k)} \setminus \ov{F}^{(k)}$ in such a way that
\beq  
\| \Grad %\tilde
{\CR}_N \|_{L^2(B_{%4 
2 \Gve}^{(k)})} \leq \mbox{\rm const} \| \Grad \CR_N \|_{L^2(B_{%4 
2 \Gve}^{(k)} \setminus \ov{F}^{(k)})}.
\eequ{CN}
The above fact follows by dilation $\Bx \to \Bx / \Gve$ from the well-known extension theorem for domains with Lipschitz boundaries (see Section 3 of Chapter 6 in \cite{stein}). %, Section .....). 
We 
%choose the constant
%$\Gamma^{(k)}$ in \eq{rem_45} 
shall use the notation $\ov{\CR}^{(k)}$ for the mean value of $%\tilde
{\CR}_N$ on $B_{%10 
3 \Gve}^{(k)}$.

%Since $\GD \CR_N(\Bx)= 0 $ on $\GO_N$,  
The % second 
integral  on the right-hand side of \eq{rem_1BB} can be written as
\begin{eqnarray}
%&&\int_{\BRN}\mathcal{R}_N(\X)  \GD \Big( \mathcal{R}_N(\X)   + \sum_{1 \leq k \leq N} \chi_k\Big(d^{-1} (\Bx - \BO^{(k)})\Big) \Psi_k(\Bx) \Big)\, d\X \nonumber \\
%&&%I_2 = 
-\sum_{1 \leq k \leq N}   \int_{B_{%10 
3 \Gve %d/4
}^{(k)} \setminus \ov{F %\Go
}%_\Gve
^{(k)}} \CR_N
%(\X) 
\GD \big( \chi^{(k)}_\Gve %k\Big((\Bx - \BO^{(k)})/ \Gve %d
%\Big) 
\Psi_k
%(\Bx) 
\big)\, d\X
%\label{rem_2} \\
%&& 
= - \sum_{1 \leq k \leq N}    \int_{B_{%10 
3 \Gve %d/4
}^{(k)} \setminus \ov{F %\Go
}%_\Gve
^{(k)}} (\CR_N%(\X)
- 
\ov{\CR}
%\Gamma
^{(k)}%_N
)  \GD \big( \chi^{(k)}_\Gve %k\Big( (\Bx - \BO^{(k)})/ \Gve %d
%\Big) 
\Psi_k
%(\Bx) 
\big)\, d\X, \label{rem_2}% \nonumber
\end{eqnarray}
%where $%\ov{\CR}
%\Gamma^{(k)}%_N
%$ is a constant to be defined later. %the average of $\CR_N$ over $B_{d/4}^{(k)} \setminus \ov{F%\Go
%}%_\Gve
%^{(k)}$. 
In the derivation of %the above equation 
\eq{rem_2} we have used that
\begin{eqnarray}
 &&\int_{B_{%10 
 3 \Gve %d/4
 }^{(k)} \setminus \ov{F %\Go
 }%_\Gve
 ^{(k)}}  \GD \Big( \chi^{(k)}_\Gve %k\Big( (\Bx - \BO^{(k)})/\Gve %d
 %\Big) 
 \Psi_k %(\Bx)
  \Big)\, d\X % \nonumber \\
 %&& 
 = \int_{\prt F %\Go_\Gve
 ^{(k)}} \fr{\prt   \Psi_k }{\prt n} 
 %(\Bx) 
  d S_\X = 0 \label{rem_3}
\end{eqnarray}
according to  \eq{R3energyest6} and  the definition of $\chi^{(k)}_\Gve %k
$.

%Taking the limit in \eq{rem_1}  as $R \to \infty$ and using \eq{rem_1b} and \eq{rem_2} we deduce
Owing to \eq{rem_1B%b
} and \eq{rem_2}, we can write 
\begin{eqnarray}
\| \Grad \CR_N \|^2_{L^2(\GO_N)} &&\leq  \GS_1 + \GS_2 , \label{rem_4}
\end{eqnarray} 
where
\beq
\GS_1 = \sum_{1 \leq k \leq N} \Big|  \int_{B_{%10 
3 \Gve %d/4
}^{(k)} \setminus \ov{F% \Go
}%_\Gve
^{(k)}}  \Grad \CR_N
%(\Bx) 
\cdot \Grad  \big( \chi^{(k)}_\Gve %k\big( (\Bx - \BO^{(k)})/ \Gve %d
%\big) 
\Psi_k
%(\Bx) 
\big) d\X  \Big| ,
\eequ{rem_44}
and
\beq
\GS_2= \sum_{1 \leq k \leq N} \Big|  \int_{B_{%10 
3 \Gve %d/4
}^{(k)} \setminus \ov{F %\Go
}%_\Gve
^{(k)}}  (\CR_N
%(\X)
- %\Gamma %
\ov{\CR}
^{(k)} )
%_N)
  \GD \big( \chi^{(k)}_\Gve %k\big( (\Bx - \BO^{(k)})/ \Gve %d
%\big)
( \Psi_k%(\Bx) 
- \ov{\Psi}_k  ) \big)\, d\X \Big| ,
\eequ{rem_45}
where 
%the constant 
$\ov{\Psi}_k$ is 
%chosen as the average 
the mean value of $\Psi_k$ over the ball $B_{3 \Gve}^{(k)}$.
Here, we have taken into account that by harmonicity of $\CR_N$,  \eq{rem2} and definition of $\chi_\Gve^{(k)}$
$$
\int_{B_{3 \Gve}^{(k)}\setminus \ov{F}^{(k)}} \GD \Big( \CR_N - \ov{\CR}^{(k)}\Big)
\chi^{(k)}_\Gve %k\Big(  \fr{\Bx - \BO^{(k)}}{\Gve} \Big) \Big)
d \Bx=\int_{B_{3 \Gve}^{(k)}} \GD %\Big(
\Big( \CR_N - \ov{\CR}^{(k)}\Big) \chi^{(k)}_\Gve %k\Big(  \fr{\Bx - \BO^{(k)}}{\Gve} \Big) \Big) 
d \Bx = 0 .
%$$
%\mbox{       and   }
%%$$
%\int_{B_{3 \Gve}^{(k)}\setminus \ov{F}^{(k)}} \CR_N
%%(\Bx) 
%\GD \chi^{(k)}_\Gve %k\Big(  \fr{\Bx - \BO^{(k)}}{\Gve} \Big) 
%d \Bx = 0,
$$
%where the last equality holds due to the fact that $\CR_N$ is harmonic in $B_{3 \Gve}^{(k)}\setminus \ov{F}^{(k)}$, and it
%satisfies \eq{rem2}. 

%Due to 
By the Cauchy inequality, the first sum in \eq{rem_4} allows for the estimate
\begin{eqnarray}
%\sum_{1 \leq k \leq N} \Big|  \int_{B_{d/4}^{(k)} \setminus \ov{\Go}_\Gve^{(k)}}  \Grad \CR_N(\Bx) \cdot \Grad  \big( \chi_k\big(d^{-1} (\Bx - \BO^{(k)})\big) \Psi_k(\Bx) \big) d\X  \Big| 
\GS_1 
%\leq
%\sum_{1 \leq k \leq N}  \Bigg( \int_{B_{%10 
%3 \Gve %d/4
%}^{(k)} \setminus \ov{F %\Go
%}%_\Gve
%^{(k)}}  |\Grad \CR_N
%%(\Bx)
%|^2 d\X  \Bigg)^{1/2}  % \nonumber \\
%%\times
%\Bigg( \int_{B_{%10
%3 \Gve %d/4
%}^{(k)} \setminus \ov{F  %\Go
%}%_\Gve
%^{(k)}}  \Big| \Grad  \big( \chi^{(k)}_\Gve %k\big( (\Bx - \BO^{(k)})/\Gve %d
%%\big) 
%\Psi_k
%%(\Bx) 
%\big) \Big|^2 d\X  \Bigg)^{1/2}   \nonumber \\
\leq \Big( \sum_{1 \leq k \leq N}  \|   \Grad \CR_N   \|^2_{L^2(B_{%10 
3 \Gve %d/4
}^{(k)} \setminus \ov{F %\Go
}%_\Gve
^{(k)})}    \Big)^{1/2} \Big( \sum_{1 \leq k \leq N}  \Big\|   \Grad  \big( \chi^{(k)}_\Gve %_k\big( (\Bx - \BO^{(k)})/ \Gve %d
%\big) 
\Psi_k
%(\Bx) 
\big)    \Big\|^2_{L^2(B_{%10 
3 \Gve %d/4
}^{(k)} \setminus \ov{F %\Go
}%_\Gve
^{(k)})}    \Big)^{1/2} \;.
\l{rem_55}
\end{eqnarray}
Furthermore, using the inequality 
\beq
 \sum_{1 \leq k \leq N}  \|   \Grad \CR_N   \|^2_{L^2(B_{3 %10 
 \Gve %d/4
 }^{(k)} \setminus \ov{F %\Go
}%_\Gve
^{(k)})}
\leq \|   \Grad \CR_N   \|_{L^2(\GO_N)}^2,
\eequ{rem_56}
together with \eq{rem_55}, we deduce
\begin{eqnarray}
\GS_1 
\leq \|   \Grad \CR_N   \|_{L^2(\GO_N)} \Big( \sum_{1 \leq k \leq N} \big\|   \Grad  \big( \chi^{(k)}_\Gve %k\big( (\Bx - \BO^{(k)})/ \Gve %d
%\big) 
\Psi_k %(\Bx) 
\big)    \big\|^2_{L^2(B_{%10 
3 \Gve %d/4
}^{(k)} \setminus \ov{F %\Go
}%_\Gve
^{(k)})}    \Big)^{1/2}. \label{rem_5}
\end{eqnarray} 
%In turn, with the use of the Poincar\'{e} inequality  the second sum in \eq{rem_4} can be estimated as follows

%Let $\tilde{\CR}_N$ denote an extension of $\CR_N$ onto the union of voids $F^{(k)}$ with preservation of the class $W_2^1$. Such an extension can be constructed
%by using only values of $\CR_N$ on the sets $B_{%4 
%2 \Gve}^{(k)} \setminus \ov{F}^{(k)}$ in such a way that
%\beq  
%\| \Grad \tilde{\CR}_N \|_{L_2(B_{%4 
%2 \Gve}^{(k)})} \leq \mbox{\rm const} \| \Grad \CR_R \|_{L_2(B_{%4 
%2 \Gve}^{(k)} \setminus \ov{F}^{(k)})}.
%\eequ{CN}
%The above fact follows by dilation $\Bx \to \Bx / \Gve$ from the well-known extension theorem for domains with Lipschitz boundaries (see \cite{stein}, Section .....). 
%
%We choose the constant
%$\Gamma^{(k)}$ in \eq{rem_45} as the mean value of $\tilde{\CR}_N$ on $B_{%10 
%3 \Gve}$. 

Similarly to \eq{rem_55}, the second sum in \eq{rem_4} can be estimated as  %follows
\begin{equation}
%&& \sum_{1 \leq k \leq N} \Big|  \int_{B_{d/4}^{(k)} \setminus \ov{\Go}_\Gve^{(k)}}  (\CR_N(\X)-\ov{\CR}^{(k)}_N)  \GD \big( \chi_k\big(d^{-1} (\Bx - \BO^{(k)})\big) \Psi_k(\Bx) \big)\, d\X \Big| \nonumber \\
\GS_2  \leq  \sum_{1 \leq k \leq N}  \Big( \int_{B_{%10
3 \Gve %d/4
}^{(k)} %\setminus \ov{F %\Go
%}%_\Gve
%^{(k)}
}  (%\tilde
{\CR}_N %(\X)
-   \ov{\CR}
%\Gamma
^{(k)}%_N
)^2 d \X \Big)^{1/2}
 \Big( \int_{B_{%10 
 3 \Gve %d/4
 }^{(k)} \setminus \ov{F %\Go
 }%_\Gve
 ^{(k)}}     \big( \GD ( \chi^{(k)}_\Gve %k\big( (\Bx - \BO^{(k)})/  \Gve %d
 %\big) 
 ( \Psi_k
 %(\Bx) 
 - \ov{\Psi}_k ) ) \big)^2\, d\X   \Big)^{1/2} .  %\nonumber \\
\label{rem_66}
 \end{equation}
% Then using the
% and by 
By the Poincar\'{e} inequality for the ball $B^{(k)}_{%10 
 3 \Gve}$ %, together with \eq{rem_56}, 
 \beq
%\|\Psi_k - \ov{\Psi}_k \|_{L_2(B^{(k)}_{3 \Gve})}^2 \leq \mbox{\rm const  } \Gve^2 \| \Grad \Psi_k \|_{L_2(B^{(k)}_{3 \Gve})}^2
\|\CR_N - \ov{\CR}^{(k)} \|_{L^2(B^{(k)}_{3 \Gve})}^2 \leq \mbox{\rm const  } \Gve^2 \| \Grad \CR_N \|_{L^2(B^{(k)}_{3 \Gve})}^2
\eequ{Psi_k_ball} 
 we obtain
% \begin{eqnarray}
 $$
 \GS_2 \leq \mbox{const}~ %d
 \Gve  ~\Big( \sum_{1 \leq k \leq N}  \|   \Grad %\tilde
 {\CR}_N   \|^2_{L^2(B_{%10 
 3 \Gve %d/4
 }^{(k)}) %\setminus \ov{F %\Go
% }%_\Gve
% ^{(k)})
}    \Big)^{1/2} \Big( \sum_{1 \leq k \leq N}  \int_{B_{%10 
3 \Gve %d/4
}^{(k)} \setminus \ov{F  %\Go
 }% _\Gve
 ^{(k)}}     \big( \GD ( \chi^{(k)}_\Gve %k\big((\Bx - \BO^{(k)})/  \Gve %d 
 %\big) 
 ( \Psi_k %(\Bx) 
 - \ov{\Psi}_k ) ) \big)^2\, d\X \Big)^{1/2},
 $$ %\nonumber \\
  which does not exceed
 \begin{eqnarray}
 &%\leq
 & \mbox{const}~ \Gve %d 
 ~\|   \Grad \CR_N   \|_{L^2(\GO_N)}    \Big( \sum_{1 \leq k \leq N}  \int_{B_{%10 
 3 \Gve %d/4
 }^{(k)} \setminus \ov{F %\Go
 }%_\Gve
 ^{(k)}}     \big( \GD \big( \chi^{(k)}_\Gve %k\big((\Bx - \BO^{(k)})/ \Gve \big) 
 ( \Psi_k
 %(\Bx) 
 - \ov{\Psi}_k ) \big) \big)^2\, d\X \Big)^{1/2}  , \label{rem_6}
\end{eqnarray} 
because of \eq{CN}. 
Combining \eq{rem_4}--\eq{rem_6} and dividing both sides of \eq{rem_4} by $\| \Grad \CR_N \|_{L^2(\GO_N)}$  we arrive at
\begin{eqnarray}
 \| \Grad \CR_N \|_{L^2(\GO_N)} &&\leq \Big( \sum_{1 \leq k \leq N}  \Big\|   \Grad  \big( \chi^{(k)}_\Gve %k\big( (\Bx - \BO^{(k)})/ \Gve %d 
% \big) 
( \Psi_k%(\Bx) 
- \ov{\Psi}_k  ) \big)   \Big\|^2_{L^2(B_{%10 
 3 \Gve %d/4
 }^{(k)} %\setminus \ov{
 %F %\Go
 %}_\Gve
 %^{(k)}
 )}    \Big)^{1/2} \nonumber \\
&& + \mbox{const} ~ \Gve %d 
~  \Big( \sum_{1 \leq k \leq N}  \int_{B_{%10 
3 \Gve %d/4
}^{(k)} %\setminus \ov{F %\Go
%}%_\Gve
%^{(k)}
}     \big\{ (\Psi_k
%(\Bx)
-\ov{\Psi}_k) \GD %\Big( 
\chi^{(k)}_\Gve %k \big( (\Bx - \BO^{(k)})/ \Gve %d  
%\big) 
%\Big) 
%\nonumber \\
%&&  
+ 2 \Grad %\bigg( 
\chi^{(k)}_\Gve %k \big( (\Bx - \BO^{(k)}) /  \Gve %d 
%\big) 
%\bigg) 
\cdot \Grad \Psi_k
%(\Bx) 
%\bigg) 
\big\}^2\, d\X \Big)^{1/2}, \label{rem_7}
\end{eqnarray}
%where $\ov{\Psi}_k$ denotes the average of $\Psi_k$ over the ball $B_{3 \Gve}^{(k)}$. In turn, the estimate \eq{rem_7} 
which leads to
\begin{eqnarray}
&& \| \Grad \CR_N \|^2_{L^2(\GO_N)} \leq \mbox{const} \sum_{1 \leq k \leq N} \Big( \|   \Grad \Psi_k \|^2_{L^2(B_{%10 
3 \Gve %d/4
}^{(k)} %\setminus \ov{F %\Go
%}%_\Gve
%^{(k)}
)
}  + \Gve %d
^{-2} \|   \Psi_k - \ov{\Psi}_k\|^2_{L^2(B_{%10 
3 \Gve %d/4
}^{(k)} %\setminus \ov{F %\Go
%}%_\Gve
%^{(k)}
)
}    \Big). \label{Psi_k_ball_1}
\end{eqnarray}
%Since $\Psi_k$ has the zero average over $B_{3 \Gve}^{(k)}$ (see \eq{psi_k}), we apply 
Applying the Poincar\'{e} inequality (see \eq{Psi_k_ball}) for $\Psi_k$ in the ball $B_{3 \Gve}^{(k)}$ and using \eq{Psi_k_ball_1},  we deduce
%\beq
%\|\Psi_k - \ov{\Psi}_k \|_{L_2(B^{(k)}_{3 \Gve})}^2 \leq \mbox{\rm const  } \Gve^2 \| \Grad \Psi_k \|_{L_2(B^{(k)}_{3 \Gve})}^2
%\eequ{Psi_k_ball} 
%
%The estimates \eq{Psi_k_ball_1} and \eq{Psi_k_ball} yield
\begin{eqnarray}
&& \| \Grad \CR_N \|^2_{L^2(\GO_N)} \leq \mbox{const} \sum_{1 \leq k \leq N} 
\|   \Grad \Psi_k \|^2_{L^2(B_{%10 
3 \Gve %d/4
}^{(k)} %\setminus \ov{F %\Go
%}%_\Gve
%^{(k)}
)
}  . \label{Psi_k_ball_2}
\end{eqnarray}

\emph{d) %The 
Final energy estimate. }Here we prove the inequality \eq{introeq2}. %(\ref{R3energyest2}). 
 Using %the 
 definition \eq{R3energyest3} of 
%Due to the supports of the functions 
$\Psi_k$, $k=1, \dots, N$, we can replace  the preceding inequality by
%can be replaced by
%\begin{equation}\label{R3ineq}
%\| \nabla \mathcal{R}_N\|^2_{L_2(\Omega_N)}\le \text{const }\Big(   \sum_{k=1}^N \|\nabla P_k \cdot \nabla \Psi_k\|^2_{L_2(B^{(k)}_{d/4} \backslash \overline{\omega_\varepsilon^{(k)}})}   + \sum^N_{k=1}  \| P_k \nabla \Psi_k\|^2_{L_2(B^{(k)}_{d/4} \backslash \overline{\omega_\varepsilon^{(k)}})} \Big)\;.
%\end{equation}
%Then from this we can deduce
%obtain the inequality
\begin{equation}\label{R3energyest12a}
\| \nabla \mathcal{R}_N\|^2_{L^2(\Omega_N)}\le \text{const } \big\{ \mathcal{K}+\mathcal{L} %+ \mathcal{S}+\mathcal{T} 
\big\}\;,
\end{equation}
where 
\begin{equation}\label{R3energyest13}
\begin{array}{l}\displaystyle{ \mathcal{K}=\sum_{1\le k\le N} \| \nabla v%_f 
(\cdot %\Bx
)- \nabla v%_f
(\Oj^{(k)})\|^2_{L^2(B_{%10 
3 \Gve %d/4
}^{(k)} % \backslash \overline{F %\omega_\varepsilon
%^{(k)}}
)}\;,}\\
\\
\displaystyle{\mathcal{L}= %\varepsilon^2
\sum_{1 \le k \le N} \Big\| \sum_{\substack{j \ne k \\ 1 \le j \le N}}  \Big[\nabla \Big(\boldsymbol{C}^{(j)} \cdot \BCD^{(j)}(\cdot %\Bx %XI_j
) \Big)- %\varepsilon^2 
T(\Oj^{(k)}, \Oj^{(j)})\BCQ^{(j)}\boldsymbol{C}^{(j)} \Big] \Big\|^2_{L^2(B_{%10 
3 \Gve %d/4
}^{(k)} %\backslash \overline{F %\omega_\varepsilon
%^{(k)}}
)}.}
%\\ \\
%\displaystyle{\mathcal{S}= %d
%\Gve^{-2}\sum_{k=1}^N \|  v_f(\X) - v_f(\BO^{(k)})-(\Bx-\BO^{(k)}) \cdot \nabla v_f(\Oj^{(k)}))\|^2_{L_2(B_{10 \Gve %d/4
%} \backslash \overline{F %\omega_\varepsilon
%^{(k)}})},}\\ \\
%\displaystyle{\mathcal{T}= %\varepsilon^2 
%%d
%\Gve^{-2}
%\sum_{k=1}^N \Big\| \sum_{\substack{j \ne k\\ 1\le j \le N}} \Big[ \boldsymbol{C}^{(j)} \cdot \BCD^{(j)}(\XI_j))+\varepsilon^2  %(\Bx-\BO^{(k)}) \cdot T(\Oj^{(k)}, \Oj^{(j)})\BCQ^{(j)}\boldsymbol{C}^{(j)}
%\Big] \Big\|^2_{L_2(B_{10 \Gve %d/4
%}^{(k)} \backslash \overline{F %\omega_\varepsilon
%^{(k)}})}.}
\end{array} 
\end{equation}
%also $\mathcal{K}^{(2)}$ through to $\mathcal{T}^{(2)}$ are obtained by replacing $B_{3\varepsilon}^{(k)} \backslash \overline{\omega^{(k)}_\varepsilon}$ by $B_{d/4}^{(k)} \backslash \overline{B_{3\varepsilon}^{(k)}}$ in $\mathcal{K}^{(1)}$ through to $\mathcal{T}^{(1)}$, respectively.

The estimate for $\CK$ %and $\CS$ are 
is straightforward and %they 
it follows by Taylor's expansions of $v$ %_f$ 
in the vicinity of $\BO^{(k)}$,  %, ~ k = 1, \ldots, N$:
\beq
\CK % + \CS 
\leq \text{const }  \Gve^5 d^{-3}  \max_{\Bx \in \ov{\Go}, 1 \le i, j\le 3} \Big|\frac{\partial^2 v%_f
}{\partial x_i \partial x_j}\Big|^2. %~~\mbox{and}~~\CS \leq \text{const }  d^2  \max_{\Bx \in \Go, 1 \le i, j\le 3} \Big|\frac{\partial^2 v_f}{\partial x_i \partial x_j}\Big|^2.
\eequ{rem_8}
Since $v$ %_f$ 
is harmonic %on 
in a neighbourhood of $\ov{\Go}$, we obtain by the 
%generalized mean value theorem (see, for example, \cite{Mikhlin}) 
local regularity property of harmonic functions that
\beq
\CK  %+ \CS 
\leq \text{const } \Gve^5 d^{-3}  \big\| \Grad v%_f 
\big\|^2_{L^2(  {\Bbb R}^3 ) %\Go_h)
}. %~~\mbox{and}~~\CS \leq \text{const }  d^2  \max_{\Bx \in \Go, 1 \le i, j\le 3} \Big|\frac{\partial^2 v_f}{\partial x_i \partial x_j}\Big|^2.
\eequ{rem_88}

To estimate $\CL$,  
%and $\CT$ 
we use Lemma \ref{Djasymp} on the asymptotics of the dipole fields together with the definition \eq{defT} of the matrix function $T$, which  lead to 
 \begin{equation}\label{R3energyest14}
 |\nabla(\boldsymbol{C}^{(j)} \cdot \BCD^{(j)}(\Bx))-  T(\Oj^{(k)}, \Oj^{(j)})\BCQ^{(j)} \boldsymbol{C}^{(j)}| \le \text{const }\varepsilon^4|\boldsymbol{C}^{(j)}| |\X-\Oj^{(j)}|^{-4}\;, %\text{ for } \X \in B^{(k)}_{d/4}\;,
 \end{equation}
% and
 % \begin{equation}\label{R3energyest14a}
 %|\boldsymbol{C}^{(j)} \cdot \BCD^{(j)}(\XI_j)+ (\Bx - \BO^{(j)}) \cdot \BT(\Oj^{(k)}, \Oj^{(j)})\BCQ^{(j)} \boldsymbol{C}^{(j)}| \le %\text{const }\varepsilon^4|\boldsymbol{C}^{(j)}| |\X-\Oj^{(j)}|^{-3}\;, %\text{ for } \X \in B^{(k)}_{d/4}\;,
 %\end{equation}
 for $ \X \in B^{(k)}_{%10 
 3 \Gve %d/4
 }.$ %Hence, 
 Now, it follows from \eq{R3energyest13} and \eq{R3energyest14} %, \eq{R3energyest14a} 
 that
\begin{eqnarray}
% \mathcal{L} &\le& \text{const } \varepsilon^8 \sum^N_{j=1} |\boldsymbol{C}^{(j)}|^2 \sum_{\substack{j \ne k\\ 1 \le k \le N}}%\int_{B^{(k)}_{10 \Gve %d/4
% } \backslash \overline{F %\omega_\varepsilon
% ^{(k)}}}  \fr{d\X}{ |\X-\Oj^{(j)}|^{8}}\, 
 %\le \text{const } \varepsilon^8 \sum^N_{j=1} |\boldsymbol{C}^{(j)}|^2 \int_{\{\Bx \in \Go, |\Bx - \BO^{(j)}| > d/4\}}\fr{d \Bx}{|\Bx-
 %\Oj^{(j)}|^{8}} \nonumber \\
 %%\end{eqnarray*}
% %Approximating the above double sum by a Riemann integral we have
%&\le& \text{const } \varepsilon^8
%d^{-5}
%%{-6}
%% \int_{\omega} 
%\sum_{1 \leq j \leq N} |\boldsymbol{C}^{(j)}|^2  \;, \label{rem_9}
\mathcal{L} &\le& \text{const } \varepsilon^8 \sum^N_{k=1} \int_{B_{3 \Gve}^{(k)}} 
\Big(   \sum_{1 \leq j \leq N, j \neq k} \fr{| \BC^{(j)} |}{|\Bx - \BO^{(j)}|^4} \Big)^2 d \Bx, \label{rem_9}
\end{eqnarray}
and by the Cauchy inequality % it
the right-hand side  does not exceed
\begin{eqnarray}
%{\small 
&\mbox{const  }& \Gve^8 \sum_{p=1}^N |\BC^{(p)}|^2 \sum_{k=1}^N \sum_{1 \leq j \leq N, j \neq k} \int_{B_{3 \Gve}^{(k)}}  \fr{d \Bx}{|\Bx - \BO^{(j)}|^8 } \leq \mbox{const  } \Gve^{11}\sum_{p=1}^N |\BC^{(p)}|^2 \sum_{k=1}^N \sum_{1 \leq j \leq N, j \neq k} 
 \fr{1 %d \Bx
 }{|\BO^{(k)} - \BO^{(j)}|^8 } %} 
 \nonumber \\
 &&\leq  \mbox{const  } \fr{\Gve^{11}}{d^6} \sum_{p=1}^N |\BC^{(p)}|^2 
 \int\int_{\{\omega \times \omega:  |\BX - \BY| > d\}} \fr{d \BX d \BY}{|\BX - \BY|^8} \leq \mbox{const  } \fr{\Gve^{11}}{d^8} \sum_{p=1}^N |\BC^{(p)}|^2 . \label{rem_99}
 \end{eqnarray}

%and
%\begin{eqnarray}
 %\mathcal{T} &\le& \text{const } \varepsilon^6 %8 d^{-2} 
 %\sum^N_{j=1} |\boldsymbol{C}^{(j)}|^2 \sum_{\substack{j \ne k\\ 1 \le k \le N}}\int_{B^{(k)}_{10 \Gve %d/4
 %} \backslash \overline{F %\omega_\varepsilon
 %^{(k)}}}  \fr{d\X}{ |\X-\Oj^{(j)}|^{6}}\, 
%\nonumber \\
% %\end{eqnarray*}
% %Approximating the above double sum by a Riemann integral we have
%& \le& \text{const } \varepsilon^6 %8 d^{-2}
%\sum^N_{j=1} |\boldsymbol{C}^{(j)}|^2 \int_{\{\Bx \in \Go,  |\Bx - \BO^{(j) \}}| > d/4\}}\fr{d \Bx}{|\Bx-\Oj^{(j)}|^{6}} \le \text{const } %\varepsilon^ 6 %8
%d^{- 3} %5}
%%{-6}
%% \int_{\omega} 
%\sum_{1 \leq j \leq N} |\boldsymbol{C}^{(j)}|^2  \;. \label{rem_10}
%\end{eqnarray}
Since the eigenvalues of the matrix $-\BQ$ satisfy the constraint \eq{Gl_min},  we can apply Corollary \ref{coro1_alg} and  use
 the 
estimate \eq{alg12} for the %expressions on the 
right-hand side of \eq{rem_99} %. % and \eq{rem_10}.
%Thus, we 
to obtain % of  together with \eq{rem_9} and \eq{rem_10} we deduce
\beq
\CL %+ \CT 
\leq \mbox{const}~ \Gve^{11} %8 
d^{-11}%8} 
\|\Grad v%_f
\|^2_{L^2(\Go)}.
\eequ{rem_11}
Combining \eq{R3energyest12a}, \eq{rem_88} and \eq{rem_11},  we arrive at \eq{introeq2} %\eq{R3energyest2} 
 and complete the proof. 
$\Box$
%\end{proof}
\section{%The a
Approximation of $u_N$ for %the case of 
a perforated %bounded 
domain}\label{boundformalg}
Now we seek %the 
an approximation of %$u_N$ 
the solution $u_N$  to the problem %(\ref{Mainnot1})--(\ref{Mainnot3}) 
\eq{intro_2}--\eq{intro_4} %for the situation when 
assuming that $\Omega$ is %a bounded 
an arbitrary domain in $\mathbb{R}^3$. We first describe the formal asymptotic algorithm and 
%show that the same
%derive another 
derive %the 
a system of algebraic equations, similar to \eq{alg1}, % Section \ref{R3alg}, 
 which is used %here 
 for evaluation of the coefficients in the asymptotic representation of $u_N$.
 
%as in Section whose solution plays a similar role to that of $\boldsymbol{C}^{(j)}$, $j=1, \dots, N$ in the previous sections.
%We now give the main result once more, then in the sequel we implement the formal asymptotic procedure

%Prior to the proof of the Theorem we describe the formal algorithm.
\subsection{Formal asymptotic algorithm for the %bounded 
perforated domain $\GO_N$}

The solution $u_N \in L^{1,2}(\GO_N)$  of  \eq{intro_2}--\eq{intro_4}   %(\ref{Mainnot1})--(\ref{Mainnot3}) 
is sought in the form 
\beq
u_N(\X)=v%_f
(\X)+ \sum_{k=1}^N \boldsymbol{C}^{(k)}\cdot \Big\{\BCD^{(k)}(\Bx) - %+ %\XI_k)+%\varepsilon^2 
\BCQ^{(k)}\nabla_\Y H(\X, \Y)\big|_{\Y=\Oj^{(k)}} \Big\}+R_N(\X)\;,
%R^{(1)}_N(\X)
%\;,
\eequ{u_N_0}
where in this instance $v$ %_f$ 
solves 
%Problem 1 
problem \eq{vfom_1}, \eq{vfom_2}   in  Section \ref{Modelproblems}, and $R%^{(1)}
_N$  is a harmonic function in $\Omega_N$.
% and has zero displacement on the exterior boundary. 
%On the % interior 
%boundaries of small voids $\partial F%\omega_\varepsilon
%^{(j)}$, $j=1, \dots, N$, the normal derivative of $R_N^{(1)}$ satisfies
%\begin{eqnarray}%\nonumber
%\label{boundformalgeq1}
%\Dn{R_N^{(1)}}{}(\X)&=&- \Dn{v_f}{}(\X)=-\N^{(j)}\cdot \nabla v_f(\X)  %\\
%%&=& 
%%=
%\sim -\N^{(j)} \cdot \nabla v_f(\Oj^{(j)})  %+O(\varepsilon)
%\;.
%\end{eqnarray}
%%where in the last equality we expanded the first order derivatives of $v_f$ about the centre $\Oj^{(j)}$ of the small void $%\omega_\varepsilon^{(j)}$.
%
%Next, we represent $R^{(1)}_N(\X)$ as 
%\begin{equation}\label{boundformalalgeq2}
%R_N^{(1)}(\X)=%\varepsilon 
%\sum_{k=1}^N \boldsymbol{C}^{(k)}\cdot \Big\{\BCD^{(k)}(\Bx) - %+ %\XI_k)+%\varepsilon^2 
%\BCQ^{(k)}\nabla_\Y H(\X, \Y)\big|_{\Y=\Oj^{(k)}} \Big\}+R_N(\X)\;,
%\end{equation}
%where 
Here $\boldsymbol{C}^{(k)}$, $k=1, \dots, N$ are the vector coefficients to be determined.
% later as a solution of certain  algebraic system. 

%Here. harmonicity of $R_N$ in $\Omega$ is guaranteed by the definitions of $R^{(1)}_N$, $\cD^{(k)}$, $k=1, \dots, N$ and $H$ (see Problems 2 and 3 of Section \ref{Modelproblems}). 
Owing to the definitions of $\BCD^{(k)}$, $k=1, \dots, N,$ and $H$ as solutions of Problems 2 and 3 in Section \ref{Modelproblems}, 
%the function $R_N$ is harmonic in $\GO_N$.
%On the exterior boundary, using 
and taking into account Lemma \ref{Djasymp} %including 
on the asymptotics of $\BCD^{(k)}$ %in (\ref{modfieldeq2}) %and the boundary condition for $H$,
 we deduce that $|R_N(\Bx)|$ is small for $\Bx \in \prt \GO.$
 %obtain 
%\beq |R_N(\X)|\leq \mbox{const } \varepsilon^4 \sum^N_{k=1}  |\boldsymbol{C}^{(k)}||\X-\Oj^{(k)}|^{-3} \;, \X \in \partial \Omega\;.\eequ{bdd_1a}
%Since the coefficients $\BC^{(k)}$ satisfy \eq{alg12}, and for all $\Bx \in \prt \GO$ and $\BO^{(k)}, k = 1,\ldots, N,$ we have $|\Bx - \BO^{(k)}| > h > 0$, then  the estimate \eq{bdd_1a} leads to
%\beq
%|R_N(\Bx)| \leq \mbox{const  } \Gve^4 \Big(  \sum_{k=1}^N   |\BC^{(k)}|^2 \Big)^{1/2} \Big(  \sum_{k=1}^N  |\Bx- \BO^{(k)}|^{-6} %\Big)^{1/2} \leq \mbox{const  }  \Gve^4 d^{-3} \|\Grad v_f \|_{L_2}(\Go), ~~ \Bx \in \prt \GO.
%\eequ{bdd_11}

On the boundaries %of small voids 
$\prt F% \Go_\Gve
^{(j)}$, the substitution of (\ref{u_N_0}) %boundformalalgeq2}) 
into (\ref{intro_4}) %boundformalgeq1}) 
yields
\begin{eqnarray*}
\Dn{{R}_N}{}(\X)&=&-\N^{(j)} \cdot \Big\{ \nabla v%_f
(\Oj^{(j)})+\boldsymbol{C}^{(j)}+O(\varepsilon)+O(\varepsilon^3 |\boldsymbol{C}^{(j)}|)
\\&&+ %\varepsilon 
\sum_{\substack{k \ne j\\ 1 \le k \le N}} \nabla \Big\{ \boldsymbol{C}^{(k)} \cdot \Big(\BCD^{(k)}(\Bx %XI_k
)  - %+ %\varepsilon^2 
\BCQ^{(k)}\nabla_\Y H(\X, \Y)\big|_{\Y=\Oj^{(k)}} \Big) \Big\}\Big\}
%+ O\left(\varepsilon^3 \sum^N_{k=1} |\boldsymbol{C}^{(k)}|\right)\Big\} 
\;, \quad \X \in  \partial F%\omega_\varepsilon
^{(j)}, j=1, \dots, N\;.
\end{eqnarray*}
Then, using the asymptotic representation (\ref{modfieldeq2}) %from 
in Lemma \ref{Djasymp} we deduce
\begin{eqnarray} %\nonumber 
\label{boundformalalgeq3}
\Dn{{R}_N}{}(\X)&\sim&-\N^{(j)}\cdot\Big\{ \nabla v%_f
(\Oj^{(j)})+\boldsymbol{C}^{(j)}%+O(\varepsilon) 
+    %\varepsilon^3 
\sum_{\substack{k \ne j \\ 1 \le k \le N}} {\frak T}(\X, \Oj^{(k)})\BCQ^{(k)} \boldsymbol{C}^{(k)}  %\\
%&&+O\Big(\sum_{\substack{k \ne j \\ 1 \le k \le N}} \frac{\varepsilon^4 |\boldsymbol{C}^{(k)}|}{|\X -\Oj^{(k)}|^4}\Big) %+ O\Big(\varepsilon^3 \sum^N_{k=1} |\boldsymbol{C}^{(k)}|\Big)
\Big\}\;, \quad \X \in \partial F %\omega_\varepsilon
^{(j)}, j=1, \dots, N\;,
\end{eqnarray}
where ${\frak T(\X, \Y)}$ is defined by %
\begin{equation}\label{defFT}
{\frak T}(\X, \Y)=(\nabla_{\X} \otimes \nabla_{\Y}) G(\X, \Y)  \;,
\end{equation} %the same as in \eq{defT}.
with $G(\X, \Y)$ being Green's function for the 
%limit 
domain $\GO$, as defined in Section \ref{Modelproblems}.
%\[T(\X, \Y)=(\nabla_{\bZ} \otimes \nabla_{\bW})\left( \frac{1}{4\pi|\bZ-\bW|}\right)\Big|_{\substack{\bZ=\X\\ \bW=\Y}}\;.\]
%When $j \neq k$, %The Taylor expansion of $T(\X, \Oj^{(k)})$ about $\X =\Oj^{(j)}$ in
%\eq{boundformalalgeq3} and (\ref{boundformalalgeq3}) lead to
%\begin{eqnarray}   \label{bdd_2}
%\Dn{\mathcal{R}_N}{}(\X)&\sim&-\N^{(j)}\cdot\Big\{ \nabla v_f(\Oj^{(j)})+\boldsymbol{C}^{(j)}  %+O(\varepsilon) 
%-  %\varepsilon^3 
%\sum_{\substack{k \ne j \\ 1 \le k \le N}} {\frak T}(\Oj^{(j)}, \Oj^{(k)})\BCQ^{(k)} \boldsymbol{C}^{(k)}   % \nonumber \\
%%&&+O\Big(\sum_{\substack{k \ne j \\ 1 \le k \le N}} \frac{\varepsilon^4 |\boldsymbol{C}^{(k)}|}{|\X -\Oj^{(k)}|^4}\Big) %+ O%\Big(\varepsilon^3 \sum^N_{k=1} |\boldsymbol{C}^{(k)}|\Big)
%\Big\}\;, \quad \X \in \partial F%\omega_\varepsilon
%^{(j)}, j=1, \dots, N\;. \label{bdd_2}
%\end{eqnarray}
To compensate for the leading discrepancy %, produced by the terms $\nabla v_f(\Oj^{(j)})+\boldsymbol{C}^{(j)} -\varepsilon^3 \sum_{\substack{k \ne j \\ 1 \le k \le N}} T(\Oj^{(j)}, \Oj^{(k)})\cP^{(k)} \boldsymbol{C}^{(k)}$  
in the boundary conditions \eq{boundformalalgeq3}, %bdd_2},
we choose  the coefficients $\boldsymbol{C}^{(m)}$, $m=1, \dots, N,$ 
%in such a way that they  satisfy
subject to  the algebraic system %(\ref{R3formalgeq9}).
\beq
\nabla v
%_f
(\Oj^{(j)})+\boldsymbol{C}^{(j)} +%\varepsilon^3 
\sum_{\substack{k \ne j \\ 1 \le k \le N}} {\frak T}(\Oj^{(j)}, \Oj^{(k)})
%\cP^{(k)}
\BCQ%_\Gve
^{(k)} \boldsymbol{C}^{(k)}=0, ~~ j =1, \ldots,N,
\eequ{R3formalgeq9a}
%In order to prove Theorem \ref{mainthm1}, in the next section we give the energy estimate for $R_N$.
%In the next section we formulate Theorem \ref{mainthm1} on the asymptotic approximation of $u_n$ and will provide the remainder estimate.
where $\BCQ%_\Gve
^{(k)}, k=1,\ldots,N,$ are polarization matrices of small voids $%\Go_\Gve
F^{(k)}$, as  in Lemma \ref{Djasymp}. %\eq{P_eps}.

Provided %the 
system \eq{R3formalgeq9a} has been solved for the vector coefficients $\BC^{(k)}$, %the 
 formula \eq{u_N_0} %, %\eq{boundformalalgeq2} 
leads to the formal asymptotic approximation of $u_N$:
\beq
u_N(\Bx) \sim v%_f
(\Bx) + \sum_{k=1}^N \boldsymbol{C}^{(k)}\cdot \Big\{\BCD^{(k)}(\Bx) - %+ %\XI_k)+%\varepsilon^2 
\BCQ^{(k)}\nabla_\Y H(\X, \Y)\big|_{\Y=\Oj^{(k)}} \Big\}.
\eequ{u_N_formal}

\subsection{Algebraic system} % for the case of a bounded domain}

The system \eq{R3formalgeq9a} can be written in  the matrix form
 \beq
\BC + %\Gve^3 
{\frak S} \BQ \BC = - \BGT, %_f,
\eequ{alg11a}
where
\beq
{\frak S}=[{\frak S}_{ij}]_{i, j=1}^N, ~ 
{\frak S}_{ij}=\left\{\begin{array}{ll}\displaystyle{(\nabla_{\Bz} \otimes \nabla_{\Bw})G(\Bz, \Bw )\Big|_{\substack{\Bz=\BO^{(i)}\\ \Bw=\BO^{(j)}}}} & \quad \text{ if } i \ne j\\
& \\
{0}I_3 %\BI_3
&\quad \text{otherwise}\end{array}  \right.  
\eequ{alg22a}
with $G(\Bz, \Bw)$ %being 
standing for Green's function in the limit domain $\GO$,
and the block-diagonal matrix $\BQ$ being the same as in \eq{matrix_Q}. % and accordingly the vector representation \eq{alg1} of the system becomes
% The system   \eq{R3formalgeq9a} 
%It
The system \eq{alg11a} is similar to that % he system  studied 
in Section \ref{alg_syst}, with the only change %being  
%in the replacement 
of the matrix $\BCS$ %$T(\Oj^{(j)}, \Oj^{(k)})$ 
%by 
for ${\frak S}$. %${\frak T}(\Oj^{(j)}, \Oj^{(k)})$.
%whose components
The elements of  ${\frak S}$ are given via the second-order derivatives of Green's function in $\GO$, as defined in \eq{defFT}. 
%As before, we consider the case when the matrices $-\BCQ^{(j)}$ from \eq{u_N_formal}
%, characterizing the shapes of
%individual scaled voids $\Go^{(j)}$ 
%are strictly positive definite, and their maximal and minimal eigenvalues satisfy \eq{Gl_min}.  
%we give the statement on solvability of \eq{R3formalgeq9a}
The next assertion is similar to Corollary \ref{coro1_alg}.

%\begin{thm} 
 \begin{lem}  \label{thm1_alg_f} 
%{\em 
%Assume that $\Gve < c~d $, where $c$ is a sufficiently small absolute constant, and 
%==
Assume that 
%the 
inequalities {\rm \eq{Gl_min}} hold for $\Gl_{max}$ and $\Gl_{min}$. 
%, the maximal and minimal eigenvalues of the matrix 
%$-\BQ$. 
Also 
let $v%_f
$ be a unique solution of %the model 
problem {\rm \eq{vfom_1}, \eq{vfom_2}} in the
%limit 
domain $\GO$.  
%Also assume that for the positive definite matrices $-\BCQ^{(k)}$ in {\rm \eq{R3formalgeq9a}}, their minimal and maximal %eigenvalues satisfy 
%{\rm \eq{Gl_min}}.  Then the algebraic system 
%{\rm \eq{R3formalgeq9a}} is solvable and 
Then the vector coefficients $\BC^{(j)}$ in the system {\rm \eq{R3formalgeq9a}} satisfy the estimate % }
%==
\beq
\sum_{1 \leq j \leq N} |\BC^{(j)}|^2 \leq \mbox{\rm const} ~ d^{-3} \| \Grad v%_f
 \|^2_{L^2(\GO)}, 
%====  \sum_{1 \leq j \leq N} |\Grad v_f(\BO^{(j)})|^2.
\eequ{alg12a}
%===
where the constant depends on the shape of the voids $F^{(j)}, j= 1,\ldots, N.$
%\end{thm}
\end{lem}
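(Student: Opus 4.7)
The plan is to follow the argument of Corollary~\ref{coro1_alg}, replacing the free-space matrix $\BCS$ by $\mathfrak{S}$. Since $G(\Bz,\Bw)=(4\pi|\Bz-\Bw|)^{-1}-H(\Bz,\Bw)$, for $i\ne j$ one has $\mathfrak{S}_{ij}=\CS_{ij}-(\nabla_\Bz\otimes\nabla_\Bw)H(\Bz,\Bw)\big|_{\Bz=\BO^{(i)},\Bw=\BO^{(j)}}$, while the diagonal blocks of both matrices vanish. Writing $\mathfrak{S}=\BCS-\mathfrak{H}$, where $\mathfrak{H}$ is the $3N\times 3N$ matrix whose off-diagonal blocks are the above mixed second derivatives of $H$ and whose diagonal blocks are zero, the heart of the proof will be to show that the quadratic form associated with $\mathfrak{S}$ admits the same bound as the one Lemma~\ref{lem1_alg}(b) provides for $\BCS$.

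The singular part is handled directly by that lemma: $|\langle \BCS\BQ\BC,\BQ\BC\rangle|\le\text{const}\,d^{-3}\sum_{j=1}^{N}|\BQ^{(j)}\BC^{(j)}|^{2}$. For the regular contribution I use that the cloud $\omega$ lies at positive distance from $\partial\Omega$, so $H$ is smooth on $\overline{\omega}\times\overline{\omega}$ and the matrix-valued function $(\BZ,\BW)\mapsto(\nabla_\BZ\otimes\nabla_\BW)H(\BZ,\BW)$ is uniformly bounded there by a constant $C_H$ depending only on $\Omega$ and $\omega$. A crude application of the Cauchy inequality then gives
\[
|\langle \mathfrak{H}\BQ\BC,\BQ\BC\rangle|\le C_H\Bigl(\sum_{j=1}^{N}|\BQ^{(j)}\BC^{(j)}|\Bigr)^{2}\le C_H\,N\sum_{j=1}^{N}|\BQ^{(j)}\BC^{(j)}|^{2}\le\text{const}\,d^{-3}\sum_{j=1}^{N}|\BQ^{(j)}\BC^{(j)}|^{2},
\]
where the last step invokes $N\le\text{const}\,d^{-3}$ from \eq{sumestvd}. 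Adding the two estimates yields $|\langle\mathfrak{S}\BQ\BC,\BQ\BC\rangle|\le\text{const}\,d^{-3}\sum_{j}|\BQ^{(j)}\BC^{(j)}|^{2}$, the exact analogue of Lemma~\ref{lem1_alg}(b) with $\BCS$ replaced by $\mathfrak{S}$.

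With this in hand the argument proceeds exactly as in Theorem~\ref{thm1_alg} followed by Corollary~\ref{coro1_alg}. Taking the scalar product of \eq{alg11a} with $\BQ\BC$, using negative-definiteness of $\BQ$ and the Cauchy inequality on the right-hand side gives
\[
\bigl(1-\text{const}\,\Gl_{max}/d^{3}\bigr)^{2}\langle\BC,-\BQ\BC\rangle\le\langle\BGT,-\BQ\BGT\rangle.
\]
The eigenvalue constraints \eq{Gl_min} then transform this into $\sum_{j}|\BC^{(j)}|^{2}\le\text{const}\,d^{-3}\sum_{j}|\Grad v(\BO^{(j)})|^{2}$. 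Since $v$ is harmonic in a neighbourhood of $\overline{\omega}$ (because the support of $f$ is disjoint from $\omega$), the mean-value theorem for harmonic functions together with the Cauchy inequality yields $|\Grad v(\BO^{(j)})|^{2}\le\text{const}\,d^{-3}\|\Grad v\|^{2}_{L^{2}(B^{(j)}_{d/4})}$; summing over the disjoint balls and majorising $\|\Grad v\|_{L^{2}(\omega)}$ by $\|\Grad v\|_{L^{2}(\Omega)}$ produces \eq{alg12a}.

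The only genuinely new ingredient is the treatment of the regular part $\mathfrak{H}$. The bound I employed is lossy—smoothness of $H$ is used only through a uniform $L^{\infty}$ bound and the off-diagonal structure is discarded—but the mesoscale constraint $N\le\text{const}\,d^{-3}$ makes it of the same order as the singular contribution controlled by Lemma~\ref{lem1_alg}(b), so no refinement is required. The implicit constant in \eq{alg12a} consequently depends on the shape of the voids through $A_{1}$ and $A_{2}$ in \eq{Gl_min} and on $\Omega$ through $C_{H}$.
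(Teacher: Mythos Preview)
Your argument is correct and takes a genuinely different route from the paper's. The paper re-derives the analogue of Lemma~\ref{lem1_alg}(a) with Green's function $G$ in place of the fundamental solution, obtaining the identity \eq{alg5new}: the double integral now carries $G$ (positive, since $G$ is a positive-definite kernel on $\GO$), and an extra sum in the diagonal second derivatives of $H$ appears, which is dismissed as small compared with the $d^{-3}$ term. From there the paper repeats the chain of inequalities of Theorem~\ref{thm1_alg} and Corollary~\ref{coro1_alg}. You instead write $\mathfrak{S}=\BCS-\mathfrak{H}$, recycle Lemma~\ref{lem1_alg}(b) verbatim for the singular part, and bound the whole $\mathfrak{H}$-contribution (diagonal and off-diagonal blocks alike) by the uniform $L^\infty$ bound $C_H$ on $(\nabla_\Bz\otimes\nabla_\Bw)H$ over $\overline{\omega}\times\overline{\omega}$ combined with $N\le\text{const}\,d^{-3}$ from \eq{sumestvd}. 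Your route is more elementary---no representation formula needs to be rebuilt---at the price of a looser implicit constant and of importing the bound on $N$, which the paper's structural argument does not need at this stage.

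One small slip: after the displayed inequality $(1-\text{const}\,\Gl_{max}/d^{3})^{2}\langle\BC,-\BQ\BC\rangle\le\langle\BGT,-\BQ\BGT\rangle$, the eigenvalue constraints \eq{Gl_min} alone give only $\sum_{j}|\BC^{(j)}|^{2}\le\text{const}\,\sum_{j}|\Grad v(\BO^{(j)})|^{2}$ with no factor $d^{-3}$ yet; that factor enters only at the next step via the mean-value theorem, exactly as in \eq{alg12b}. With this correction the final bound is indeed \eq{alg12a}.
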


\begin{proof}
The proof of the theorem is very similar to the one given in Section \ref{alg_syst}. We consider the scalar product of \eq{alg11a} and the vector $\BQ \BC$:
\beq
\langle \BC, \BQ \BC \rangle + \langle {\frak S} \BQ \BC, \BQ \BC \rangle =  -\langle  \BGT%_f
, \BQ \BC \rangle,
\eequ{alg112a}
and similarly to \eq{alg5m} derive

\begin{eqnarray}
\langle   {\frak S} \BQ \BC, \BQ \BC  \rangle &&= %3072
%576
48^2~ \pi^{-2%2
}~ d^{-6} \int_{\GO}  \int_{\GO} %G( \BX, \BY) 
%|\BX-\BY|^{-1} 
G(\BX, \BY)(\Grad \cdot \BGX(\BX))  (\Grad \cdot \BGX(\BY)) d \BY  d \BX \nonumber \\
&& - 16 \pi^{-1} d^{-3}
\sum_{1 \le j \le N }  | \BCQ^{(j)}  \BC^{(j)} |^2  \nonumber \\
&& + \sum_{1 \leq j \leq N}   \Big(   \BCQ^{(j)} \BC^{(j)} \Big)^T 
  (\nabla_{\Bz} \otimes \nabla_{\Bw})\left(H(\Bz, \Bw)\right)\Big|_{\substack{\Bz=\BO^{(j)}\\ \Bw=\BO^{(j)}}}  \Big(   \BCQ^{(j)} \BC^{(j)} \Big) ,
\label{alg5new}
\end{eqnarray}
where the integral in the right-hand side is positive, and it is understood in the sense of distributions, in the same way as in the proof of Lemma \ref{lem1_alg}, while the magnitude of the last sum in \eq{alg5new} is small compared to the magnitude of the second sum. 

Now, %consider the equation \eq{alg112a}. Its 
the right-hand side in \eq{alg112a} does not exceed
$$
\langle  \BC, -\BQ \BC \rangle^{1/2}  \langle \BGT, %_f,
 -\BQ \BGT%_f 
 \rangle^{1/2}.
$$
%Using the positivity of the integral on the right-hand side in \eq{alg5new} we deduce
Following the same pattern as in the proof of Theorem \ref{thm1_alg}, we deduce
\begin{eqnarray}
\langle \BC, -\BQ \BC \rangle - \mbox{const } d^{-3}  \langle -\BQ \BC, -\BQ \BC \rangle \leq \langle  \BC, -\BQ \BC \rangle^{1/2}  \langle \BGT%_f
, -\BQ \BGT%_f 
\rangle^{1/2}, \nonumber
\end{eqnarray}
where the constant is independent of $d$. Furthermore, this leads to 
\begin{eqnarray}
\Big(1 -  \mbox{const } d^{-3}  \fr{\langle -\BQ \BC, -\BQ \BC \rangle}{\langle \BC, -\BQ \BC \rangle}  \Big) \langle \BC, -\BQ \BC \rangle^{1/2}  \leq  \langle \BGT%_f
, -\BQ \BGT%_f
 \rangle^{1/2},
\nonumber
\end{eqnarray}
which implies
\begin{eqnarray}
\Big(1 -  \mbox{const } d^{-3} \Gl_{max} \Big)^2 \langle \BC, -\BQ \BC \rangle  \leq  \langle \BGT,%_f, 
-\BQ \BGT%_f 
\rangle,
\label{alg1111}
\end{eqnarray}
where $\Gl_{max}$ is the largest eigenvalue of the positive definite matrix $-\BQ$.
Then using the same estimates \eq{alg12aa} and  \eq{alg12b} as in the proof of  Corollary
\ref{coro1_alg} we arrive at \eq{alg12a}.
\end{proof}

\subsection{%The e
Energy estimate for the remainder}\label{boundenergyestRN}

\begin{thm}\label{mainthm1}
Let the parameters $\varepsilon$ and $d$ satisfy the inequality
\[\varepsilon < c\, d   %|\log d|^{-1/3}
\;,\]
where $c$ is a sufficiently small absolute constant. Then the solution $u_N(\X)$ of %{\rm (\ref{Mainnot1})--(\ref{Mainnot3})}  
{\rm \eq{intro_2}--\eq{intro_4}}   is represented by the asymptotic formula
\begin{equation}\label{introeq1a}
u_N(\X)=v(\X)+ %\varepsilon
 \sum^N_{k=1} \boldsymbol{C}^{(k)} \cdot \{ %\Gve
  \BCD^{(k)}(\Bx) %\XI_k)
 - %+ %\varepsilon^2 
 %\cP_\Gve
 \BCQ^{(k)} \nabla_\Y H(\X, \Y)\Big|_{\Y=\Oj^{(k)}}\}+R_N(\X)\;,
\end{equation}
where $\boldsymbol{C}^{(k)}=(C^{(k)}_1, C^{(k)}_2, C^{(k)}_3)^T$%, and  the column vector $\mathbf{C}=(C^{(1)}_1, C^{(1)}_2, C^{(1)}_3, \dots, C^{(N)}_1, C^{(N)}_2, C^{(N)}_3)^T$ is given by
solve the linear algebraic system {\rm \eq{R3formalgeq9a}}. %{\rm \eq{alg1}} with the matrix
%\beq%\mathbf{\Xi}_f=(
%\mathbf{I}-\varepsilon^3 \BCS\mathbf{P}%\mathbf{C}\;
%.\eequ{bdd_1}
%The matrix {\rm \eq{bdd_1}} %$(\mathbf{I}+\varepsilon^3 \mathbf{S}\mathbf{P})$ 
%is invertible and the 
The remainder $R_N$ in {\rm \eq{introeq1a}} satisfies the energy
estimate
\begin{equation}\label{introeq2a}
 \| \nabla R_N\|^2_{L_2(\Omega_N)} \le \text{\emph{const} } \Big\{ \varepsilon^{11}d^{-11}  + \varepsilon^{5}d^{-3}  \Big\}  \| \nabla v%_f
 \|^2_{L_2(\Omega)}
 %+
 %\,d^{2}       %\max_{1 \le i, j \le 3}\Big\| \frac{\partial^2 v_f}{\partial x_i \partial x_j} \Big\|^2_{L_\infty(\omega)}
 %\|v_f \|_{L_2 (\Go_h)}^2
% \Big\}
\;.
 \end{equation}
% where $\Go_h$ is the same as in Theorem {\rm \ref{R3thmasy}}.
\end{thm}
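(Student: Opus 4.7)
The plan is to parallel the proof of Theorem \ref{thm1_alg_f_inf} in Section \ref{R3energyesy}, with two adjustments: accommodating the exterior Dirichlet boundary $\partial\Omega$, and handling the additional $H$-correction terms in the ansatz \eqref{introeq1a}. First I would record the boundary value problem for $R_N$. Substituting \eqref{introeq1a} into \eqref{intro_2}--\eqref{intro_4} and invoking the algebraic system \eqref{R3formalgeq9a}, I find that $R_N$ is harmonic in $\Omega_N$, has zero flux through each $\partial F^{(j)}$, and satisfies the same type of Neumann residual on $\partial F^{(j)}$ as in \eqref{rem1} but with $T$ replaced by $\mathfrak{T}$. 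On $\partial\Omega$, the definition \eqref{H_def} cancels the $(4\pi|\X-\BO^{(k)}|)^{-1}$--type singular term of $\BCD^{(k)}(\Bx)-\BCQ^{(k)}\nabla_\Y H(\X,\Y)|_{\Y=\BO^{(k)}}$, and Lemma \ref{Djasymp} yields
\[ |R_N(\X)| \le \text{const}\,\varepsilon^4\sum_{k=1}^N|\boldsymbol{C}^{(k)}|\,|\X-\BO^{(k)}|^{-3} \quad\text{for }\X\in\partial\Omega. \]

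Next I would split $R_N=\widetilde R_N+\rho$, where $\rho$ is the harmonic extension of $R_N|_{\partial\Omega}$ into $\Omega$, so that $\widetilde R_N$ is harmonic in $\Omega_N$ with homogeneous Dirichlet data on $\partial\Omega$ and a Neumann residual on each $\partial F^{(j)}$ differing from that of $R_N$ only by a small $\partial\rho/\partial n$ contribution. Standard elliptic estimates on $\Omega$, combined with Cauchy--Schwarz, the counting inequality $N\le\text{const}\,d^{-3}$, and Lemma \ref{thm1_alg_f}, bound $\|\nabla\rho\|_{L^2(\Omega)}^2$ by a term of strictly higher order in $\varepsilon/d$ than the right-hand side of \eqref{introeq2a}.

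To estimate $\|\nabla\widetilde R_N\|_{L^2(\Omega_N)}$ I would mimic Section \ref{R3energyesy}. Introduce the auxiliary functions
\begin{align*}
\Psi_k(\X) &= v(\X)-v(\BO^{(k)})-(\X-\BO^{(k)})\cdot\nabla v(\BO^{(k)})\\
&\quad + \sum_{j\ne k}\boldsymbol{C}^{(j)}\cdot\bigl\{\BCD^{(j)}(\Bx)-\BCQ^{(j)}\nabla_\Y H(\X,\Y)|_{\Y=\BO^{(j)}}\bigr\}\\
&\quad - \sum_{j\ne k}(\X-\BO^{(j)})\cdot\mathfrak{T}(\BO^{(k)},\BO^{(j)})\BCQ^{(j)}\boldsymbol{C}^{(j)},
\end{align*}
satisfying $\partial\Psi_k/\partial n+\partial\widetilde R_N/\partial n=0$ together with the zero-flux condition on $\partial F^{(k)}$. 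Since the cutoffs $\chi_\varepsilon^{(k)}$ are supported strictly away from $\partial\Omega$, the integration-by-parts identity \eqref{rem_1B}, the Poincar\'e inequality \eqref{Psi_k_ball}, and the Lipschitz extension \eqref{CN} carry through unchanged, giving \eqref{Psi_k_ball_2} with $\widetilde R_N$ in place of $\CR_N$. The splitting $\CK+\CL$ from \eqref{R3energyest12a}--\eqref{R3energyest13} is then repeated: $\CK$ is handled exactly as in \eqref{rem_88} and produces the $\varepsilon^5 d^{-3}\|\nabla v\|^2_{L^2(\Omega)}$ term; for $\CL$, the smoothness of $\nabla_\Y H$ on a neighbourhood of $\overline\omega$ means its Taylor residue is of lower order, so \eqref{R3energyest14} persists and the chain \eqref{rem_9}--\eqref{rem_11}, together with Lemma \ref{thm1_alg_f}, produces the $\varepsilon^{11}d^{-11}\|\nabla v\|^2_{L^2(\Omega)}$ term.

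The hard part is controlling the boundary correction $\rho$: the harmonic lift of a trace data decaying like $\varepsilon^4|\X-\BO^{(k)}|^{-3}$ from a cloud separated from $\partial\Omega$ must be shown to have gradient $L^2$-norm of strictly higher order than $(\varepsilon^{11}d^{-11}+\varepsilon^5 d^{-3})\|\nabla v\|^2_{L^2(\Omega)}$. This reduces, via standard elliptic estimates and summation of $|\boldsymbol{C}^{(k)}|^2$ using Lemma \ref{thm1_alg_f}, to a clean quantitative estimate that is absorbed into \eqref{introeq2a} provided the absolute constant $c$ in $\varepsilon<cd$ is taken sufficiently small so that the prefactor in Lemma \ref{thm1_alg_f} remains bounded.
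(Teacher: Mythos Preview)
Your overall strategy is sound and leads to the same estimate, but your handling of the exterior boundary $\partial\Omega$ is genuinely different from the paper's. You split $R_N=\widetilde R_N+\rho$ with $\rho$ the harmonic lift of the Dirichlet trace, and then run the $\mathbb{R}^3$ argument on $\widetilde R_N$. The paper instead introduces an additional cutoff $\chi_0$ supported near $\partial\Omega$ and an explicit function
\[
\Psi_0(\X)=\sum_{j=1}^N \boldsymbol{C}^{(j)}\cdot\Big\{\BCD^{(j)}(\Bx)-\BCQ^{(j)}\frac{\X-\BO^{(j)}}{4\pi|\X-\BO^{(j)}|^3}\Big\},
\]
chosen so that $R_N+\Psi_0=0$ \emph{exactly} on $\partial\Omega$; then a single Green's-formula identity is written for $\nabla(R_N+\chi_0\Psi_0)\cdot\nabla\big(R_N+\sum_k\chi_\varepsilon^{(k)}\Psi_k^{(\Omega)}\big)$. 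Because the supports of $\chi_0$ and $\chi_\varepsilon^{(k)}$ are disjoint, the only new terms beyond the $\mathbb{R}^3$ proof are $\|\Psi_0\|^2_{L^2(\Omega\cap\mathcal V)}$ and $\|\nabla\Psi_0\|^2_{L^2(\Omega\cap\mathcal V)}$, which are estimated directly from Lemma~\ref{Djasymp} and Lemma~\ref{thm1_alg_f} as $O(\varepsilon^8 d^{-3})\|\nabla v\|^2_{L^2(\Omega)}$. This avoids any appeal to trace or lifting estimates for a non-explicit $\rho$, and keeps everything inside one energy identity.

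Two small points in your version need tightening. First, your $\Psi_k$ omits the $j=k$ term from the $H$-sum; the paper's $\Psi_k^{(\Omega)}$ keeps $-\sum_{j=1}^N\boldsymbol{C}^{(j)}\cdot\BCQ^{(j)}\nabla_\Y H(\X,\Y)|_{\Y=\BO^{(j)}}$ over all $j$, which is what makes $\partial(\Psi_k^{(\Omega)}+R_N)/\partial n=0$ hold exactly on $\partial F^{(k)}$. Second, after your splitting the claimed identity $\partial\Psi_k/\partial n+\partial\widetilde R_N/\partial n=0$ is not literally true: there is a residual $\partial\rho/\partial n$ on $\partial F^{(k)}$ (which you mention earlier but then drop), and this must be carried through the energy argument and shown to be of higher order. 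Both issues are repairable, but the paper's $\chi_0,\Psi_0$ device sidesteps them entirely.
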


%{\bf Proof:}
%\begin{proof}
%In this section we direct our attention  to the energy estimate for the remainder $R_N$ present in the approximation (\ref{introeq2a}).

%\begin{lem}\label{boundenestproof}
%Let the small parameters $\varepsilon$ and $d$ defined, in Section $\ref{MainNotmeso}$ satisfy
%\[\varepsilon \le c \, d\;,\]
%where $c$ is a sufficiently small absolute constant. Then the remainder $R_N$ in (\ref{introeq1a}) satisfies the estimate
%\begin{equation}\label{boundenesteq1}
%\| \nabla R_N\|_{L_2 (\Omega_N)} \le \text{const } \Big\{\varepsilon^{4} d^{-7/2} \| \nabla v_f\|_{L_\infty(\omega)}+\varepsilon d^{-1} \max_{1 \le i , j \le N} \Big\|\frac{\partial^2 v_f}{\partial x_i \partial x_j}\Big\|_{L_\infty(\omega)}\Big\}\;.
%\end{equation} 
%\end{lem}
{\it Proof. } Essentially, the proof follows the same steps as in %the proof of 
Theorem %\ref{R3thmasy}
\ref{thm1_alg_f_inf}. Thus, we give an outline indicating the obvious modifications, which are brought 
%due to the contribution from the exterior 
by the boundary $\prt \GO.$

\emph{a) Auxiliary functions. } Let us preserve the notations $\chi_\Gve^{(k)}$ for cutoff functions used in the proof of Theorem \ref{thm1_alg_f_inf}. We also
need a new cutoff function $\chi_0$ to isolate $\prt \GO$ from the cloud of holes. Namely, let $(1-\chi_0) \in C_0^\infty(\GO)$
and $\chi_0 = 0$ on a neighbourhood of $\ov{\Go}$. A neighbourhood of $\prt \GO$ containing $\mbox{supp } \chi_0$ will be denoted by $\CV$.
%
%We need additional auxiliary functions to deal with the new features attributed to the presence of the exterior boundary $\partial \Omega$. 
%%Then we sketch the derivation of the analogue of (\ref{R3ineq}), followed by the proof Theorem \ref{mainthm1}.
%
%\emph{a) Auxiliary functions. }First, let $\CV_s$ %\Omega_d$ 
%be a $s$-neighbourhood of the exterior boundary $\partial \Omega$: ~~
%$\CV_s = \{\Bx: \mbox{dist}(\Bx, \prt \GO) < s \}$. A smooth cut-off function $\chi_0(t)$ is defined in such a way that $\chi_0(%d^{-1}
%\Bx) = 1$ for all $\Bx \in \CV_{h/2 %d/4
%}$, and
%$\chi_0(%d^{-1}
%\Bx) = 0$ for all $\Bx \in {\Bbb R}^3 \setminus \ov{\CV}_{h %d/2
%}$, where $h$ is the same as in the definition \eq{intro_1}. 
%Also we use the cut-off functions $\chi^{(k)}_\Gve %k\big((\Bx - \BO^{(k)})/\Gve \big)
%$ defined in part (b) of the proof of Theorem 
%\ref{thm1_alg_f_inf}.
%%\ref{R3thmasy}. 
Instead of the functions $\Psi_k$ defined in \eq{R3energyest3}, we introduce
\begin{eqnarray}\label{R3energyest3_bdd}\nonumber
\Psi^{(\GO)}_k(\X)&=&v%_f
(\X)-v
%_f
(\BO^{(k)})-(\X-\Oj^{(k)})\cdot \nabla v%_f
(\Oj^{(k)})+ %\varepsilon 
\sum_{\substack{j \ne k\\ 1 \le j \le N}} \boldsymbol{C}^{(j)} \cdot \BCD^{(j)}(  \Bx %\XI_j
)\\
&&-%\varepsilon^3 
\sum_{\substack{j \ne k \\ 1 \le j \le N}} (\X-\Oj^{(j)})\cdot {\frak T}(\Oj^{(k)}, \Oj^{(j)}) \BCQ^{(j)}\boldsymbol{C}^{(j)}-\sum_{j=1}^N \boldsymbol{C}^{(j)} \cdot \BCQ^{(j)} \nabla_\By H(\Bx, \By)\Big|_{\By=\BO^{(j)}}\;, 
\end{eqnarray}
where the matrix ${\frak T}$ is defined in \eq{defFT} via second-order derivatives of Green's function in $\GO$. Owing to \eq{introeq1a} and the algebraic system \eq{R3formalgeq9a} we have
\beq
\fr{\prt}{\prt n} \left( \Psi_k^{(\GO)}(\Bx) + R_N(\Bx)  \right) = 0, ~~ \Bx \in \prt F %\Go_\Gve
^{(k)}.
\eequ{bcond_psi}

We also
 use the  function
\begin{equation}\label{energyestbound1}
\Psi_0(\X)=  %\varepsilon 
\sum^N_{j=1} \boldsymbol{C}^{(j)}\cdot\big\{\cD^{(j)}(\Bx ) %\XI_j)
 - %\varepsilon^2 
 \BCQ^{(j)}\frac{(\X-\Oj^{(j)})}{4 \pi |\X-\Oj^{(j)}|^3}\big\}\;,
\end{equation}
which is harmonic in $\Omega_N$.  It follows from \eq{introeq1a} that 
%satisfies condition
\begin{equation}
R_N(\X)+\Psi_0(\X)=-\sum_{1 \leq j \leq N} \BC^{(j)} \cdot  \BCQ^{(j)} \Big\{ \frac{(\X-\Oj^{(j)})}{4 \pi |\X-\Oj^{(j)}|^3} -  \nabla_\Y H(\X, \BY)\Big|_{\BY=\BO^{(j)}} \Big\}  =0 \;, \quad \X \in \partial  \Omega\;. \label{psi0_bcond}
\end{equation}
% and we have an additional cut-off function defined by
%\[\Delta P_0(\X)=0\;, \quad \X \in \Omega_d \cap\Omega\;,\]
%\[ P_0(\X)=1\;,\quad \X \in \partial \Omega\;,\]
%\[ P_0(\X)=0\;, \quad \X \in \partial (\Omega_d \cap \Omega) \backslash \partial \Omega\;,\]
%and this function is extended by zero in $\Omega \backslash \Omega_d $ and the derivatives of $P_0$ satisfy the inequality
%\begin{equation}\label{boundenesteq2a}
%|\nabla P_0(\X)| =O(d^{-1})\;, \quad \text{ for } \X \in \Omega_d \cap \Omega\;.
%\end{equation}

\emph{b) The energy estimate for $R_N$. % versus $P_k$, $\Psi_k$, $k=0, \dots, N$.
} 
%In order to 
%%cope with the additional 
%take into account the 
%contribution brought
%by the exterior boundary, we employ  the function
We start with the identity
%\[R_N(\X)+%P_0(\X) 
%\chi_0(%d^{-1}
%\Bx)\Psi_0(\X). \]
%Using \eq{bcond_psi}, \eq{psi0_bcond} and integrating %this 
%by parts 
%%with $R_N (\Bx) + \sum_{1 \leq k \leq N} \chi_k(d^{-1} (\Bx-\BO^{(k)}))\Psi^{(\GO)}_k(\Bx)$, 
%we obtain the identity 
 \begin{eqnarray}
  && \int_{\GO_N}\nabla \Big({R}_N
  %(\X) 
  + \chi_0
  %(\Bx) %d^{-1} 
  \Psi_0
  %(\Bx) 
  \Big) \cdot \nabla \Big( {R}_N
  %(\X)  
   + \sum_{1 \leq k \leq N} \chi^{(k)}_\Gve %k\Big( %d^{-1} 
  %(\Bx - \BO^{(k)})/\Gve\Big) 
  \Psi^{(\GO)}_k
  %(\Bx) 
  \Big)\, d\X \nonumber \\
   &&
   = - \int_{\GO_N}\Big({R}_N %(\X) 
   + \chi_0 %( \Bx)%d^{-1} 
   \Psi_0 %(\Bx) 
    \Big)  \GD \Big( {R}_N %(\X)  
     + \sum_{1 \leq k \leq N} \chi^{(k)}_\Gve %k\Big( %d^{-1} 
   %(\Bx - \BO^{(k)}) / \Gve\Big) 
   \Psi^{(\GO)}_k %(\Bx) 
   \Big)\, d\X, % \nonumber  \\
   %&&~~~ + \int_{\prt B_R} \CR_N(\Bx) \fr{\prt}{\prt n} \CR_N(\Bx) d s_\X.
    \label{rem_1_bdd}
  \end{eqnarray}
  which follows from \eq{bcond_psi}, \eq{psi0_bcond}  by Green's formula. 
  According to the definitions of 
  %the cut-off functions 
  $\chi_0$ and $\chi^{(k)}_\Gve$, we have $\mbox{supp } \chi_0 %(\Bx) 
   \cap \mbox{supp } \chi^{(k)}_\Gve %k((\Bx-\BO^{(k)})/\Gve) 
   = \emptyset$ 
   for all $k=1,\ldots,N$.
  Hence the integrals in \eq{rem_1_bdd} involving the products of $\chi_0$ and $\chi^{(k)}_\Gve$ or their derivatives are equal to zero. Thus, using that $\GD R_N = 0 $ on $\GO_N$, we reduce % the equation 
  \eq{rem_1_bdd} 
% can be reduced  to
 to the equality 
  \begin{eqnarray}
 && \int_{\GO_N} |\Grad R_N %(\Bx)
 |^2 d \Bx + \sum_{1 \leq k \leq N} \int_{B_{3 \Gve} \setminus \ov{F %\Go}_\Gve
 }^{(k)}} \Grad R_N %(\Bx) 
 \cdot \Grad \Big(  \chi^{(k)}_\Gve %k(%\fr
 %({\Bx - \BO^{(k)}})/{\Gve}) 
 \Psi^{(\GO)}_k
 %(\Bx) 
 \Big) d\Bx +  \int_{\GO_N \cap \CV%_{h %d/2
  %}
  } \Grad R_N
  %(\Bx) 
  \cdot \Grad \Big(  \chi_0
  %(\Bx %\fr{\Bx - \BO^{(k)}}{d}
  %) 
  \Psi%^{(\GO)}
  _0
  %(\Bx) 
  \Big) d\Bx\label{rem_2_bdd}\\
  &&  = -  \sum_{1 \leq k \leq N} \int_{\GO_N} R_N
  %(\Bx) 
  \GD \left( \chi^{(k)}_\Gve %k(%\fr
%  ({\Bx - \BO^{(k)}})/{\Gve}) 
  \Psi^{(\GO)}_k
  %(\Bx) 
   \right) d\Bx, \nonumber
  \end{eqnarray}
 % We note that \eq{bcond_psi} and \eq{psi0_bcond} were used in the derivation of  \eq{rem_1_bdd} and \eq{rem_2_bdd} .
  which differs  in the left-hand side from \eq{rem_1BB} only by the integral over $\GO_N \cap \CV$.
  
  Similarly to the part (b) of the proof of Theorem \ref{thm1_alg_f_inf}  %\ref{R3thmasy} 
  we deduce
  \begin{eqnarray}
&& \| \Grad \CR_N \|^2_{L^2(\GO_N)} \leq \mbox{const}\Big\{  \|   \Grad \Psi_0 \|^2_{L^2(\GO \cap \CV%_{h%d/2
%}
)}  + %d^{-2} 
\|   \Psi_0 \|^2_{L^2(\GO \cap \CV%_{h%d/2
%}
)} 
%\nonumber \\
%&& 
+ \sum_{1 \leq k \leq N} %\Big( 
\|   \Grad \Psi_k \|^2_{L^2(B_{3 \Gve}^{(k)} % \setminus \ov{\Go}_\Gve^{(k)}
)} 
% + d^{-2} \|   \Psi_k \|^2_{L_2(B_{d/4}^{(k)} \setminus \ov{\Go}_\Gve^{(k)})}   
% \Big) 
\Big\}. \label{rem_3_bdd}
\end{eqnarray}

%Repeating %the 
Similar to the %all 
steps of part (d) of the proof %of 
in Theorem \ref{thm1_alg_f_inf},  %{R3thmasy}, 
%we %consider the second sum on the right-hand side of  \eq{rem_3_bdd} and 
%show that %it is 
%estimated 
the last sum is majorized by 
%the right-hand side in  \eq{introeq2a}. 
\beq
\mbox{const } (\Gve^{11} d^{-11} + \Gve^5 d^{-3} ) \| \Grad v%_f 
\|_{L^2(\GO)}^2.
\eequ{bound_new}
It remains to estimate 
%$\|   \Grad \Psi_0 \|^2_{L_2(\GO \cap \CV_{h %d/2
%})}  + %d^{-2} 
%\|   \Psi_0 \|^2_{L_2(\GO \cap \CV_{h %d/2
%})} $. 
two terms in \eq{rem_3_bdd} containing $\Psi_0$. Using \eq{R3energyest14}, %and \eq{R3energyest14a}, 
together with %the
%estimate 
\eq{alg12a}  we deduce 
\begin{eqnarray}
&&%d^{-2} 
\|\Psi_0
%(\Bx)
\|^2_{L^2 (\GO \cap \CV%_{h %d/2
%}
)} \leq \mbox{const} ~ %d^{-2} 
\Gve^8 \sum_{1 \leq j \leq N} \int_{\GO \cap \CV%_{h %d/2
%}
} \fr{|C^{(j)}|^2 d \Bx}{|\Bx - \BO^{(j)}|^6} \nonumber \\
&& \leq \mbox{const  } %\fr{
\Gve^8%}{d^3 %5
%} 
\sum_{1 \leq j \leq N} |C^{(j)}|^2 \leq \mbox{const} \fr{\Gve^8}{d^3 %6 %8
} \|\Grad v%_f
\|^2_{L^2(\GO)},  \label{rem_4_bdd_a}
\end{eqnarray}
%Similarly, for 
%For the remaining term %in \eq{rem_3_bdd} 
and
%$\|   \Grad \Psi_0 \|^2_{L_2(\GO \cap \CV_{h %d/2
%})} $
%we have
\begin{eqnarray}
&& \| \Grad \Psi_0
%(\Bx)
\|^2_{L^2 (\GO \cap \CV%_{h %d/2
%}
)} \leq \mbox{const} ~  \Gve^8 \sum_{1 \leq j \leq N} \int_{\GO \cap \CV%_{h %d/2
%}
} \fr{|C^{(j)}|^2 d \Bx}{|\Bx - \BO^{(j)}|^8} \nonumber \\
&& \leq \mbox{const  } %\fr{
\Gve^8  %}{d^5} 
\sum_{1 \leq j \leq N} |C^{(j)}|^2 \leq \mbox{const} \fr{\Gve^8}{d^3} % 8} 
\|\Grad v%_f
\|^2_{L^2(\GO)}. \label{rem_4_bdd}
\end{eqnarray}
%which completes 
Combining \eq{rem_3_bdd}--\eq{rem_4_bdd} we complete the proof. $\Box$
%\end{proof}

\section{Illustrative example} \label{example}
  
Now, the asymptotic approximation derived in the previous section is applied to the case of a relatively simple geometry, where all the terms  in the formula \eq{introeq1a} can be written explicitly.

\subsection{The case of a domain with a cloud of spherical voids}
\label{cloud_example}

Let $\GO_N$ be  a ball of a finite radius $R$, with the centre at the origin, containing $N$ spherical voids $F^{(j)}$ of radii $\rho_j$ with the centres at $\BO^{(j)}, j=1,\ldots, N,$ as shown in Fig. \ref{diagramcloud}.
\begin{figure}[htbp]
%\psfrag{1}{\!\!\!\!\!\!$\scriptsize{R}$}
%\psfrag{2}{\!\!\!$\scriptsize{x_3}$}
%\psfrag{3}{\!\!\!$\scriptsize{x_1}$}
%\psfrag{4}{$\scriptsize{\omega}$}
%\psfrag{5}{$\scriptsize{x_2}$}
%\psfrag{6}{$\scriptsize{0}$}
%\psfrag{7}{$\scriptsize{\gamma}$}
%\psfrag{8}{$\scriptsize{\Omega}$}
\centering
\includegraphics[width=0.495\textwidth]{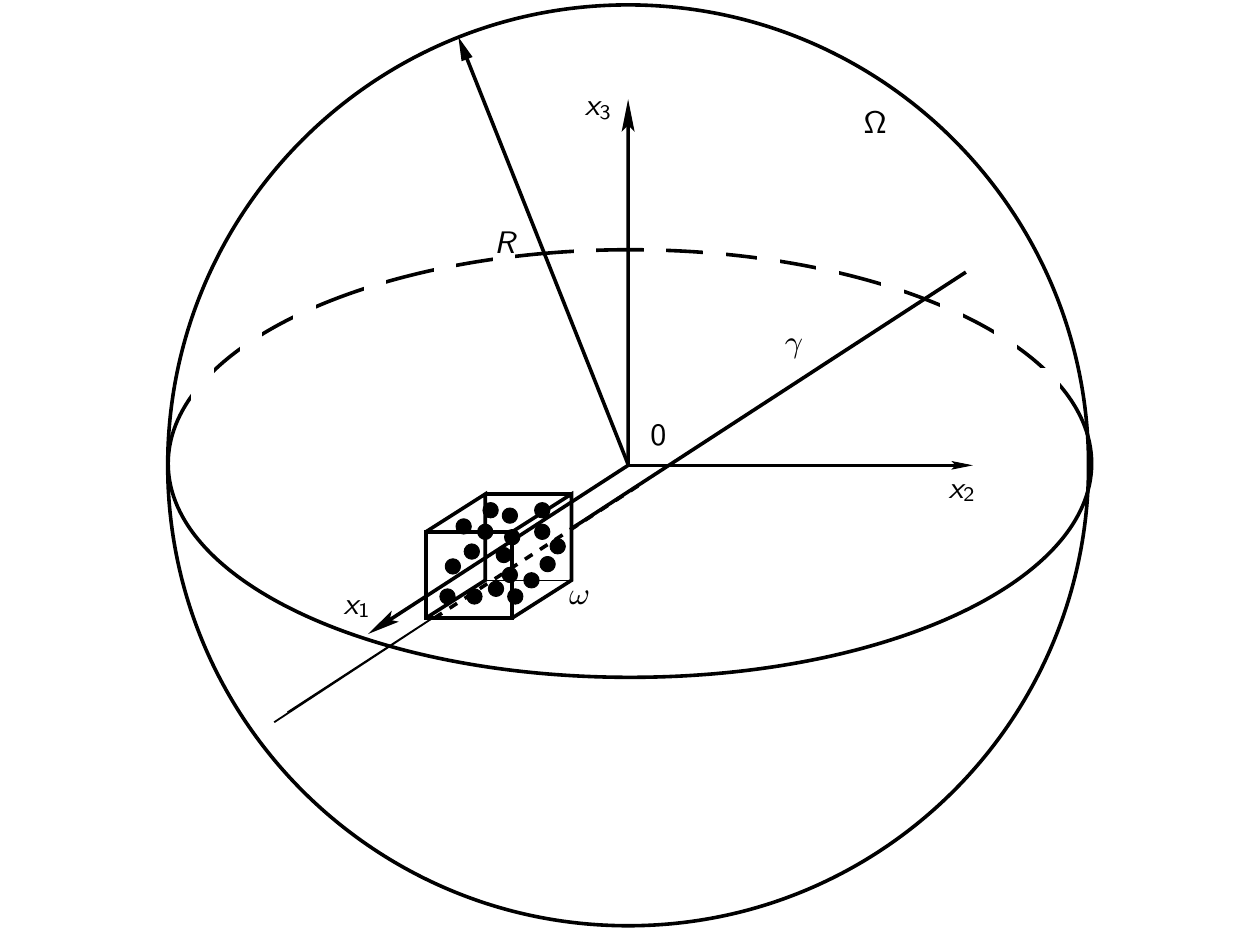}
\caption{Example configuration of a sphere containing a cloud of spherical voids in a the cube $\omega$.}
\label{diagramcloud}
\end{figure}
The radii of the voids are assumed to be smaller than the %separation 
distance between nearest neighbours. %, and the number $N$ of voids is sufficiently large. 
%The function $f(\Bx)$ is chosen in such a way that 
We put $\phi\equiv 0$ and
\beq
f(\Bx) = \left\{ \begin{array}{cc}
6 %(|\Bx|^2-\rho^2) (7 |\Bx|^2 -3 \rho^2) 
& \mbox{ when }   |\Bx|< \rho, \\
0 & \mbox{ when }    \rho < |\Bx| < R.
\end{array} \right.
\eequ{ex0}
Here, it is assumed that $\rho+ b  < |\BO^{(j)}| < R - b, ~ 1 \leq j \leq N, $ where $\rho$ and $b$ are positive constants independent of $\Gve$ and $d$.

The function $u_N$ is 
%defined as 
the solution of the mixed boundary value problem for the Poisson equation:
\begin{eqnarray}
&&\GD u_N(\Bx) + f(\Bx) = 0, ~~\mbox{when  } \Bx \in \GO_N, \label{ex1} \\
&& u_N(\Bx) = 0,   ~~\mbox{when  } |\Bx| = R, \label{ex2} \\
&& \fr{\prt u_N}{\prt n}(\Bx) = 0, ~~\mbox{when  } |\Bx - \BO^{(j)}| = \rho_j, ~j = 1,\ldots,N. \label{ex3} 
\end{eqnarray}

In this case, $u_N$ is approximated by \eq{introeq1a}, where the solution of the %limit 
Dirichlet problem in $\GO$ is given by
\begin{equation}
v%_f 
(\Bx) = \left\{ \begin{array}{cc} %(\rho^2- |\Bx|^2)^3 
\rho^2 (3 -2 \rho R^{-1})  - |\Bx|^2 & \mbox{ when }   |\Bx|< \rho, \\
2 \rho^3 (|\Bx|^{-1}- R^{-1}) & \mbox{ when }    \rho < |\Bx| < R.
\end{array} \right.
\eequ{ex4}
In turn, the dipole fields $\BCD^{(j)}$ and the dipole matrices $\BCQ^{(j)}$ have the form
\begin{eqnarray}
\BCD^{(j)}(\Bx) = - \rho_j^3 \fr{\Bx - \BO^{(j)}}{|\Bx - \BO^{(j)}|^3}, ~~\BCQ^{(j)} = - 2 \pi \rho_j^3 I_3, \label{ex5}
\end{eqnarray}
where $I_3$ is the $3\times 3$ identity matrix.

The regular part $H(\Bx, \By)$ of Green's function in the % limit 
domain $\GO$ (see \eq{H_def}) is 
\beq
H(\Bx, \By) = \fr{R}{4 \pi |\By| |\Bx - \hat{\By}|}, ~~ \hat{\By} = \fr{R^2}{|\By|^2} \By. 
\eequ{ex6}
The coefficients $\BC^{(j)}, ~j = 1, \ldots, N,$ in %the asymptotic formula 
\eq{introeq1a} are defined from the algebraic system
\eq{R3formalgeq9a}, where Green's function $G(\Bx, \By)$ is given by
\beq
G(\Bx, \By) = \fr{1}{4 \pi |\Bx - \By|} - \fr{R}{4 \pi |\By| |\Bx - \hat{\By}|}.
\eequ{ex7}

\subsection{Finite elements simulation versus the asymptotic approximation}
%\subsection*{Example 1: Mesoscale cloud of spherical voids}
The explicit representations of the fields $v, {\BCD}^{(j)}, H, G$, given above, are used in the asymptotic formula
\eq{introeq1a}. Here, we present a comparison between the results of an independent Finite Element computation, produced in COMSOL, and the mesoscale asymptotic approximation \eq{introeq1a}.

%In our first example
For the computational example, we set $R=120$, and consider a cloud of $N=18$ spherical voids arranged into a cloud of a
parallelipiped shape. The  position of the centre and radius of each void is included in Table \ref{Table1a}. The support of the function $f$ (see (\ref{ex0})), is chosen to be inside the sphere with radius $\rho=30$ and  centre at the origin,
as stated in \eq{ex0}.  
\begin{table}[htbp]\centering
\label{tabincdata}
\begin{tabular}{|c|c|c||c|c|c|}
 \hline\hline Void & Centre & $\rho_j /R$ &  Void & Centre & $\rho_j/R$\\
\hline\hline $F^{(1)}$   &(-50, 0, 0)&  0.0417 &  $F^{(10)}$   &(-72, 0, 0)& 0.0417 \\
\hline $F^{(2)}$   &(-50, 0, 22)&  0.0333 &  $F^{(11)}$   &(-72, 0, 22) & 0.0458\\
\hline $F^{(3)}$   &(-50, 22, 0)&  0.0292&  $F^{(12)}$   &(-72, 22, 0) & 0.0292\\
\hline $F^{(4)}$   &(-50, 0, -22)& 0.0375  &  $F^{(13)}$   &(-72, 0, -22) &  0.0375    \\
\hline $F^{(5)}$   &(-50, -22, 0)& 0.0458   &  $F^{(14)}$   &(-72, -22, 0) & 0.0417 \\
\hline $F^{(6)}$   &(-50, 22, 22)& 0.0292  &  $F^{(15)}$   &(-72, 22, 22) & 0.0333\\
\hline $F^{(7)}$   &(-50, 22, -22)& 0.025  &  $F^{(16)}$   &(-72, 22, -22)& 0.05  \\
\hline $F^{(8)}$   &(-50, -22, 22)&  0.0375  &  $F^{(17)}$   &(-72, -22, 22)& 0.0333 \\
\hline $F^{(9)}$   &(-50, -22, -22)& 0.0375   &  $F^{(18)}$   &(-72, -22, -22)& 0.0375\\
\hline
\end{tabular}
\caption{Data for the voids $F^{(j)}$, $j=1, \dots, 18$.}
\label{Table1a}
\end{table}

Figure \ref{fig1} shows the asymptotic solution $u_N$ of the mixed boundary value problem (part (b) of the figure) and its numerical counterpart obtained in COMSOL 3.5 (part (a) of the figure). This computation  has been produced for a spherical body containing $18$ small voids defined in Table \ref{Table1a}. 
%In part (a), the colour map shows the function $u_N$ evaluated numerically in COMSOL 3.5, whereas part (b) presents the asymptotic approximation obtained according to the formula  \eq{introeq1a}. 
The relative error for the chosen configuration
does not exceed $2 \% ,$ which confirms a very good agreement between the asymptotic and numerical results, which are
visually indistinguishable in Fig. \ref{fig1}a and Fig. \ref{fig1}b.

%The comparison between the analytical numerical approximation and the results of independent numerical computations in COMSOL shows a very good agreement, and the results are illustrated in Fig. \ref{fig1}. 

\begin{figure}[h]
\begin{center}
\includegraphics[scale=0.5]{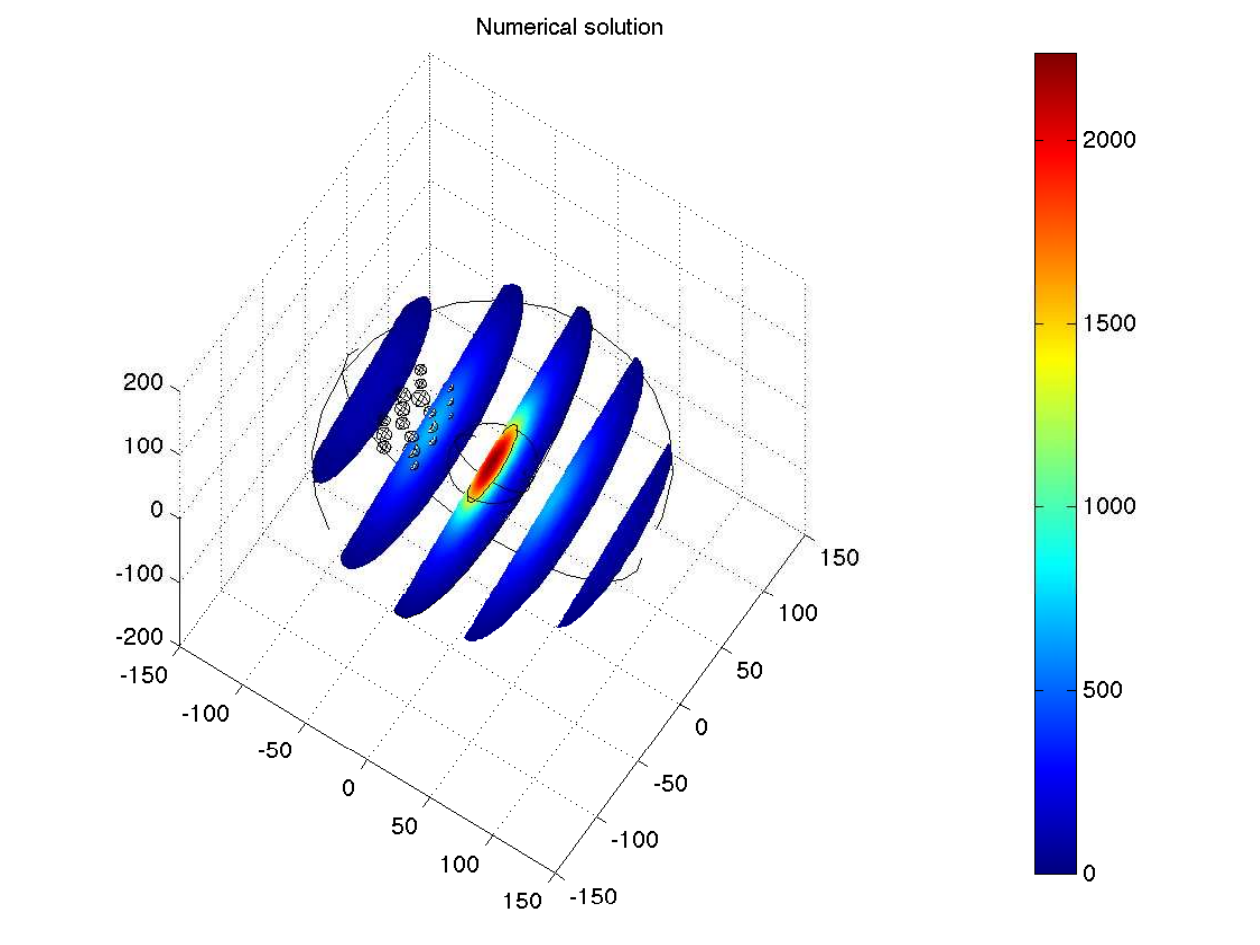} ~~~~\includegraphics[scale=0.5]{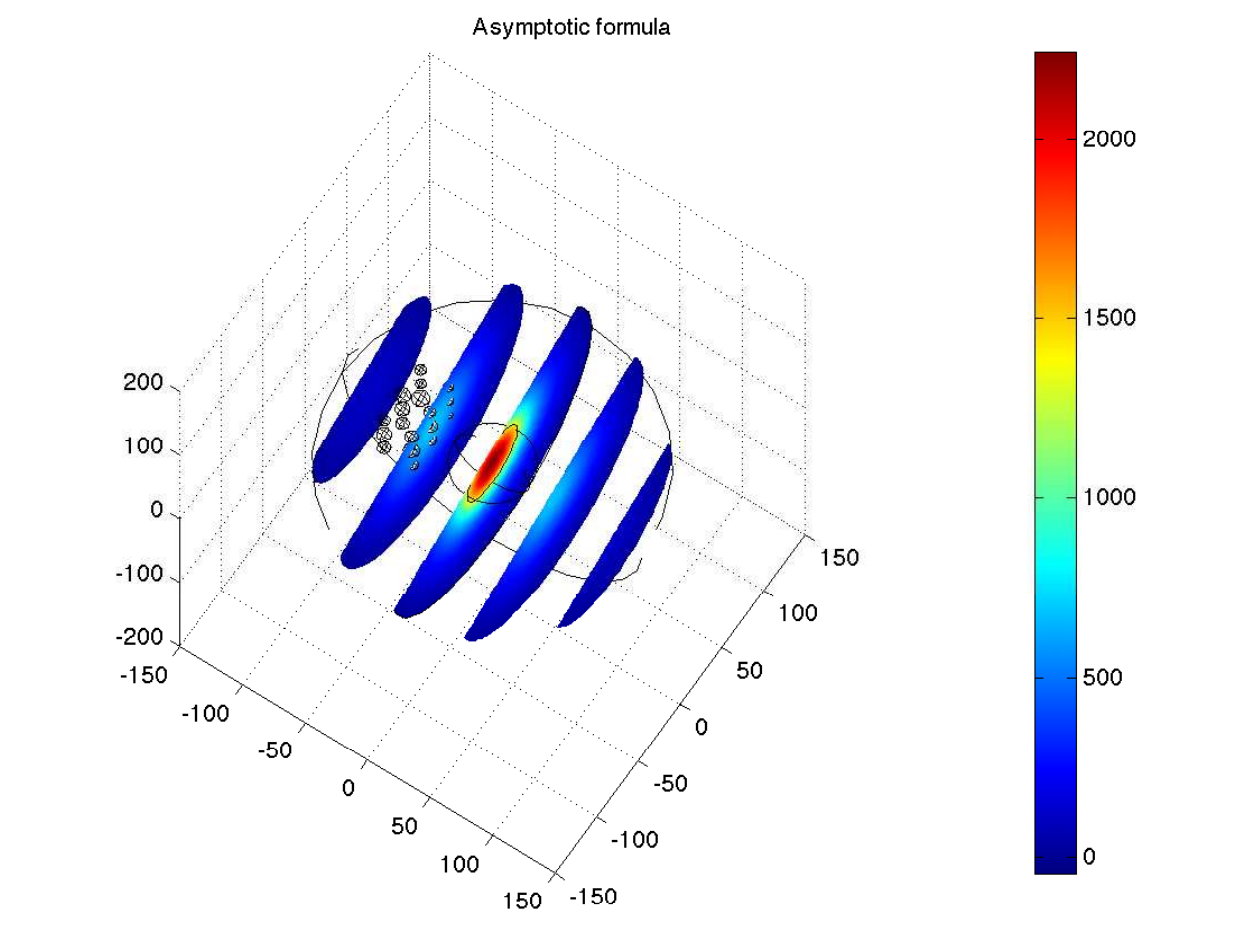}
\end{center}
\centerline{(a)   ~~~~~~~~~~~~~~~~~~~~~~~~~~~~~~~~~~~~~~~~~~~~~~~~~~~~~~~~~~~~  (b)}
\caption{Perforated domain containing %many 
$18$ holes: (a) Numerical solutions produced in COMSOL; (b) Asymptotic approximation.} \label{fig1}
\end{figure}

The computation was performed on Apple Mac, with 4Gb of RAM, and the number $N=18$ was chosen because any further increase in the number of voids resulted in a large three-dimensional computation, which exceeded the amount of available memory.
Although, increase in RAM can allow for a larger computation, it is evident that  three-dimensional finite element computations for a mesoscale geometry have serious limitations. On the other hand, the analytical asymptotic formula can still be used on the same computer for sigtnificantly larger number of voids.

In the next subsection, we show such an example where the number of voids within the mesoscale cloud runs upto $N=1000$, which would simply be unachievable in a finite element computation in COMSOL 3.5 with the same amount of RAM available.

\subsection{Non-uniform cloud containing a large number of spherical voids}

Here we consider the same mixed boundary value problem as in Section \ref{cloud_example}, but the cloud of voids is chosen in such a way that the number $N$ may be large and voids of different radii are distributed in a non-uniform arrangement.
For different values of $N$, the overall volume of voids is preserved - examples of the clouds used here are shown in Fig.
\ref{diagramcloud}.

The results  are based on the numerical implementation of formula (\ref{introeq1a}) in MATLAB.

%Here we will implement (\ref{introeq1a}) in a numerical scheme developed in MATLAB. We consider a different geometry for $\Omega_N$ used in the first example, which we now describe.
%
%\subsubsection*{The cloud $\omega$} Let the set 
The cloud $\omega$ %be 
is assumed to be the cube with side length $\frac{1}{\sqrt{3}}$ and the centre at $(3, 0, 0)$. 
Positioning of voids is described as follows. Assume we have $N=m^3$ voids, where $m=2,3, \dots$.
 % we divide 
 Then $\omega$ is divided into   %$N=m^3$, $m=1,2, \dots$, 
 $N$ smaller cubes of side length $h=\frac{1}{\sqrt{3}m}$, and the centres of voids are placed at% Next, we position voids with centres 
\[\BO^{(p,{q,r})}=\Big(3-\frac{1}{2\sqrt{3}}+\frac{2p-1}{2}h,-\frac{1}{2\sqrt{3}}+\frac{2q-1}{2}h, -\frac{1}{2\sqrt{3}}+\frac{2r-1}{2}h\Big) \]
for $p, q, r=1, \dots, m$, and we assign their radii $\rho_{p,{q,r}}$ by
\[\rho_{p,{q,r}}=\left\{\begin{array}{ll}
\displaystyle{\frac{h}{5}} &\quad \text{if }{p>q}\;,\\
\displaystyle{\frac{\alpha h}{2}}& \quad \text{if } p<q\;,\\
\displaystyle{\frac{h}{4} }& \quad \text{if } p=q\;,
\end{array}\right.\]
where $\alpha <1$, % is prescribed to be less than $1$, and is discussed in detail below.
and it is chosen in such a way that the overall volume of all voids within the cloud remains constant for different $N$.
An elementary calculation suggests that there will be $m^2$ voids with radius $\frac{h}{4}$ and equal number $\frac{m^2(m-1)}{2}$ of voids with radius $\frac{h}{5}$ or $\frac{\Ga h}{2}$.

%In our experiment, we will aim to increase the number of voids $N$ without changing the volume of which the total number of small spherical voids occupy in the cube $\omega$. Suppose the fraction of the volume occupied by the voids in $\omega$ is $\beta$. Then, 
Assuming that the volume fraction of all voids within the cube is equal to $\beta$, we have
%we have the total volume the voids occupy is given by the formula
\[\frac{4\pi h^3}{3} \Big(\frac{m^2(m-1)(8+125 \alpha^3)}{2000}+\frac{m^2}{64}\Big)=\beta \frac{1}{3\sqrt{3}}\;,\]
%Recalling that $h=\frac{1}{\sqrt{3}m}$, after rearrangement we obtain 
and hence
\begin{equation}\label{alphaeq}
\alpha^3=\frac{16m}{m-1}\Big\{\frac{3}{4\pi}\beta -\frac{125+32(m-1)}{8000 m}\Big\}\;.
\end{equation}
%Allowing $m$ to tend to infinity, one obtains 
%that $\beta$ should satisfy 
In particular, if $N \to \infty$, the limit value $\Ga_\infty$ becomes
\begin{equation}\label{alphainf}
\alpha_\infty=\Big\{\frac{12}{\pi}\beta -\frac{8}{125}\Big\}^{1/3}\;.
\end{equation}
%Thus,  in order that $ \alpha <1$, we should choose $\beta$ so that
%\[\beta< \left\{\frac{133\pi}{1500}\right\}^{1/3}\;.\]
In the numerical computation of this section, $\beta = \pi/25.$
%\begin{figure}[htbp]
%\centering
%\includegraphics[width=0.495\textwidth]{Nvalpha.eps}
%\caption{The graph of $\alpha$ versus $N$ given by formula when $\beta= \pi/25$, for large $N$ we see that $\alpha$ tends to 0.7465 which is predicted value present in (\ref{alphainf}). }
%\label{fig3}
%\end{figure}

%\subsubsection*{Numerical scheme} We choose $\beta=4\pi/100$ , 
Taking $R=7$ and $\rho=2$, we %and consider plotting 
compute the leading order approximation of $u_N-v$, as defined  in the  asymptotic formula (\ref{introeq1a}), along the line $\gamma$ at the intersection of the planes $x_2=-1/(2\sqrt{3})$ and $x_3=-1/(2\sqrt{3})$, for $N=8, 125, 1000$. Fig. \ref{fig4} below shows the configuration of the cloud of voids for a) $N=8$ and b) $N=125$. For a large number of voids $(N=1000)$,  Fig. \ref{fig4a}a) shows the cloud and Fig \ref{fig4a}b) includes the graph of $\alpha$ versus $N$. The plot of $u_N-v$ given by (\ref{introeq1a}) for $2\le x_1\le 4$ is shown in Fig. \ref{fig5}. The asymptotic correction has been computed along the straight line $\gamma=\{ x_1 \in \mathbb{R}, x_2=-1/(2\sqrt{3}),  x_3=-1/(2\sqrt{3})\}$. Dipole type fluctuations are clearly visible on the diagram. Beyond $N=1000$ the graphs are visually indistinguishable and hence the values $N=8, 125, 1000$, as in Figures \ref{fig4} and \ref{fig4a} have been chosen in the computations. The algorithm is fast and does not impose periodicity constraints on the array of small voids.
   \begin{figure}[htbp]
\begin{minipage}[b]{0.495\linewidth}
\centering
\includegraphics[width=\textwidth]{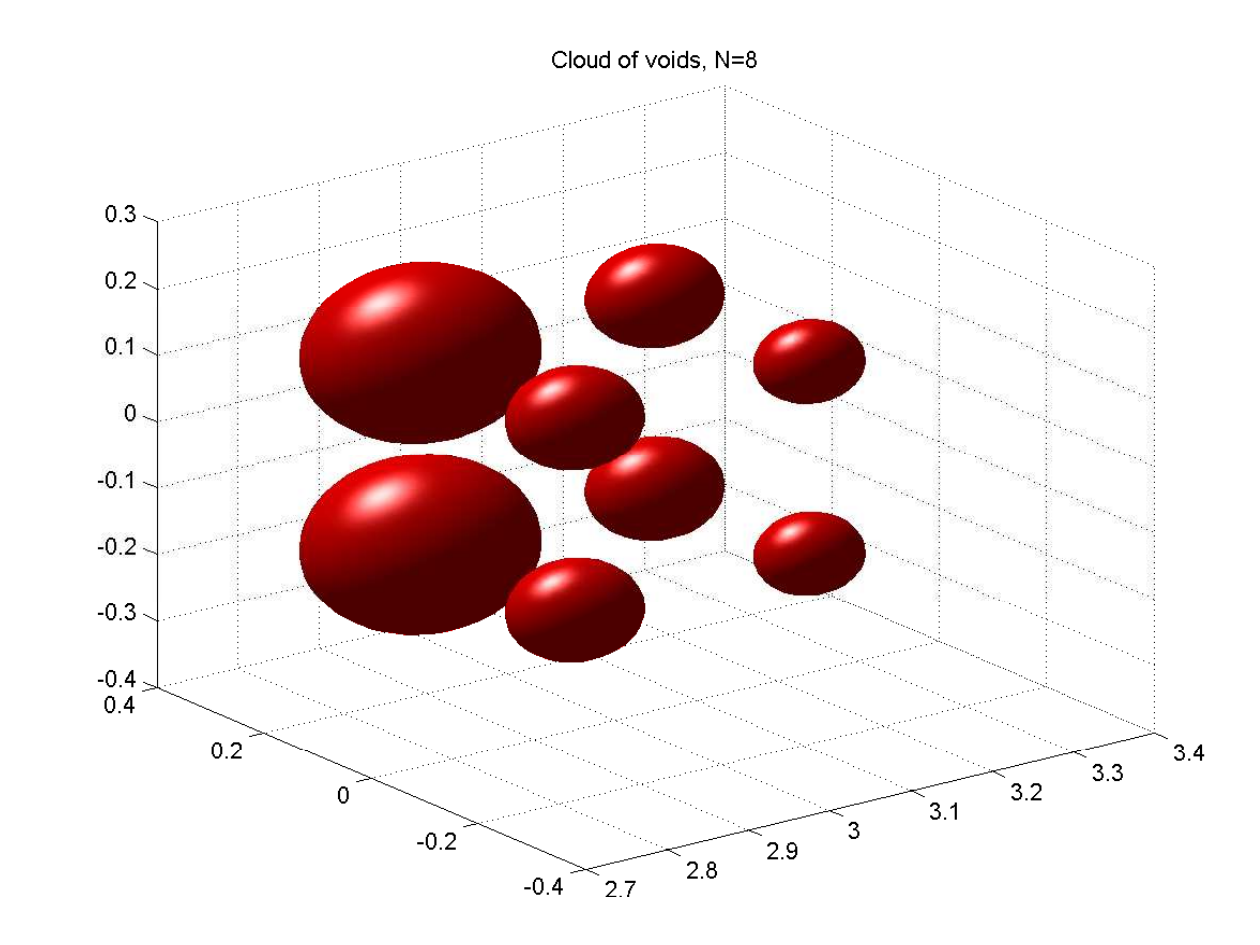}

a)
\end{minipage}
\begin{minipage}[b]{0.495\linewidth}
\centering
\includegraphics[width=\textwidth]{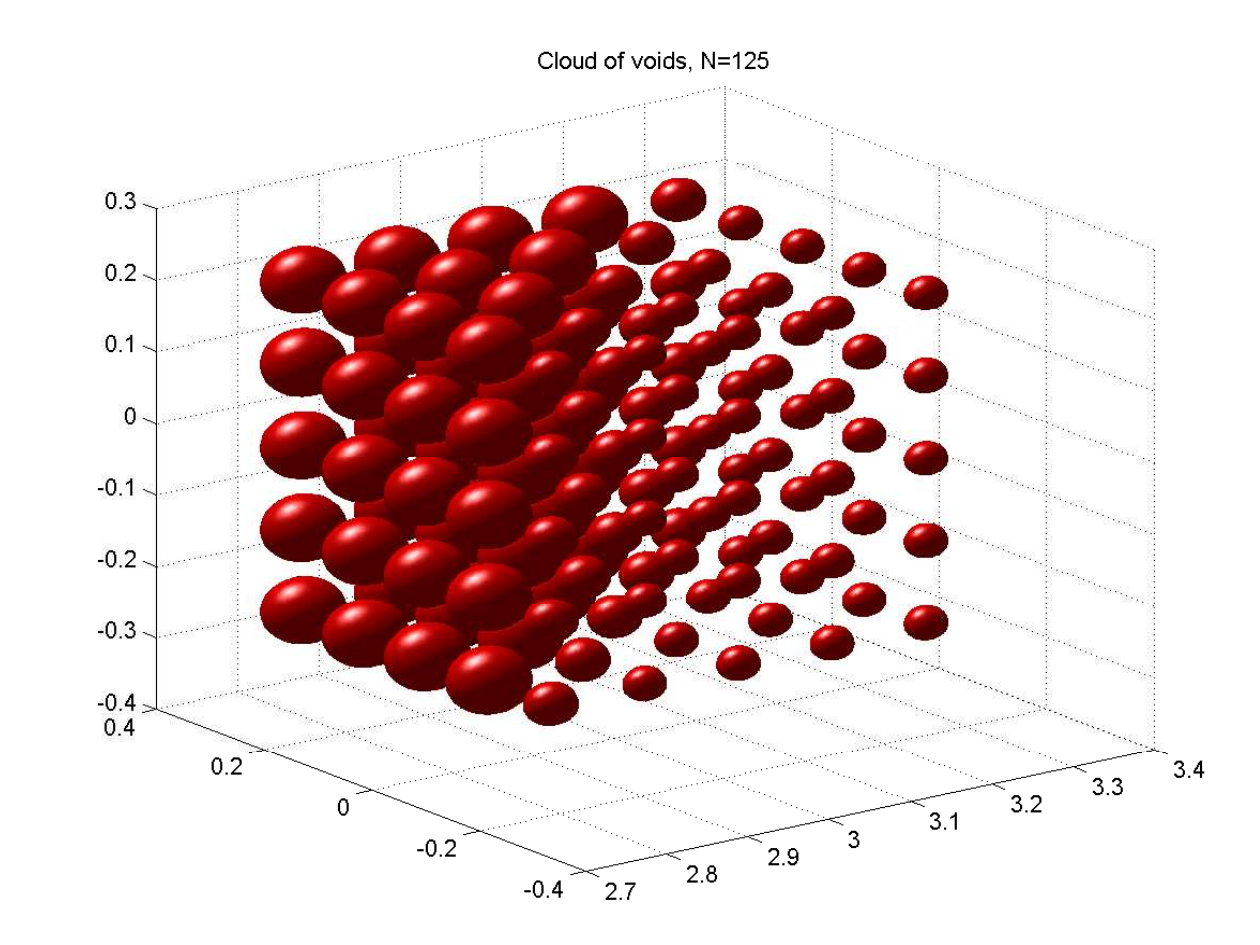}

b)
\end{minipage}
\caption{The cloud of voids for the cases when a) $N=8$ and b) $N=125$.}
\label{fig4}
\end{figure}

 \begin{figure}[htbp]
\begin{minipage}[b]{0.495\linewidth}
\centering
\includegraphics[width=\textwidth]{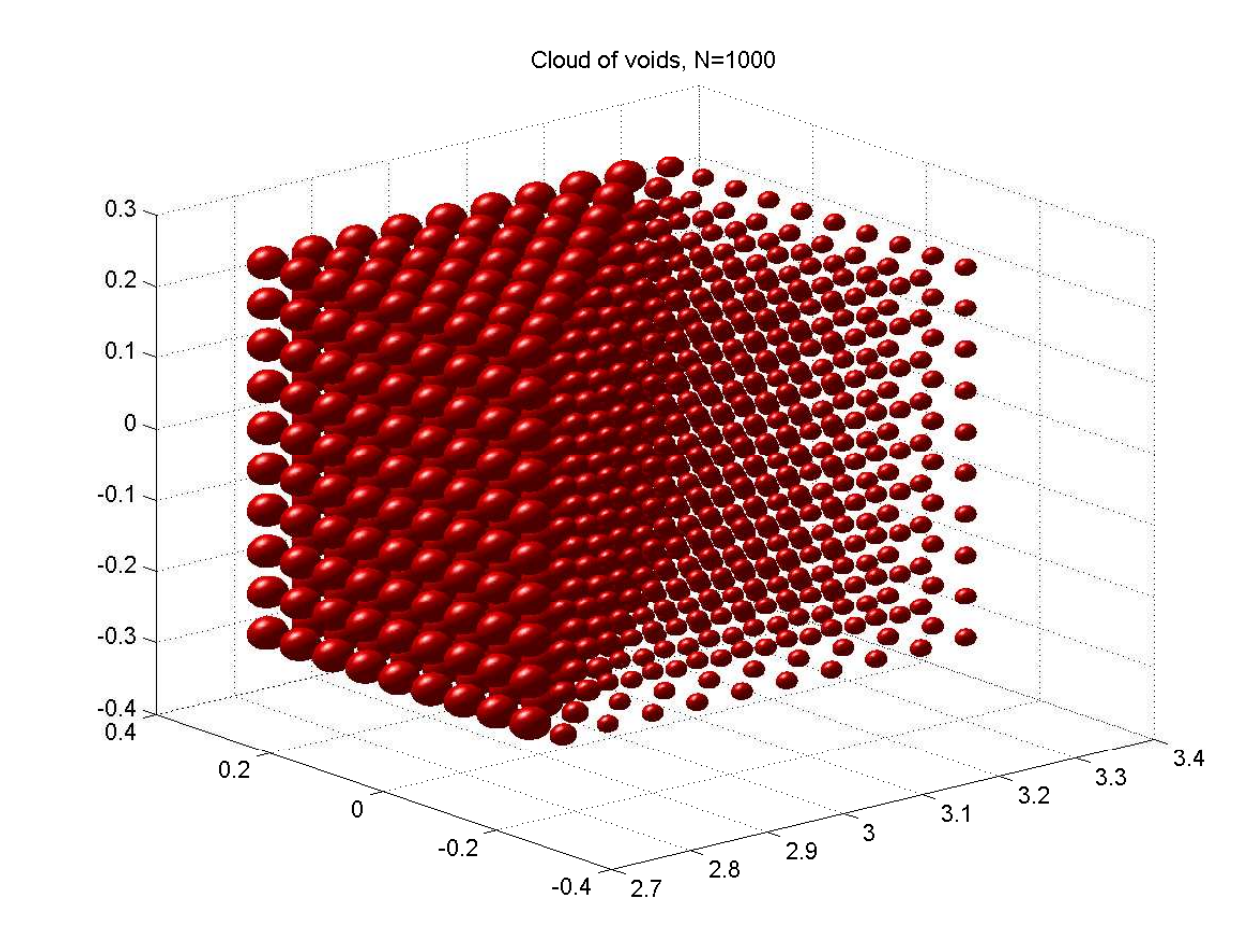}

a)
\end{minipage}
\begin{minipage}[b]{0.495\linewidth}
\centering
\includegraphics[width=\textwidth]{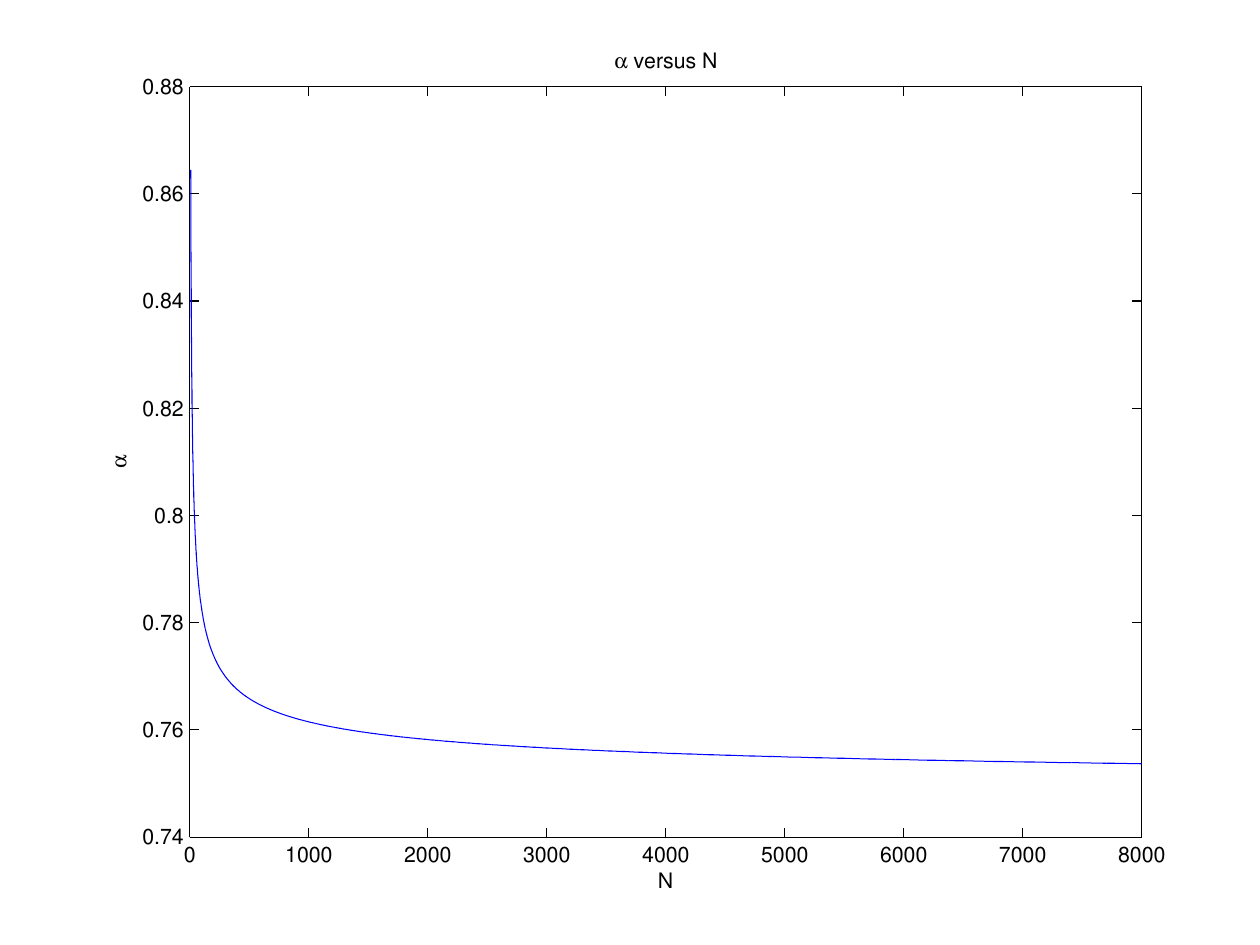}

b)
\end{minipage}
\caption{a) The cloud of voids for the cases when $N=1000$, b) The graph of $\alpha$ versus $N$ given by formula (\ref{alphaeq}) when $\beta= \pi/25$, for large $N$ we see that $\alpha$ tends to 0.7465 which is predicted value present in (\ref{alphainf}).}
\label{fig4a}
\end{figure}

 \begin{figure}[htbp]
\centering
\includegraphics[width=0.7\textwidth]{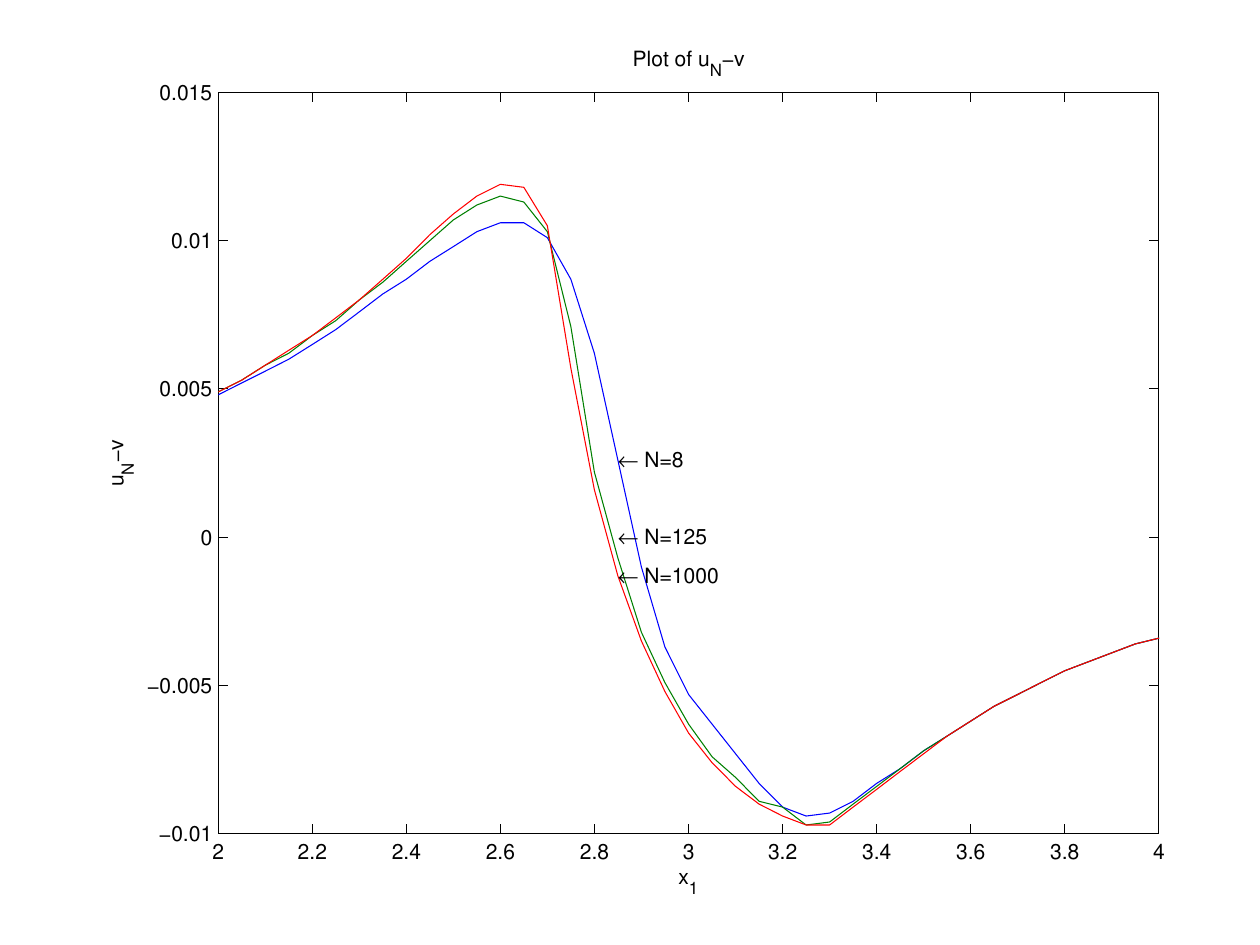}
\caption{The graph of $u_N-v$ given by (\ref{introeq1a}), for $2 \le x_1\le 4$ plotted along the straight line $\gamma$ adjacent to the cloud of small voids.}
\label{fig5}
\end{figure}

\vspace{0.1in}\emph{Acknowledgments. }We would like to acknowledge the financial support of the U.K. Engineering and Physical Sciences Research Council through the research grant EP/F005563/1.
\newpage

\end{document}